\definecolor{bientotlafin}{RGB}{142, 162, 198}
\definecolor{green}{RGB}{31,160,85}
\definecolor{vert}{RGB}{0,255,127}
\theoremstyle{plain}
\newtheorem{theorem}{Theorem}
\newtheorem{lemma}[theorem]{Lemma}
\newtheorem{corollary}[theorem]{Corollary}
\newtheorem{proposition}[theorem]{Proposition}
\theoremstyle{definition}
\newtheorem{definition}[theorem]{Definition}
\newtheorem{example}[theorem]{Example}
\newtheorem{remark}[theorem]{Remark}
\newcommand{\N}{\mathbb N}
\newcommand{\rep}{\mathrm{rep}}
\newcommand{\val}{\mathrm{val}}
\newcommand{\card}{\mathrm{Card}}
\newcommand{\A}{\mathcal{A}}
\newcommand{\T}{\mathcal{T}}
\newcommand{\DIV}{\mathrm{DIV}}
\newcommand{\MOD}{\mathrm{MOD}}
\newcommand{\andrm}{\ {\rm and}\ }
\title{State complexity of the multiples of the Thue-Morse set}
\author{Émilie Charlier, Célia Cisternino, Adeline Massuir}
\begin{document}
\maketitle

\begin{abstract}
The Thue-Morse set is the set of those nonnegative integers whose binary expansions have an even number of $1$. We obtain an exact formula for the state complexity of the multiplication by a constant of the Thue-Morse set  $\mathcal{T}$ with respect with any base $b$ which is a power of $2$. Our proof is constructive and we are able to explicitly provide the minimal automaton of the language of all $2^p$-expansions of the set of integers $m\mathcal{T}$ for any positive integers $m$ and $p$.
\end{abstract}

\section{Introduction}

This paper is a contribution to the study of recognizable sets of integers. Many descriptions of such sets were given by various authors. Among them, we point~\cite{BHMV,Cobham1969,Eilenberg}.  A complete description of the minimal automaton recognizing $m\N$ in any given base $b$ was given in~\cite{Alexeev}. Structural properties of minimal automata recognizing $m\N$ are known in various non-standard numeration systems as well~\cite{CRRW}. A deep knowledge of the structures of such automata is important. For example, they can be fruitfully used to obtain efficient decision procedures of periodicity problems~\cite{BMMR,Marsault-Saka}. In the present work, we propose ourselves to initiate a study of the state complexity of the multiplication by a constant of recognizable subsets $X$ of $\N$. In doing so, we aim at generalizing the previous framework concerning the case $X=\N$ only.  Our study starts with the well-known Thue-Morse set $\T$ consisting of the natural numbers whose base $2$-expansions contain an even number of occurrences of the digit $1$. Our goal here is to provide a complete characterization of the minimal automata recognizing the sets $m\T$ for any multiple $m$ and any base $b$ which is a power of $2$.

\section{Basics}

In this text, we use the usual definitions and notation (alphabet, letter, word, language, free monoid, automaton, etc.) of formal language theory; for example, see \cite{Lothaire1997,Sakarovitch2009}.  

Nevertheless, let us give a few definitions and properties that will be central in this work. The empty word is denoted by $\varepsilon$. For a finite word $w$, $|w|$ designates its length and $|w|_a$ the number of occurrences of the letter $a$ in $w$.  A {\em regular language} is a language which is accepted by a finite automaton. For $L\subseteq A^*$ and $w\in A^*$, the {\em (left) quotient} of $L$ by $w$ is the language
\[
	w^{-1}L=\{u\in A^*\colon wu\in L\}.
\]
As is well known, a language $L$ over an alphabet $A$ is regular if and only if it has finitely many quotients, that is, the set of languages
\[
	\{w^{-1}L\colon w\in A^*\}
\]
is finite. The {\em state complexity} of a regular language is the number of its quotients: $\card(\{w^{-1}L\colon w\in A^*\})$. It corresponds to the number of states of its minimal automaton. The following characterization of minimal automata will be used several times in this work: a deterministic finite automaton (or DFA for short) is minimal if and only if it is complete, reduced and accessible. A DFA is said to be {\em complete} if the transition function is total (i.e.\ from every state start transitions labeled with all possible letters), {\em reduced} if languages accepted from distinct states are distinct and {\em accessible} if every state can be reached from the initial state. The language accepted from a state $q$ is denoted by $L_q$. Thus, the language accepted by a DFA is the language accepted from its initial state (we always consider automata having a single initial state).

In what follows we will need a notion that is somewhat stronger than that of reduced DFAs. We say that a DFA has {\em disjoint states} if the languages accepted from distinct states are disjoint: for distinct states $p$ and $q$, we have $L_p\cap L_q=\emptyset$. A state $q$ is said to be {\em coaccessible} if $L_q\ne \emptyset$ and, by extension, an automaton is {\em coaccessible} if all its states are coaccessible. Thus, any coaccessible DFA having disjoint states is reduced.

Now, let us give some background on numeration systems. Let $b\in\N_{\ge2}$. We define $A_b$ to be the alphabet $\{\tt{0},\ldots,\tt{b{-}1}\}$. Elements of $A_b$ are called {\em digits}. The number $b$ is called the {\em base} of the numeration. In what follows we will make no distinction between a digit ${\tt c}$ in $A_b$ and its {\em value} $c$ in $[\![0,b{-}1]\!]$. Otherwise stated, we identify the alphabet $A_b$ and the interval of integers $[\![0,b{-}1]\!]$. Note that here and throughout the text, we use the notation $[\![m,n]\!]$ to designate the interval of integers $\{m,m+1,\ldots,n\}$. The {\em $b$-expansion} of a positive integer $n$, which is denoted by $\rep_b(n)$, is the finite word $c_{\ell{-}1}\cdots c_0$ over $A_b$ defined by
\[
	n=\sum_{j=0}^{\ell{-}1} c_j b^j, \quad c_{\ell{-}1}\ne 0.
\]
The {\em  $b$-expansion} of $0$ is the empty word: $\rep_b(0)=\varepsilon$. Conversely, for a word $w=c_{\ell{-}1}\cdots c_0$ over $A_b$, we write $\val_b(w)=\sum_{j=0}^{\ell{-}1} c_j b^j$. Thus we have $\rep_b\colon\N\to A_b^*$ and $\val_b\colon A_b^*\to \N$. Clearly, the function $\val_b\circ \rep_b$ is the identity from $\N$ to $\N$. Moreover, for any $w\in A_b^*$, the words $\rep_b(\val_b(w))$ and $w$ only differ by the potential leading zeroes in $w$. Also note that for all subsets $X$ of $\N$, we have $\val_b^{-1}(X)=0^*\rep_b(X)$. A subset $X$ of $\N$ is said to be {\em $b$-recognizable} if the language $\rep_b (X)$ is regular. In what follows, we will always consider automata accepting $\val_b^{-1}(X)$ instead of $\rep_b(X)$. The {\em state complexity} of a $b$-recognizable subset $X$ of $\N$ {\em with respect to the base $b$} is the state complexity of the language $\val_b^{-1}(X)$. 

We will need to represent not only natural numbers, but also pairs of natural numbers. If $u=u_1\cdots u_n\in A^*$ and $v=v_1\cdots v_n\in B^*$ are words of the same length $n$, then we use the notation $(u,v)$ to designate the word $(u_1,v_1)\cdots (u_n,v_n)$ of length $n$ over the alphabet $A\times B$: 
\[
	(u,v)=(u_1,v_1)\cdots (u_n,v_n)\in (A\times B)^*.
\]
For $(m,n)\in\N^2$, we write
\[
	\rep_b(m,n)=(0^{\ell-|\rep_b(m)|}\rep_b(m),0^{\ell-|\rep_b(n)|}\rep_b(n))
\] 
where $\ell=\max\{|\rep_b(m)|,|\rep_b(n)|\}$. Otherwise stated, we add leading zeroes to the shortest expansion (if any) in order to obtain two words of the same length. Finally, for a subset $X$ of $\N^2$, we write
\[
	\val_{b}^{-1}(X)=(0,0)^*\rep_b(X).
\]

\section{Method}\label{sec:objectifmethode}

The Thue-Morse set, which we denote by $\T$, is the set of all natural numbers whose base-$2$ expansions contain an even number of occurrences of the digit $1$:
\[
	\T= \{n\in\N\colon|\rep_2 (n)|_1\in 2\N\}.
\]
The Thue-Morse set $\T$ is $2$-recognizable since the language $\val_2^{-1}(\T)$ is accepted by the automaton depicted in Figure~\ref{fig:aut-TM-2}.
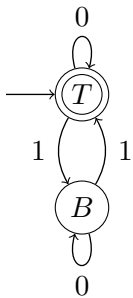
\begin{figure}[htb]
\centering
\begin{tikzpicture}
\tikzstyle{every node}=[shape=circle, fill=none, draw=black,
minimum size=20pt, inner sep=2pt]
\node(1) at (0,0) {$T$};
\node(2) at (0,-1.5) {$B$};
\tikzstyle{every node}=[shape=circle, fill=none, draw=black,
minimum size=15pt, inner sep=2pt]
\node(2f) at (0,0) {};
\tikzstyle{every path}=[color=black, line width=0.5 pt]
\tikzstyle{every node}=[shape=circle, minimum size=5pt, inner sep=2pt]
\draw [->] (-1,0) to node {} (1); % fleche de l'etat initial
\draw [->] (1) to [loop above] node [above] {$0$} (1);
\draw [->] (2) to [loop below] node [below] {$0$} (2);
\draw [->] (1) to [bend right=30] node [left] {$1$} (2);
\draw [->] (2) to [bend right=30] node [right] {$1$} (1);
\end{tikzpicture}
\caption{The Thue-Morse set is $2$-recognizable.}
\label{fig:aut-TM-2}
\end{figure}
More precisely, the Thue-Morse set $\T$ is $2^p$-recognizable for all $p\in\N_{\ge1}$ and is not $b$-recognizable for any other base $b$. This is a consequence of the famous theorem of Cobham.

Two positive integers are said to be {\em multiplicatively independent} if their only common integer power is $1$.

\begin{theorem}[\cite{Cobham1969}] \ 
\begin{itemize}
\item Let $b,b'$ be two multiplicatively independent bases. Then a subset of $\N$ is both $b$-recognizable and $b'$-recognizable if and only if it is a finite union of arithmetic progressions.
\item Let $b,b'$ be two multiplicatively dependent bases. Then a subset of $\N$ is $b$-recognizable if and only if it is $b'$-recognizable.
\end{itemize}
\end{theorem}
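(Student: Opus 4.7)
The statement splits into a dependent and an independent case, each with two implications. For the \emph{dependent} case, my first step is to observe that $b$ and $b'$ being multiplicatively dependent forces them to be powers of a common integer $c\geq 2$: starting from a relation $b^i=b'^j$ and equating prime-factor exponents, a lowest-terms argument on the ratio $j/i$ produces integers $s,t$ with $b=c^s$ and $b'=c^t$. It then suffices to prove that $b$-recognizability coincides with $b^k$-recognizability for every $k\geq 1$. One direction contracts a base-$b$ DFA by composing $k$ consecutive transitions into a single base-$b^k$ transition; the other expands each base-$b^k$ transition into a chain of $k$ base-$b$ transitions via fresh intermediate states. Chaining these reductions through $c$ yields the required equivalence.

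For the \emph{easy} implication of the independent case, I would show that every arithmetic progression $a+q\N$ is $b$-recognizable in every base $b\geq 2$. This is done by the standard modulo-$q$ automaton on states $\{0,\ldots,q-1\}$ with transitions $r\xrightarrow{d}(br+d)\bmod q$, initial state $0$, and accepting state $a\bmod q$; it accepts exactly the words $w$ with $\val_b(w)\equiv a\pmod q$, which matches $\val_b^{-1}(a+q\N)$ up to a finite set (the residues lying below $a$), and finite sets are regular. Closure of regular languages under finite union then finishes this direction.

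The hard direction — that a set which is simultaneously $b$- and $b'$-recognizable for multiplicatively independent $b,b'$ must be a finite union of arithmetic progressions — is Cobham's theorem proper, and I expect this to be the main obstacle by a very wide margin. My plan would be to follow Cobham's original combinatorial argument: view the characteristic sequence $\chi_X$ as a $b$-automatic sequence, exploit the pumping structure enforced by its minimal DFA to extract a form of multiplicative periodicity along powers of $b$, repeat the analysis in base $b'$, and derive a contradiction from the coexistence of these two periodicities using the irrationality of $\log b/\log b'$, typically through Kronecker-type equidistribution or density estimates on the positions where $\chi_X$ takes a prescribed value. The technical heart is to prove that the only sequences compatible with both structures are ultimately periodic, which corresponds exactly to $X$ being a finite union of arithmetic progressions. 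Given the depth of this step, and since Cobham's theorem is only invoked and not proved in the present paper, in practice I would simply cite~\cite{Cobham1969}.
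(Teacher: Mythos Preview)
The paper does not prove this statement at all: it is quoted verbatim from \cite{Cobham1969} as background and left without proof. So there is no ``paper's own proof'' to compare your proposal against; your instinct at the end --- to cite \cite{Cobham1969} rather than reprove Cobham's theorem --- is exactly what the paper does.

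That said, your outline is broadly sound. The dependent case via a common base $c$ and the DFA contraction/expansion between base $b$ and base $b^k$ is the standard argument and works as you describe. The easy direction of the independent case is also correct, though note a small imprecision: your modulo-$q$ automaton accepts $\{w:\val_b(w)\equiv a\pmod q\}$, which is $\val_b^{-1}(a+q\N)$ \emph{together with} the finitely many integers below $a$ in the same residue class; you handle this by removing a finite set, which is fine. For the hard direction, your sketch (automaticity in two bases, pumping-induced multiplicative structure, contradiction via irrationality of $\log b/\log b'$) points in the right direction but is not a proof --- Cobham's original argument is notoriously intricate, and the step from ``two incompatible periodicities'' to ``ultimately periodic'' hides essentially all the difficulty. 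Citing the source, as you propose and as the paper does, is the appropriate move here.
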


In the case of the Thue-Morse set, it is easily seen that, for each $p\in\N_{\ge1}$, the language $\val_{2^p}^{-1}(\T)$ is accepted by the DFA $(\{T,B\},T,T,A_{2^p},\delta)$ where for all $X\in\{T,B\}$ and all $a\in A_{2^p}$,  
\[
	\delta(X,a)=\begin{cases}
					X & \text{if } a\in \T \\
					\overline{X} & \text{else}
				\end{cases}
\]
where $\overline{T}=B$ and $\overline{B}=T$. For example this automaton is depicted in Figure~\ref{fig:aut-TM-4} for $p=2$.
\begin{figure}[htb]
\centering
\begin{tikzpicture}
\tikzstyle{every node}=[shape=circle, fill=none, draw=black,
minimum size=20pt, inner sep=2pt]
\node(1) at (0,0) {$T$};
\node(2) at (0,-1.5) {$B$};
\tikzstyle{every node}=[shape=circle, fill=none, draw=black,
minimum size=15pt, inner sep=2pt]
\node(2f) at (0,0) {};
\tikzstyle{every path}=[color=black, line width=0.5 pt]
\tikzstyle{every node}=[shape=circle, minimum size=5pt, inner sep=2pt]
\draw [->] (-1,0) to node {} (1); % fleche de l'etat initial
\draw [->] (1) to [loop above] node [above] {$0,3$} (1);
\draw [->] (2) to [loop below] node [below] {$0,3$} (2);
\draw [->] (1) to [bend right=30] node [left] {$1,2$} (2);
\draw [->] (2) to [bend right=30] node [right] {$1,2$} (1);
\end{tikzpicture}
\caption{The Thue-Morse set is $4$-recognizable.}
\label{fig:aut-TM-4}
\end{figure}
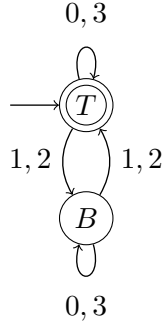

In order to avoid a systematic case separation, we introduce the following notation: for $X\in\{T,B\}$ and $n \in \N$, we define
\[
	X_n = \begin{cases}
			X & \text{if } n \in \T \\
			\overline{X} & \text{else.}  
			\end{cases}
\]
With this notation, we can simply rewrite the definition of the transition function $\delta$ as $\delta(X,a)=X_a$.

The following proposition is well known; for example see~\cite{BHMV}. %Autre référence?

\begin{proposition}
Let $b\in\N_{\ge 2}$ and $m\in\N$. If $X$ is $b$-recognizable, then so is $mX$. Otherwise stated, multiplication by a constant preserves $b$-recogniza\-bility.
\end{proposition}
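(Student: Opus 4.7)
The plan is to exhibit an explicit DFA for $\val_b^{-1}(mX)$ built from any DFA for $\val_b^{-1}(X)$, using the long-division-by-$m$ algorithm in base $b$ (reading digits most-significant-first) to simulate on the fly the reading, in the original automaton, of a word whose value equals $\val_b(w)/m$.

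Starting from a DFA $\A=(Q,q_0,F,A_b,\delta)$ accepting $\val_b^{-1}(X)$, I would define
\[
	\A'=\bigl(Q\times[\![0,m-1]\!],\,(q_0,0),\,F\times\{0\},\,A_b,\,\delta'\bigr),
\]
with
\[
	\delta'\bigl((q,r),e\bigr)=\Bigl(\delta\bigl(q,\lfloor(br+e)/m\rfloor\bigr),\,(br+e)\bmod m\Bigr).
\]
The second coordinate tracks the running remainder of the input value modulo $m$, while the first coordinate feeds to $\A$ the successive quotient digits produced by long division. Since $r\le m-1$ and $e\le b-1$, one has $\lfloor(br+e)/m\rfloor\le b-1$, so the quantity passed to $\delta$ is indeed a letter of $A_b$.

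The heart of the verification is the invariant: after reading $w\in A_b^*$ from $(q_0,0)$ in $\A'$, the reached state is $(q,r)$, where $r=\val_b(w)\bmod m$ and $q$ is the state reached in $\A$ from $q_0$ after reading the unique word $u\in A_b^{|w|}$ with $\val_b(u)=\lfloor\val_b(w)/m\rfloor$. The induction step boils down to the base-$b$ identity
\[
	\lfloor(b\val_b(w)+e)/m\rfloor=b\,\lfloor\val_b(w)/m\rfloor+\lfloor(b(\val_b(w)\bmod m)+e)/m\rfloor,
\]
which is precisely what $\delta'$ implements; combined with the analogous congruence $\val_b(we)\equiv b\,(\val_b(w)\bmod m)+e\pmod m$ for the second coordinate, the invariant is preserved by every letter.

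From the invariant, $w$ is accepted by $\A'$ if and only if $\val_b(w)\bmod m=0$ and $u\in\val_b^{-1}(X)$, equivalently $\val_b(w)=mn$ for some $n\in X$, that is, $\val_b(w)\in mX$. Thus $\A'$ accepts $\val_b^{-1}(mX)$, which is therefore regular, and $mX$ is $b$-recognizable. I do not anticipate any substantive obstacle: once the state set $Q\times[\![0,m-1]\!]$ is identified, the verification is a direct unfolding of long division in base $b$, and the only small point to check is that the digit passed to $\delta$ always belongs to $A_b$.
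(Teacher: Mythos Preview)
The paper does not give its own proof of this proposition; it simply records it as well known with a reference to~\cite{BHMV}. Your argument is correct for $m\ge1$ (the degenerate case $m=0$ is trivial but should be mentioned, since division by $m$ is undefined there).

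It is worth observing that your DFA $\A'$ coincides with what the paper's general method of Section~\ref{sec:objectifmethode} yields when one replaces $\T$ by an arbitrary $b$-recognizable set $X$. If $\A_{X,b}$ denotes the automaton over $A_b\times A_b$ that feeds the first component to $\A$ and ignores the second, then the projected product $\Pi(\A_{m,b}\times\A_{X,b})$ has states $[\![0,m{-}1]\!]\times Q$ and, since in $\A_{m,b}$ the unique transition from $i$ on second component $e$ has first component $d=\lfloor(bi+e)/m\rfloor$ and target $(bi+e)\bmod m$ (Remark~\ref{rem:unicitelettrepremcomp}), its transition on $e$ sends $(i,q)$ to $\bigl((bi+e)\bmod m,\ \delta(q,\lfloor(bi+e)/m\rfloor)\bigr)$---exactly your $\delta'$ with the two coordinates swapped. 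Your long-division presentation is more elementary and produces a DFA directly, bypassing the two-tape relation and the projection step; the paper's route, by contrast, is modular (closure under intersection and morphic image) and is what the authors actually exploit later to analyse the resulting automaton in detail.
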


In particular, for any $m\in\N$ and $p\in\N_{\ge 1}$, the set $m\T$ is $2^p$-recognizable. The aim of this work is to show the following result.

\begin{theorem} \label{thm:main}
Let $m$ and $p$ be positive integers. Then the state complexity of $m\T$ with respect to the base $2^p$ is equal to 
\[
	2k+\left\lceil \frac zp\right\rceil
\]
if $m=k2^z$ with $k$ odd.
\end{theorem}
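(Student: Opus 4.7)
The plan is to explicitly construct a DFA $\mathcal{A}_{m,p}$ with exactly $2k + \lceil z/p \rceil$ states that accepts $\val_{2^p}^{-1}(m\T)$, and then verify that it is complete, accessible, and reduced, which yields minimality and hence the claimed state complexity.

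I would partition the state set of $\mathcal{A}_{m,p}$ into a \emph{main cycle} of $2k$ states indexed by pairs $(r, X) \in [\![0,k-1]\!] \times \{T, B\}$, together with a linear \emph{tail} of $\lceil z/p \rceil$ states $s_0, s_1, \ldots, s_{\lceil z/p \rceil - 1}$ (empty when $z = 0$). In the generic case $z \ge 1$, the state $s_0$ is both the initial state and the unique accepting state; when $z = 0$, the main state $(0, T)$ plays that role. The semantics are: $s_i$ represents ``after reading $i$ more zeros we would be in the accepting state'', while the pair $(r, X)$ encodes a residue modulo $k$ together with a Thue-Morse parity. Since $k$ is odd, $2^p$ is invertible modulo $k$, so the long-division transitions modulo $k$ are well defined; the quotient digit at each step feeds into the Thue-Morse component via the notation $X_a$ of Section~\ref{sec:objectifmethode}, and a case distinction on whether the current digit $d$ preserves the trailing-zero pattern controls the flow between tail and main cycle.

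Correctness would be established by induction on input length: assuming that after reading a word $w$ with $\val_{2^p}(w) = n$ the automaton is in the predicted state, I would check the transition on the next digit $d$ by case analysis, distinguishing whether we stay in or exit the tail and updating $r$ via $(2^p r + d) \bmod k$ together with the appropriate quotient-driven flip of $X$. Accessibility is immediate: the tail is traversed from $s_0$ by reading zeros, while the coprimality of $2^p$ with $k$ and the possibility of flipping the Thue-Morse parity ensure that every main state $(r, X)$ is reached. Completeness is built in to the construction.

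The main obstacle will be reducedness, that is, producing a distinguishing suffix for every pair of distinct states. Tail states $s_i$ and $s_j$ with $i > j$ are separated by $0^j$, which is accepted precisely from states at depth at most $j$ in the tail; main states with distinct residues $r$ are separated by invoking the multiplicative structure modulo $k$; but main states $(r, T)$ and $(r, B)$ require a more delicate argument. For these, I would exploit the non-periodicity of the Thue-Morse sequence: for any prescribed shift $N = j \cdot 2^{p\ell - z}$ arising from two prefixes with the same residue modulo $k$ but different Thue-Morse parity, there exists $t \in \T$ with $t + N \notin \T$, and translating this back through the long-division correspondence yields the desired distinguishing suffix. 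This aperiodicity argument is precisely what shows that the product construction $[\![0, m-1]\!] \times \{T, B\}$, of size $2m = 2k \cdot 2^z$, cannot be further collapsed below $2k + \lceil z/p \rceil$ classes.
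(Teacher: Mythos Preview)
Your high-level plan---build an explicit DFA of the claimed size and verify completeness, accessibility, and reducedness---is the paper's strategy as well. The paper, however, reaches the small automaton indirectly: it first constructs the $2m$-state projection $\Pi(\A_{m,2^p}\times\A_{\T,2^p})$, for which acceptance of $\val_{2^p}^{-1}(m\T)$ is immediate, and then exhibits an explicit partition of its $2m$ states into $2k+\lceil z/p\rceil$ classes (the $[(j,X)]$ and the $\Gamma_\beta$), proving separately that states within a class are indistinguishable and that states in different classes are distinguishable. The payoff is that one never has to write down the transition function of the minimized automaton directly; correctness comes for free from the product.

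Your direct construction, as sketched, has a structural slip. You put the initial/accepting state in the tail ($s_0$) while also keeping a main-cycle state labelled $(0,T)$; but in the actual minimal automaton the initial state \emph{is} the class $[(0,T)]$, and the $\lceil z/p\rceil$ extra classes $\Gamma_\beta$ form a $0$-labelled chain feeding \emph{into} it, namely $\Gamma_{\lceil z/p\rceil-1}\xrightarrow{0}\cdots\xrightarrow{0}\Gamma_0\xrightarrow{0}[(0,T)]$. Under your semantics ``$s_i$ needs $i$ more zeros to accept'', $s_i$ corresponds to $\Gamma_{i-1}$ for $i\ge 1$ and to $[(0,T)]$ for $i=0$; hence your tail of length $\lceil z/p\rceil$ accounts only for $[(0,T)],\Gamma_0,\ldots,\Gamma_{\lceil z/p\rceil-2}$, leaving $\Gamma_{\lceil z/p\rceil-1}$ homeless and your main-cycle label $(0,T)$ without a referent. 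This is not fatal, but it means your transition description is not yet a construction.

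Your reducedness argument also aims at the wrong target. Distinguishing $(r,T)$ from $(r,B)$ needs no aperiodicity and no ``shift $N$'': because the first component is shared, the quotient $d$ in $2^{pn}r+e=md$ is uniquely determined by $e$ and $n$, so the parity condition $X_d=T$ pins down $X$ and the two states are actually \emph{disjoint} (this is Lemma~\ref{lem:projdisj} in the paper). The genuinely delicate reducedness case is separating $[(j,X)]$ from $[(j',X')]$ with $j\ne j'$, which the paper handles via explicit distinguishing words $w_j$ built from the permutation $\sigma(j)=-j\,2^{pn-z}\bmod k$ (Propositions~\ref{prop:jj'} and~\ref{prop:jj'-bis}); your sketch does not address this case.
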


Our proof of Theorem~\ref{thm:main} is constructive. In order to describe the minimal DFA of $\val_{2^p}^{-1}(m\T)$, we will successively construct  several automata. First, we build a DFA $\A_{\T,2^p}$ accepting the language 
\[
	\val_{2^p}^{-1}(\T\times \N).
\]
Then we build a DFA $\A_{m,b}$ accepting the language
\[
	\val_b^{-1}\big(\{(n,mn)\colon n\in \N\}\big).
\] 
Note that we do the latter step for any integer base $b$ and not only for powers of $2$.
Next, we consider the product automaton $\A_{m,2^p}\times\A_{\T,2^p}$. This DFA accepts the language 
\[
	\val_{2^p}^{-1}\big(\{(t,mt)\colon t \in \T\}\big).
\]
Finally, a finite automaton $\Pi(\A_{m,2^p}\times\A_{\T,2^p})$ accepting $\val_{2^p}^{-1}(m\T)$ is obtained by projecting the label of each transition in $\A_{m,2^p}\times\A_{\T,2^p}$ onto its second component.
At each step of our construction, we check that the automaton under consideration is minimal (and hence deterministic) and the ultimate step precisely consists in a minimization procedure.

From now on, we fix some positive integers $m$ and $p$. We also let $z$ and $k$ be the unique integers such that $m=k2^z$ with $k$ odd.

\section{The automaton $\A_{\T,2^p}$}

In this section, we build and study a DFA  $\A_{\T,2^p}$ accepting the language $\val_{2^p}^{-1}(\T\times \N)$. This DFA is a modified version of the automaton accepting $\val_{2^p}^{-1}(\T)$ defined in the previous section. Namely, we replace each transition labeled by $a\in A_{2^p}$ by $2^p$ copies of itself labeled by $(a,b)$, for each $b\in A_{2^p}$. Formally, 
\[
	\A_{\T,2^p}=(\{T,B\},T,T,A_{2^p}\times A_{2^p},\delta_{\T,2^p})
\]
where, for all $X\in\{T,B\}$ and all $a,b\in A_{2^p}$, we have $\delta_{\T,2^p}(X,(a,b))=X_a$. For example, the automata $\A_{\T,2}$ and $\A_{\T,4}$ are depicted in Figure~\ref{fig:aut-TM-2-4-couples}.
\begin{figure}[htb]
\begin{minipage}[c]{0.35\linewidth}
\begin{tikzpicture}
\tikzstyle{every node}=[shape=circle, fill=none, draw=black,
minimum size=20pt, inner sep=2pt]
\node(1) at (0,0) {$T$};
\node(2) at (0,-1.75) {$B$};
\tikzstyle{every node}=[shape=circle, fill=none, draw=black,
minimum size=15pt, inner sep=2pt]
\node(2f) at (0,0) {};
\tikzstyle{every path}=[color=black, line width=0.5 pt]
\tikzstyle{every node}=[shape=circle, minimum size=5pt, inner sep=2pt]
\draw [->] (-1,0) to node {} (1); % fleche de l'etat initial
\draw [->] (1) to [loop above] node [above=-0.4cm] {\begin{tabular}{c}
$(0,0)$ \\
$(0,1)$ \\
\end{tabular}} (1);
\draw [->] (2) to [loop below] node [below=-0.4cm] {\begin{tabular}{c}
$(0,0)$ \\
$(0,1)$ \\
\end{tabular}} (2);
\draw [->] (1) to [bend right=30] node [left=-0.4cm] {\begin{tabular}{c}
$(1,0)$ \\
$(1,1)$ \\
\end{tabular}} (2);
\draw [->] (2) to [bend right=30] node [right=-0.4cm] {\begin{tabular}{c}
$(1,0)$ \\
$(1,1)$ \\
\end{tabular}} (1);
\end{tikzpicture}
\end{minipage}
\begin{minipage}[c]{0.65\linewidth}
\begin{tikzpicture}
\tikzstyle{every node}=[shape=circle, fill=none, draw=black,
minimum size=20pt, inner sep=2pt]
\node(1) at (0,0) {$T$};
\node(2) at (0,-1.75) {$B$};
\tikzstyle{every node}=[shape=circle, fill=none, draw=black,
minimum size=15pt, inner sep=2pt]
\node(2f) at (0,0) {};
\tikzstyle{every path}=[color=black, line width=0.5 pt]
\tikzstyle{every node}=[shape=circle, minimum size=5pt, inner sep=2pt]
\draw [->] (-1,0) to node {} (1); % fleche de l'etat initial
\draw [->] (1) to [loop above] node [above=-1.7cm] {\begin{tabular}{c}
$(0,0),(0,1),(0,2),(0,3)$ \\
$(3,0),(3,1),(3,2),(3,3)$ \\
\end{tabular}} (1);
\draw [->] (2) to [loop below] node [below=-1.7cm] {\begin{tabular}{c}
$(0,0),(0,1),(0,2),(0,3)$ \\
$(3,0),(3,1),(3,2),(3,3)$ \\
\end{tabular}} (2);
\draw [->] (1) to [bend right=30] node [left=-0.3cm] {\begin{tabular}{c}
$(1,0),(1,1),(1,2),(1,3)$ \\
$(2,0),(2,1),(2,2),(2,3)$ \\
\end{tabular}} (2);
\draw [->] (2) to [bend right=30] node [right=-0.3cm] {\begin{tabular}{c}
$(1,0),(1,1),(1,2),(1,3)$ \\
$(2,0),(2,1),(2,2),(2,3)$ \\
\end{tabular}} (1);
\end{tikzpicture}
\end{minipage}
\caption{The automata $\A_{\T,2}$ (left) and $\A_{\T,4}$ (right).}
\label{fig:aut-TM-2-4-couples}
\end{figure}

Now we prove some properties of the automaton $\A_{\T,2^p}$ that will be useful for our concerns. 

\begin{lemma}
The automaton $\A_{\T,2^p}$ is complete, accessible, coaccessible and has disjoint states. In particular, it is the minimal automaton of $\val_{2^p}^{-1}(\T\times \N)$.
\end{lemma}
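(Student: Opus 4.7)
The plan is to establish each of the four properties separately and then invoke the characterization of minimal automata recalled in Section~2, supplemented by the observation that a coaccessible DFA with disjoint states is reduced.

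First, completeness is immediate from the definition: for every state $X\in\{T,B\}$ and every letter $(a,b)\in A_{2^p}\times A_{2^p}$, the transition $\delta_{\T,2^p}(X,(a,b))=X_a$ is explicitly defined. For accessibility, the state $T$ is the initial state, and reading the letter $(1,0)$ from $T$ leads to $B$ since $1\notin\T$, so $B$ is reached from $T$ in one step. For coaccessibility, the empty word is accepted from $T$ because $T$ is the final state, and from $B$ the letter $(1,0)$ leads to $T$, so $(1,0)\in L_B$.

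The main step is proving that $\A_{\T,2^p}$ has disjoint states, and this is where a short computation is needed. I would first observe that for any word $u=u_1\cdots u_n\in A_{2^p}^*$, concatenating the base-$2$ expansions of the digits $u_i$ yields (up to leading zeroes) the base-$2$ expansion of $\val_{2^p}(u)$, so
\[
    |\rep_2(\val_{2^p}(u))|_1=\sum_{i=1}^n|\rep_2(u_i)|_1.
\]
Consequently, $\val_{2^p}(u)\in\T$ if and only if the number of indices $i$ for which $u_i\notin\T$ is even, which is exactly the condition that the sequence of transitions triggered by reading the first components $u_1,\ldots,u_n$ produces an even number of swaps $X\mapsto\overline X$. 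A straightforward induction on the length of the input word therefore yields, for every $(u,v)\in(A_{2^p}\times A_{2^p})^*$ of the same length,
\[
    \delta_{\T,2^p}(T,(u,v))=\begin{cases}T&\text{if }\val_{2^p}(u)\in\T,\\ B&\text{otherwise,}\end{cases}
\]
and symmetrically from the state $B$. This gives the explicit descriptions
\[
    L_T=\{(u,v)\in(A_{2^p}\times A_{2^p})^*\colon \val_{2^p}(u)\in\T\},\qquad L_B=\{(u,v)\in(A_{2^p}\times A_{2^p})^*\colon \val_{2^p}(u)\notin\T\},
\]
which are plainly disjoint and whose union is all of $(A_{2^p}\times A_{2^p})^*$. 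In particular $L_T=\val_{2^p}^{-1}(\T\times\N)$, which is the language we wanted $\A_{\T,2^p}$ to accept.

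To conclude, the disjointness of $L_T$ and $L_B$ combined with coaccessibility gives that $\A_{\T,2^p}$ is reduced; together with completeness and accessibility this is precisely the characterization of minimality recalled in Section~2. The only place where any real content is used is the parity-of-ones identity above, so I expect this to be the main (and only nontrivial) obstacle; everything else is bookkeeping from the construction.
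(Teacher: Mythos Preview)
Your proof is correct and follows the same approach as the paper, which simply states ``These properties are all straightforward verifications''; you have supplied those verifications in full. In fact you have done a bit more than strictly necessary, since your parity-of-ones computation and the resulting description of the extended transition function anticipate the content of the next two lemmas in the paper (Lemmas~\ref{lemlem:transitionsTM} and~\ref{lem:transitionsTM}).
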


\begin{proof}
These properties are all straightforward verifications.
\end{proof}

\begin{lemma}\label{lemlem:transitionsTM}
Let $u,v \in A_{2^p}^*$. Then $\val_{2^p}(uv)\in\T$ if and only if, either $\val_{2^p}(u)\in\T$ and  $\val_{2^p}(v)\in\T$, or $\val_{2^p}(u)\notin\T$ and  $\val_{2^p}(v)\notin\T$.
\end{lemma}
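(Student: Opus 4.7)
The plan is to reduce the lemma to a purely combinatorial statement about counting $1$'s in binary expansions. The key observation is that a base-$2^p$ digit $c \in A_{2^p}$ corresponds to a block of $p$ base-$2$ digits, and since leading zeros contribute no $1$'s, the total number of $1$'s in the base-$2$ expansion of an integer is additive over its base-$2^p$ digits. Concretely, for every $w = c_{\ell-1}\cdots c_0 \in A_{2^p}^*$ I would first establish
\[
	|\rep_2(\val_{2^p}(w))|_1 = \sum_{j=0}^{\ell-1} |\rep_2(c_j)|_1,
\]
by noting that $\rep_2(\val_{2^p}(w))$ is obtained from the concatenation of the $p$-bit binary expansions of the digits $c_{\ell-1},\ldots,c_0$ by merely stripping the leading zeros of the leftmost block, which does not affect the count of $1$'s.

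With this identity in hand, I would introduce a parity function $\pi\colon A_{2^p}^*\to\{0,1\}$ defined by $\pi(w) = |\rep_2(\val_{2^p}(w))|_1 \bmod 2$, so that $\val_{2^p}(w) \in \T$ if and only if $\pi(w) = 0$. The identity above makes $\pi$ additive modulo $2$ under concatenation, namely $\pi(uv) \equiv \pi(u) + \pi(v) \pmod 2$ for all $u,v \in A_{2^p}^*$, because the sequence of base-$2^p$ digits of $uv$ is simply the concatenation of those of $u$ and $v$.

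The lemma then follows immediately: $\val_{2^p}(uv) \in \T$ iff $\pi(uv) = 0$ iff $\pi(u) = \pi(v)$, which splits into the two cases $\pi(u) = \pi(v) = 0$, i.e.\ both $\val_{2^p}(u)$ and $\val_{2^p}(v)$ lie in $\T$, and $\pi(u) = \pi(v) = 1$, i.e.\ neither does. I do not expect any real obstacle; the only small care needed is in the first step, to justify cleanly that the potential leading zeros of the top base-$2^p$ digit cause no issue, which is transparent since we count only occurrences of the digit $1$.
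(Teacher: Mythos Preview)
Your proposal is correct and follows essentially the same idea as the paper's proof. The paper packages your ``concatenation of $p$-bit binary expansions'' as the $p$-uniform morphism $\tau\colon A_{2^p}^*\to A_2^*$, $\tau(a)=0^{p-|\rep_2(a)|}\rep_2(a)$, and then uses $|\tau(uv)|_1=|\tau(u)|_1+|\tau(v)|_1$ directly; your parity function $\pi$ and digit-sum identity are just a rephrasing of this.
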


\begin{proof}
Let $\tau\colon A_{2^p}^*\to A_{2^p}^*$ be the $p$-uniform morphism defined by $\tau(a)=0^{p-|\rep_2(a)|}\rep_2(a)$ for each $a\in A_{2^p}$. Then, for all $w\in A_{2^p}^*$, we have $\val_{2^p}(w)=\val_2(\tau(w))$. Therefore, $\val_{2^p}(w)\in\T$ if and only if $|\tau(w)|_1\in 2\N$.
Since $\tau$ is a morphism, we have $|\tau(uv)|_1=|\tau(u)|_1+|\tau(v)|_1$. Hence $|\tau(uv)|_1$ is even if and only if $|\tau(u)|_1$ and $|\tau(v)|_1$ are both even or both odd. 
\end{proof}

\begin{lemma}\label{lem:transitionsTM}
For all $X \in \{T,B\}$ and $(u,v)\in (A_{2^p}\times A_{2^p})^*$, we have 
\[
	\delta_{\T,2^p}(X,(u,v))=X_{\val_{2^p}(u)}.
\]
\end{lemma}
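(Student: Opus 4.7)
The plan is to proceed by induction on the common length $n$ of $u$ and $v$. Since the transition function only reads the first component of each letter (by definition of $\delta_{\T,2^p}$), the word $v$ plays no active role, and the whole identity is really a statement about how $\delta_{\T,2^p}$ interacts with the Thue-Morse parity of $\val_{2^p}(u)$.

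For the base case $n=0$, we have $u=\varepsilon$, so $\val_{2^p}(u)=0\in\T$, which gives $X_{\val_{2^p}(u)}=X$; and clearly $\delta_{\T,2^p}(X,\varepsilon)=X$ by convention. For the inductive step, I would write $(u,v)=(u',v')(a,b)$ with $(a,b)\in A_{2^p}\times A_{2^p}$ and apply the induction hypothesis, yielding
\[
\delta_{\T,2^p}(X,(u,v))=\delta_{\T,2^p}\bigl(X_{\val_{2^p}(u')},(a,b)\bigr)=\bigl(X_{\val_{2^p}(u')}\bigr)_a.
\]
It then remains to check the identity $(X_{\val_{2^p}(u')})_a = X_{\val_{2^p}(u'a)}$.

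This last identity is exactly where Lemma~\ref{lemlem:transitionsTM} enters: it tells us that $\val_{2^p}(u'a)\in\T$ iff $\val_{2^p}(u')$ and $a=\val_{2^p}(a)$ are on the same side of $\T$ (both in or both out). A four-way case split on whether $\val_{2^p}(u')\in\T$ and whether $a\in\T$ then shows that $(X_{\val_{2^p}(u')})_a$ agrees with $X_{\val_{2^p}(u'a)}$ in every case, using only that $\overline{\overline{X}}=X$ and the definition of the subscript notation.

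I do not expect any real obstacle here: the statement is essentially a bookkeeping lemma that lifts the single-letter relation $\delta_{\T,2^p}(X,(a,b))=X_a$ to arbitrary words, and the parity Lemma~\ref{lemlem:transitionsTM} provides precisely the concatenation rule needed for the induction to go through. The only mild subtlety is the implicit identification of a digit $a\in A_{2^p}$ with the integer $\val_{2^p}(a)=a$, which makes the notations $X_a$ (digit subscript) and $X_{\val_{2^p}(a)}$ (integer subscript) coincide.
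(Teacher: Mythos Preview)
Your proposal is correct and follows essentially the same approach as the paper: induction on the length, peeling off the last letter, and invoking Lemma~\ref{lemlem:transitionsTM} to justify the identity $(X_{\val_{2^p}(u')})_a=X_{\val_{2^p}(u'a)}$. The paper's proof is slightly more terse (it simply asserts this last step ``by Lemma~\ref{lemlem:transitionsTM}'' without spelling out the case split), but the argument is the same.
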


\begin{proof}
We do the proof by induction on $|(u,v)|$. The case $|(u,v)|=0$ is trivial. The case $|(u,v)|=1$ holds by definition of $\A_{\T,2^p}$. Now let $X\in \{T,B \}$ and let $(ua,vb)\in (A_{2^p}\times A_{2^p})^*$ with $a,b\in A_{2^p}$. We suppose that the result is satisfied for $(u,v)$ and we show that it is also true for $(ua,vb)$. Let $Y=\delta_{\T,2^p}(X,(u,v))$. By induction hypothesis, we have $Y =X_{\val_{2^p}(u)}$. Thus we obtain
\[
	\delta_{\T,2^p}(X,(ua,vb)) 
	=\delta_{\T,2^p}(Y,(a,b))
	=Y_a
	=(X_{\val_{2^p}(u)})_a
	=X_{\val_{2^p}(ua)}.
\]
%\begin{eqnarray*}
%	\delta_{\T,2^p}(X,(ua,vb)) 
%	&=& \delta_{\T,2^p}(Y,(a,b)) \\
%	&=&
%	\begin{cases}
%		Y 			& \text{if }  a\in\T\\ 
%		\overline{Y} 	& \text{if }  a\notin\T
%		\end{cases} \\
%	&=&
%	\begin{cases}
%		X 			& \text{if either }  \val_{2^p}(u)\in\T \andrm \val_{2^p}(a)\in\T,\\ 
%					& \orrm \val_{2^p}(u)\notin\T \andrm \val_{2^p}(a)\notin\T\\
%		\overline{X} 	&  \text{else}
%		\end{cases} \\
%	&=&
%		\begin{cases}
%		X 			& \text{if } \val_{2^p}(ua)\in\T\\
%		\overline{X} 	&  \text{else}
%		\end{cases} \\
%	&=& X_{\val_{2^p}(ua)}
%\end{eqnarray*}		
where we have used Lemma~\ref{lemlem:transitionsTM} for the last step.
\end{proof}

\section{The automaton $\A_{m,b}$}
In this section, we consider an arbitrary integer base $b$. Let
\[
	\A_{m,b}=([\![0,m{-}1]\!],0,0,A_b\times A_b,\delta_{m,b})
\]
where the (partial) transition function $\delta_{m,b}$ is defined as follows: for $i,j\in[\![0,m{-}1]\!]$ and $d,e\in A_b$, we set
\[
	\delta_{m,b}(i,(d,e))=j \quad \iff \quad bi + e = md + j.
\]
The DFA $\A_{m,b}$ accepts the language $\val_{b}^{-1}\big(\{(n,mn)\colon n\in \N\}\big)$. We refer the interested reader to \cite{Waxweiler2009}. For example, the automaton $\A_{6,4}$ is depicted in Figure~\ref{fig:A-6,4}. 

\begin{figure}[htb]
\centering
\begin{tikzpicture}[scale=0.75]
\tikzstyle{every node}=[shape=circle, fill=none, draw=black,
minimum size=30pt, inner sep=2pt]
\node(0) at (0,6.5) {$0$};
\node(1) at (3.5,6.5) {$1$};
\node(2) at (7,6.5) {$2$};
\node(3) at (10.5,6.5) {$3$};
\node(4) at (14,6.5) {$4$};
\node(5) at (17.5,6.5) {$5$};

\tikzstyle{every node}=[shape=circle, fill=none, draw=black,
minimum size=25pt, inner sep=2pt]
\node at (0,6.5) {};

\tikzstyle{etiquettedebut}=[very near start,rectangle,fill=black!20,scale=0.9]
\tikzstyle{etiquettemilieu}=[midway,rectangle,fill=black!20,scale=0.9]
\tikzstyle{every path}=[color=black, line width=0.5 pt,scale=0.9]
\tikzstyle{every node}=[shape=circle, minimum size=5pt, inner sep=2pt,scale=0.9]

\draw [->] (-1.5,7.22) to node {} (0); % fleche de l'etat initial
%boucles
\draw [->] (0) to [loop above] node [left,rectangle,fill=black!20,scale=0.9] {$(0,0)$} (0);
\draw [->] (1) to [loop below] node [left,rectangle,fill=black!20,scale=0.9] {$(1,3)$} (1);
\draw [->] (2) to [loop left] node [left,pos=0.2,rectangle,fill=black!20,scale=0.9] {$(1,0)$} (2);
\draw [->] (3) to [loop below] node [right,pos=0.3,rectangle,fill=black!20,scale=0.9] {$(2,3)$} (3);
\draw [->] (4) to [loop above] node [left,rectangle,fill=black!20,scale=0.9] {$(2,0)$} (4);
\draw [->] (5) to [loop above] node [left,rectangle,fill=black!20,scale=0.9] {$(3,3)$} (5);
%autres transitions
\draw [->] (0) to [bend left=17] node [sloped,etiquettemilieu] {$(0,1)$} (1);
\draw [->] (0) to [bend left=30] node [sloped,etiquettemilieu] {$(0,2)$} (2);
\draw [->] (0) to [bend left=45] node [sloped,etiquettemilieu] {$(0,3)$} (3);
\draw [->] (1) to [bend left=40] node [sloped,etiquettemilieu] {$(0,0)$} (4);
\draw [->] (1) to [bend left=50] node [sloped,etiquettemilieu] {$(0,1)$} (5);
\draw [->] (1) to [bend left=17] node [sloped,etiquettemilieu] {$(1,2)$} (0);
\draw [->] (2) to [bend left=17] node [sloped,etiquettemilieu] {$(1,1)$} (3);
\draw [->] (2) to [bend left=30] node [sloped,etiquettemilieu] {$(1,2)$} (4);
\draw [->] (2) to [bend left=40] node [sloped,etiquettemilieu] {$(1,3)$} (5);
\draw [->] (3) to [bend left=40] node [sloped,etiquettemilieu] {$(2,0)$} (0);
\draw [->] (3) to [bend left=30] node [sloped,etiquettemilieu] {$(2,1)$} (1);
\draw [->] (3) to [bend left=17] node [sloped,etiquettemilieu] {$(2,2)$} (2);
\draw [->] (4) to [bend left=17] node [sloped,etiquettemilieu] {$(2,1)$} (5);
\draw [->] (4) to [bend left=50] node [sloped,etiquettemilieu] {$(3,2)$} (0);
\draw [->] (4) to [bend left=40] node [sloped,etiquettemilieu] {$(3,3)$} (1);
\draw [->] (5) to [bend left=45] node [sloped,etiquettemilieu] {$(3,0)$} (2);
\draw [->] (5) to [bend left=35] node [sloped,etiquettemilieu] {$(3,1)$} (3);
\draw [->] (5) to [bend left=17] node [sloped,etiquettemilieu] {$(3,2)$} (4);
\end{tikzpicture}
\caption{The automaton $\A_{6,4}$ accepts the language $\val_4^{-1}\big(\{(n,6n)\colon n\in \N\}\big)$.}
\label{fig:A-6,4}
\end{figure}

Note that the automaton $\A_{m,b}$ is not complete (see Remark~\ref{rem:unicitelettrepremcomp}). Also note that there is always a loop labeled by $(0,0)$ on the initial state $0$.

\begin{remark}
\label{rem:unicitelettrepremcomp}
For each $i \in [\![0,m{-}1]\!]$ and $e \in A_b$, there exist unique $d \in A_b$ and $j \in [\![0,m{-}1]\!]$ such that
$\delta_{m,b}(i,(d,e))=j$. Indeed, $d$ and $j$ are unique since they are the quotient and remainder of the Euclidean division of $bi+e$ by $m$. We still have to check that $d < b$. We have
\[
	bi + e = md + j \iff d = \frac{bi+e-j}{m}.
\]
Since $i\le m{-}1$, $j \ge 0$ and $e < b$, we have
\[
   	\frac{bi+e-j}{m}\le \frac{b(m{-}1)+e}{m}=b-\frac{b-e}{m}<b.
\]
\end{remark}

\begin{lemma} \label{lem:transitionsAmb}
For $i,j\in [\![0,m{-}1]\!]$ and $(u,v) \in (A_b\times A_b)^*$, we have
\[
	\delta_{m,b}(i,(u,v))=j \iff b^{|(u,v)|}\, i + \val_b(v) = m\, \val_b(u) + j.
\]
\end{lemma}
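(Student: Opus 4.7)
The plan is to prove the lemma by induction on the length $n=|(u,v)|$, exactly mirroring the structure of the proof of Lemma~\ref{lem:transitionsTM}, with the definition of $\delta_{m,b}$ on single letters playing the role of Lemma~\ref{lemlem:transitionsTM}.

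For the base case $n=0$, the word is $\varepsilon$, so $\delta_{m,b}(i,\varepsilon)=j$ iff $i=j$, while the right-hand side reads $i+0=0+j$; both conditions coincide. For the inductive step, I would write $(u,v)=(u'a,v'c)$ with $a,c\in A_b$ and $(u',v')$ of length $n$, and use $\val_b(u'a)=b\val_b(u')+a$ and $\val_b(v'c)=b\val_b(v')+c$ to rewrite the target equation $b^{n+1}i+\val_b(v'c)=m\val_b(u'a)+j$ as
\[
    b\bigl(b^n i+\val_b(v')-m\val_b(u')\bigr)+c=ma+j.
\]
Setting $k:=b^n i+\val_b(v')-m\val_b(u')$, this is precisely $bk+c=ma+j$, i.e.\ the defining condition for $\delta_{m,b}(k,(a,c))=j$.

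In the forward direction, if $\delta_{m,b}(i,(u'a,v'c))=j$ then by definition of the transition function of a DFA there is a state $k\in[\![0,m{-}1]\!]$ with $\delta_{m,b}(i,(u',v'))=k$ and $\delta_{m,b}(k,(a,c))=j$. The induction hypothesis gives $b^n i+\val_b(v')=m\val_b(u')+k$, and the base definition gives $bk+c=ma+j$; combining these yields the required equation. In the reverse direction, given the equation I define $k$ as above and need to check that $k\in[\![0,m{-}1]\!]$ so that the induction hypothesis applies; this is the only mildly delicate point. Since $bk+c=ma+j$ with $0\le a,c<b$ and $0\le j<m$, one gets $bk\ge ma-c>-b$ hence $k\ge 0$, and $bk<m(a+1)\le mb$ hence $k<m$. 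Once $k\in[\![0,m{-}1]\!]$ is established, the induction hypothesis yields $\delta_{m,b}(i,(u',v'))=k$, the identity $bk+c=ma+j$ yields $\delta_{m,b}(k,(a,c))=j$, and composition gives $\delta_{m,b}(i,(u'a,v'c))=j$.

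The argument is essentially routine; the only step worth flagging is the bound showing that $k$ lies in the admissible range, which is the counterpart at the level of finite words of the well-definedness assertion already recorded in Remark~\ref{rem:unicitelettrepremcomp} for single letters.
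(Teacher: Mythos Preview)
Your proof is correct, and it differs from the paper's in the choice of decomposition for the inductive step: the paper peels off the \emph{first} letter, writing the word as $(du,ev)$ with $d,e\in A_b$, whereas you peel off the \emph{last} letter, writing it as $(u'a,v'c)$. Because the automaton reads left to right, the paper's reverse implication has to recover the first digit $d$ as $\DIV(bi+e,m)$ from the global equation, which it does via a short nested-$\DIV$ computation. In your right-decomposition, the analogous delicate point becomes the range check $k\in[\![0,m{-}1]\!]$ for the intermediate state, and this follows immediately from the single-letter identity $bk+c=ma+j$ together with the bounds $0\le a,c<b$ and $0\le j<m$. Both arguments are routine; your version is marginally more direct, since the range check needs only elementary inequalities rather than the commutation of iterated Euclidean divisions.
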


\begin{proof}
We do the proof by induction on $n=|(u,v)|$. If $n$ is equal to $0$ or $1$, the result is clear. Now let $i,j\in [\![0,m{-}1]\!]$ and let $(du,ev)\in (A_b\times A_b)^*$ with $d,e\in A_b$ and $|(u,v)|=n$. We suppose that the result is satisfied for $(u,v)$ and we show that it is also true for $(du,ev)$. We use the notation $\DIV(x,y)$ and $\MOD(x,y)$ to designate the quotient and the remainder of the Euclidean division of $x$ by $y$ (thus, we have $\DIV(x,y)=\big\lfloor \frac xy \big\rfloor$). 
By definition of the transition function, we have
\[
	\delta_{m,b}(i,(du,ev))=j
	 \iff  d=\DIV(bi+e,m) \ \andrm \ \delta_{m,b}(\MOD(bi+e,m),(u,v))=j.
\]
By using the induction hypothesis, we have
\begin{align*}
	\delta_{m,b}(bi+e-md),(u,v))=j
	& \iff b^n\, (bi+e-md) + \val_b(v)= m\, \val_b(u)+j 		\\
	& \iff b^{n+1}\, i + \val_b(ev)= m\, \val_b(du)+j.
\end{align*}
To be able to conclude the proof, we still have to show that 
\begin{equation}
\label{eq:val}
	b^{n+1}\, i + \val_b(ev)= m\, \val_b(du)+j 
\end{equation}
implies 
\[
	d=\DIV(bi+e,m).
\]
Thus, suppose that \eqref{eq:val} is true. Then
\[
	b^{n+1}\, i+ b^n e + \val_b(v)= m(b^nd+ \val_b(u))+j.
\]
Since $\val_b(u)$ and $\val_b(v)$ are less than $b^n$, $d\ge 0$, $j<m$ and $b^nd+ \val_b(u)\ge 0$, 
we obtain
\begin{align*}
	d 	&= \DIV(b^nd+ \val_b(u),b^n) \\
		&= \DIV(\DIV(b^{n+1}\, i+ b^n e + \val_b(v),m),b^n) \\
		&= \DIV(\DIV(b^{n+1}\, i+ b^n e + \val_b(v),b^n),m) \\
		&= \DIV(b\, i+ e,m)
\end{align*}
as desired.
\end{proof}

\begin{remark} \label{rem:unicitepremcomp}
It is easily checked that Remark~\ref{rem:unicitelettrepremcomp} extends from letters to words: for each $i\in[\![0,m{-}1]\!]$ and $v\in A_b^*$, there exist unique $u\in A_b^*$ and $j\in[\![0,m{-}1]\!]$ such that $\delta_{m,b}(i,(u,v))=j$. In particular, the word $u$ must have the same length as the word $v$, and hence $\val_b(u)<b^{|v|}$.
\end{remark}

Let us describe a few properties of the automaton $\A_{m,b}$.

\begin{proposition} \label{prop:Ambproperties}
The automaton $\A_{m,b}$ is accessible, coaccessible and has disjoint states. 
\end{proposition}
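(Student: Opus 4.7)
The plan is to establish the three properties separately, with Lemma~\ref{lem:transitionsAmb} serving as the workhorse throughout.

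For \emph{accessibility}, I would fix a target state $j \in [\![0,m{-}1]\!]$ and exhibit an explicit word from $0$ to $j$. Setting $n = |\rep_b(j)|$, $u = 0^n$ and $v = \rep_b(j)$, one has $\val_b(u) = 0$ and $\val_b(v) = j$, so Lemma~\ref{lem:transitionsAmb} immediately yields $\delta_{m,b}(0,(u,v)) = j$ from the identity $b^n \cdot 0 + j = m\cdot 0 + j$; the case $j = 0$ is handled by the empty word.

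For the \emph{disjoint-states} property, I would argue by contradiction. Suppose $w = (u,v)$ of length $n$ is accepted from two distinct states $p$ and $q$. Since $0$ is the unique final state, $\delta_{m,b}(p,w) = 0 = \delta_{m,b}(q,w)$, and two applications of Lemma~\ref{lem:transitionsAmb} give $b^n p + \val_b(v) = m\val_b(u) = b^n q + \val_b(v)$, forcing $p = q$.

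The most substantive step is \emph{coaccessibility}. Fix $i \in [\![0,m{-}1]\!]$. By Lemma~\ref{lem:transitionsAmb}, a word $(u,v)$ of length $n$ satisfies $\delta_{m,b}(i,(u,v)) = 0$ iff $m\val_b(u) = b^n i + \val_b(v)$ with both $\val_b(u)$ and $\val_b(v)$ in $[\![0, b^n{-}1]\!]$; equivalently, $\val_b(u)$ must lie in the half-open interval $[b^n i/m,\, b^n(i+1)/m)$. My plan is to choose $n$ large enough that $b^n \ge m$, so that this interval has length at least $1$ and therefore contains some integer $x$; then padding the $b$-expansions of $x$ and of $mx - b^n i$ to length $n$ produces the desired pair $(u,v)$. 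The main obstacle — really the only nontrivial verification in the proof — is to confirm that the candidate values fit into $n$-digit words: namely $x < b^n$, which follows from $x < b^n(i+1)/m \le b^n$ since $i+1 \le m$, and $0 \le mx - b^n i < b^n$, which follows from the bracketing of $x$.
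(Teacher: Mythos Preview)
Your proof is correct, and for accessibility and the disjoint-states property it is essentially identical to the paper's argument (the paper writes the accessing word as $\rep_b(0,i)$, which unpacks to your $(0^n,\rep_b(i))$).

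For coaccessibility the approaches diverge slightly. The paper proves a sharper claim: it characterises \emph{exactly} those lengths $n$ for which a path from $i$ to $0$ exists, via the pair of inequalities
\[
\left\lceil \frac{b^n i}{m}\right\rceil - \frac{b^n}{m} < \frac{b^n i}{m} \le b^n - 1,
\]
and only afterwards observes that both hold for all sufficiently large $n$. You bypass this characterisation entirely: choosing any $n$ with $b^n \ge m$ forces the interval $[b^n i/m,\, b^n(i+1)/m)$ to have length at least $1$, so it contains an integer, and the required bounds follow immediately. Your route is shorter and perfectly adequate for the proposition; the paper's version yields the extra information of precisely which word lengths work, though that information is not used later.
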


\begin{proof} 
For each $i \in[\![0,m{-}1]\!]$, we have $\delta_{m,b}(0, \rep_b(0,i))=i$ from Lemma~\ref{lem:transitionsAmb}. Therefore $\A_{m,b}$ is accessible. It is a little trickier to find a word $(u,v)$ that leads from $i$ to $0$. The reason is that there is a length constraint to respect: we must find words $u,v\in A_b^*$ of the same length $n$ such that  $b^n i + \val_b(v) = m\, \val_b(u)$. Equivalently, we have to find $n\in\N$ and $d,e\in[\![0,b^n-1]\!]$ such that $b^n i+e=md$.

We claim that for all $n\in \N$ and $i \in[\![0,m{-}1]\!]$, there exists such $d$ and $e$ if and only if the following two inequalities hold
\begin{equation}\label{eqn:paslesmotslespluscourts}
	\left\lceil \frac{b^ni}{m} \right\rceil-\frac{b^n}{m} < 
\frac{b^ni}{m} \le b^n-1.
\end{equation}
First, suppose that $d,e\in[\![0,b^n-1]\!]$ are such that $b^n i+e=md$. Then $\frac{b^ni}{m}=d-\frac{e}{m}\le d\le b^n-1$. Moreover $\frac{b^ni}{m}\le d=\frac{b^n i+e}{m}<\frac{b^n (i+1)}{m}$. Since $d$ is an integer, we get that $\lceil \frac{b^ni}{m}\rceil<\frac{b^n(i+1)}{m}$. Conversely, suppose that the two inequalities~\eqref{eqn:paslesmotslespluscourts} hold. Let $d=\lceil\frac{b^ni}{m}\rceil$ and $e=md-b^ni$. It suffices to show that $d,e\in[\![0,b^n-1]\!]$. Clearly $d,e\in\N$. From the inequality on the right, we get $d\le b^n-1$ and from that on the left, we get $e=md-b^ni<b^n(i+1)-b^ni=b^n$. This proves the claim.
	
For a given $i \in[\![0,m{-}1]\!]$, the inequalities~\eqref{eqn:paslesmotslespluscourts} may not be satisfied for small $n$ but it is easily checked that they are both satisfied for all $n$ large enough. Therefore, the claim implies that $\A_{m,b}$ is coaccessible. 

Finally, let $i,j \in [\![0,m{-}1]\!]$ and let $(u,v) \in L_i\cap L_j$. By Lemma~\ref{lem:transitionsAmb}, we have
\[
	b^{|(u,v)|} i + \val_b(v) = m\, \val_b(u)\quad  \andrm \quad 
	 b^{|(u,v)|} j + \val_b(v) = m\,\val_b(u),
\]
which implies that $i=j$. We have thus obtained that $i\ne j\implies L_i\cap L_j=\emptyset$, i.e.\ that $\A_{m,b}$ has disjoint states.
\end{proof}

In a reduced DFA, there can be at most one non coaccessible state. Thus, we deduce from Proposition~\ref{prop:Ambproperties} that $\A_{m,b}$ is indeed the {\em trim minimal} automaton of the language $\val_b^{-1}\big(\{(n,mn)\colon n\in \N\}\big)$, that is the automaton obtained by removing the only non coaccessible state from its minimal automaton.

\section{The projected automaton $\Pi(\A_{m,b})$}

In this section, we study the automaton obtained by projecting the label of each transition of $\A_{m,b}$ onto its second component. We denote by $\Pi(\A_{m,b})$ the automaton obtained thanks to this projection. For example, the automaton $\Pi(\A_{6,4})$ is depicted in Figure~\ref{projA-6,4}.

\begin{figure}[htb]
\centering
\begin{tikzpicture}[scale=0.75]
\tikzstyle{every node}=[shape=circle, fill=none, draw=black,
minimum size=30pt, inner sep=2pt]
\node(0) at (0,6.5) {$0$};
\node(1) at (3.5,6.5) {$1$};
\node(2) at (7,6.5) {$2$};
\node(3) at (10.5,6.5) {$3$};
\node(4) at (14,6.5) {$4$};
\node(5) at (17.5,6.5) {$5$};

\tikzstyle{every node}=[shape=circle, fill=none, draw=black,
minimum size=25pt, inner sep=2pt]
\node at (0,6.5) {};

\tikzstyle{etiquettedebut}=[very near start,rectangle,fill=black!20]
\tikzstyle{etiquettemilieu}=[midway,rectangle,fill=black!20]
\tikzstyle{every path}=[color=black, line width=0.5 pt]
\tikzstyle{every node}=[shape=circle, minimum size=5pt, inner sep=2pt]

%A_6,4
\draw [->] (-1.5,6.5) to node {} (0); % fleche de l'etat initial
%boucles
\draw [->] (0) to [loop above] node [left,rectangle,fill=black!20] {$0$} (0);
\draw [->] (1) to [loop below] node [left,rectangle,fill=black!20] {$3$} (1);
\draw [->] (2) to [loop left] node [left,pos=0.2,rectangle,fill=black!20] {$0$} (2);
\draw [->] (3) to [loop below] node [right,pos=0.3,rectangle,fill=black!20] {$3$} (3);
\draw [->] (4) to [loop above] node [left,rectangle,fill=black!20] {$0$} (4);
\draw [->] (5) to [loop above] node [left,rectangle,fill=black!20] {$3$} (5);
%autres transitions
\draw [->] (0) to [bend left=17] node [sloped,etiquettemilieu] {$1$} (1);
\draw [->] (0) to [bend left=30] node [sloped,etiquettemilieu] {$2$} (2);
\draw [->] (0) to [bend left=45] node [sloped,etiquettemilieu] {$3$} (3);
\draw [->] (1) to [bend left=40] node [sloped,etiquettemilieu] {$0$} (4);
\draw [->] (1) to [bend left=45] node [sloped,etiquettemilieu] {$1$} (5);
\draw [->] (1) to [bend left=17] node [sloped,etiquettemilieu] {$2$} (0);
\draw [->] (2) to [bend left=17] node [sloped,etiquettemilieu] {$1$} (3);
\draw [->] (2) to [bend left=30] node [sloped,etiquettemilieu] {$2$} (4);
\draw [->] (2) to [bend left=40] node [sloped,etiquettemilieu] {$3$} (5);
\draw [->] (3) to [bend left=40] node [sloped,etiquettemilieu] {$0$} (0);
\draw [->] (3) to [bend left=35] node [sloped,etiquettemilieu] {$1$} (1);
\draw [->] (3) to [bend left=17] node [sloped,etiquettemilieu] {$2$} (2);
\draw [->] (4) to [bend left=17] node [sloped,etiquettemilieu] {$1$} (5);
\draw [->] (4) to [bend left=50] node [sloped,etiquettemilieu] {$2$} (0);
\draw [->] (4) to [bend left=40] node [sloped,etiquettemilieu] {$3$} (1);
\draw [->] (5) to [bend left=40] node [sloped,etiquettemilieu] {$0$} (2);
\draw [->] (5) to [bend left=35] node [sloped,etiquettemilieu] {$1$} (3);
\draw [->] (5) to [bend left=17] node [sloped,etiquettemilieu] {$2$} (4);
\end{tikzpicture}
\caption{The projected automaton $\Pi(\A_{6,4})$.}
\label{projA-6,4}
\end{figure}
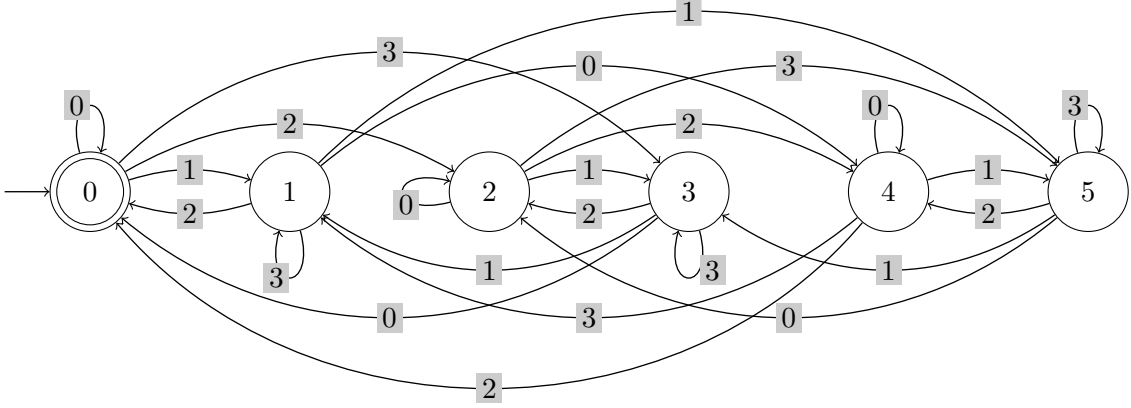

\begin{remark}
\label{rem:common}
The automaton $\Pi(\A_{m,b})$ corresponds to the automaton that is commonly built for accepting the language $\val_b^{-1}(m\N)$. For each $i,j\in[\![0,m{-}1]\!]$, there is a transition labeled by $e\in A_b$ from the state $i$ to the state $j$ if and only if $j=bi+e\bmod m$.
\end{remark}

\begin{corollary}
The automaton $\Pi(\A_{m,b})$ is complete, accessible and coaccessible.
\end{corollary}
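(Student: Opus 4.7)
The plan is to derive each of the three properties either directly from the corresponding property of $\A_{m,b}$ (established in Proposition~\ref{prop:Ambproperties}) or, for completeness, from the uniqueness statement in Remark~\ref{rem:unicitelettrepremcomp}. The set of states and the initial/final state of $\Pi(\A_{m,b})$ coincide with those of $\A_{m,b}$; only the transitions change by forgetting the first component of each label.

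First I would handle completeness. Fix a state $i\in[\![0,m{-}1]\!]$ and a letter $e\in A_b$. By Remark~\ref{rem:unicitelettrepremcomp}, there exist (unique) $d\in A_b$ and $j\in[\![0,m{-}1]\!]$ such that $\delta_{m,b}(i,(d,e))=j$. Projecting this transition onto its second component yields a transition from $i$ to $j$ labeled by $e$ in $\Pi(\A_{m,b})$. This also shows that $\Pi(\A_{m,b})$ is deterministic, since the target state $j$ depends only on $i$ and $e$.

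Next, accessibility and coaccessibility are inherited from $\A_{m,b}$ by projection. Any path in $\A_{m,b}$ labeled by $(u,v)\in (A_b\times A_b)^*$ from state $p$ to state $q$ projects onto a path in $\Pi(\A_{m,b})$ labeled by $v\in A_b^*$ from $p$ to $q$. Since Proposition~\ref{prop:Ambproperties} gives, for every $i\in[\![0,m{-}1]\!]$, some $(u,v)$ that leads from $0$ to $i$ in $\A_{m,b}$ (accessibility) and some $(u',v')$ that leads from $i$ to $0$ in $\A_{m,b}$ (coaccessibility, since $0$ is the unique final state), their projections witness accessibility and coaccessibility of $i$ in $\Pi(\A_{m,b})$.

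There is no real obstacle here: the only non-trivial point is completeness, and it is exactly what Remark~\ref{rem:unicitelettrepremcomp} was set up to provide. Note, however, that the projection may destroy the ``disjoint states'' property of $\A_{m,b}$, since distinct transitions with the same second label get identified; so I would refrain from claiming minimality of $\Pi(\A_{m,b})$ at this point.
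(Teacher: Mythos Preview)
Your proof is correct and follows essentially the same approach as the paper: accessibility and coaccessibility are inherited from $\A_{m,b}$ via Proposition~\ref{prop:Ambproperties} by projecting paths, and completeness comes from the existence statement in Remark~\ref{rem:unicitelettrepremcomp} (the paper phrases this via Remark~\ref{rem:common}, which is just the reformulation $j=bi+e\bmod m$ of that same remark). Your additional observations about determinism and the potential loss of the disjoint-states property are accurate and anticipate exactly what the paper discusses next.
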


\begin{proof}
The accessibility and coaccessibility of the automaton $\Pi(\A_{m,b})$ are straightforward consequences of Proposition~\ref{prop:Ambproperties}. The fact that it is complete comes from Remark~\ref{rem:common}: for every state $i\in[\![0,m{-}1]\!]$ and every digit $e\in A_b$, there is a transition labeled by $e$ from $i$ to the state $bi+e\bmod m$.
\end{proof}

The automaton $\Pi(\A_{m,b})$ is not minimal: it is minimal if and only if $m$ and $b$ are coprime; see for example \cite{Alexeev}. In fact, whenever $m$ and $b$ are coprime, we have a stronger property than minimality as shown in the following proposition. This result will be useful in our future considerations.

\begin{proposition}\label{prop:projAmbdisj}
If $m$ and $b$ are coprime, then the automaton $\Pi(\A_{m,b})$ has disjoint states, and hence it is the minimal automaton of $\val_b^{-1}(m\N)$.
\end{proposition}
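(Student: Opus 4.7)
The plan is to reduce the disjoint-states property to a short divisibility argument, and then invoke the general minimality criterion recalled in Section~2.

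First, I would unpack what ``the language accepted from state $i$ in $\Pi(\A_{m,b})$'' looks like. By Remark~\ref{rem:common}, the transition function of $\Pi(\A_{m,b})$ satisfies $\delta(i,e)=(bi+e)\bmod m$, and a straightforward induction on $|v|$ gives $\delta(i,v)=(b^{|v|}i+\val_b(v))\bmod m$ for every $v\in A_b^*$. Since the only accepting state is $0$, the language accepted from state $i$ is
\[
    L_i=\{v\in A_b^*\colon m\mid b^{|v|}i+\val_b(v)\}.
\]

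Next, to show that $\Pi(\A_{m,b})$ has disjoint states under the assumption $\gcd(m,b)=1$, I would take distinct $i,j\in[\![0,m{-}1]\!]$ and suppose for contradiction that some word $v$ lies in $L_i\cap L_j$. Setting $n=|v|$, the two congruences $b^n i+\val_b(v)\equiv 0\pmod m$ and $b^n j+\val_b(v)\equiv 0\pmod m$ subtract to give $m\mid b^n(i-j)$. Since $\gcd(m,b)=1$ implies $\gcd(m,b^n)=1$, we obtain $m\mid (i-j)$, which together with $0\le i,j\le m{-}1$ forces $i=j$, contradicting our assumption. Thus $L_i\cap L_j=\emptyset$ whenever $i\ne j$.

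Finally, I would conclude as follows. By the preceding corollary, $\Pi(\A_{m,b})$ is complete, accessible and coaccessible. Coaccessibility together with the disjoint states property just established yields that $\Pi(\A_{m,b})$ is reduced (as noted in Section~2). A complete, accessible, reduced DFA is minimal, so $\Pi(\A_{m,b})$ is the minimal DFA of the language it accepts. That language is $\val_b^{-1}(m\N)$ by Remark~\ref{rem:common}, concluding the proof.

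There is really no substantial obstacle here: the only nontrivial ingredient is the elementary fact that $\gcd(m,b)=1$ implies $\gcd(m,b^n)=1$, which is exactly what makes the subtraction step collapse $i\equiv j\pmod m$ to $i=j$. The rest is a bookkeeping application of Lemma~\ref{lem:transitionsAmb} (through the inductive description of $\delta$) and the minimality criterion stated in the preliminaries.
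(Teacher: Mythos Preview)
Your argument is correct, and it is a bit more direct than the paper's. The paper lifts the problem back to $\A_{m,b}$: given $v\in L_i\cap L_j$ in $\Pi(\A_{m,b})$, it invokes Remark~\ref{rem:unicitepremcomp} to produce the unique words $u,u'$ with $(u,v)\in L_i$ and $(u',v)\in L_j$ in $\A_{m,b}$, then uses Lemma~\ref{lem:transitionsAmb} to get the equalities $b^{|v|}i+\val_b(v)=m\,\val_b(u)$ and $b^{|v|}j+\val_b(v)=m\,\val_b(u')$, subtracts, and reduces \emph{modulo $b^{|v|}$} to cancel the factor $m$ by coprimality, obtaining $\val_b(u)=\val_b(u')$ and hence $i=j$. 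You instead stay inside $\Pi(\A_{m,b})$, subtract the two congruences directly, and reduce \emph{modulo $m$} to cancel the factor $b^{|v|}$ by coprimality. Your route avoids the detour through the unprojected automaton and the auxiliary words $u,u'$; the paper's route, on the other hand, keeps the argument parallel to the proof of Proposition~\ref{prop:Ambproperties} and reuses the machinery already in place. Both hinge on exactly the same arithmetic fact, applied in dual ways.
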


\begin{proof}
Let $i,j\in[\![0,m{-}1]\!]$ and let $v \in A_b^*$ be a word accepted from both $i$ and $j$ in $\Pi(\A_{m,b})$. By Remark~\ref{rem:unicitepremcomp}, there exist unique words $u$ and $u'$ of the same length as $v$ such that $(u,v)$ and $(u',v)$ are accepted from $i$ and $j$ in $\A_{m,b}$ respectively. By Lemma~\ref{lem:transitionsAmb}, it is equivalent to say that
\[
	b^{|v|} i + \val_b (v) = m\, \val_b(u) 
	\quad \andrm  \quad 
	b^{|v|} j + \val_b (v) = m\, \val_b (u').
\]
Thus, we have
\begin{equation}\label{eqn:projAmbPropforte}
	b^{|v|} i - m\, \val_b (u) = b^{|v|} j -m\, \val_b( u').
\end{equation}
Therefore $m\, \val_b (u) \equiv m\, \val_b(u') \, \pmod{b^{|v|}}$. By using the hypothesis of coprimality of $m$ and $b$, we obtain that $\val_b (u) \equiv \val_b(u') \, \pmod{b^{|v|}}$. Since $\val_b(u)$ and $\val_b (u')$ are both less than $b^{|v|}$, we obtain the equality $\val_b (u) =\val_b (u')$. Finally, we get from \eqref{eqn:projAmbPropforte} that $i=j$, which proves that $\Pi(\A_{m,b})$ has disjoint states.
\end{proof}

%{\color{red} Adeline avait mis ici les résultats pour distinguer les grosses classes. Mais on n'a pas défini les classes encore.}

To end this section, we prove some useful properties of the automaton $\Pi(\A_{m,b})$ under the more restrictive hypotheses of this work: $b=2^p$ and $m=k2^z$ with $k$ odd. 

\begin{lemma}
\label{lem:k>1}
If $k>1$ and $n=|\rep_{2^p}\big((k-1)2^z\big)|$, then $pn\ge z$.
\end{lemma}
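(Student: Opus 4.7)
The lemma is essentially a short size estimate, so my plan is to unpack the definition of $n$ as a length of a base-$2^p$ expansion and then bound the integer $(k-1)2^z$ from below.

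First I would note that since $k>1$ and $k$ is a positive integer (it is odd), we have $k\ge 2$, hence $(k-1)2^z$ is a positive integer and therefore has a nonempty base-$2^p$ expansion, so $n\ge 1$ is well defined. Next, I would apply the standard characterization of the length of a base-$b$ expansion: a positive integer $N$ has a base-$b$ expansion of length $n$ if and only if $b^{n-1}\le N<b^n$. Applied here with $b=2^p$ and $N=(k-1)2^z$, this gives in particular
\[
   (k-1)2^z < 2^{pn}.
\]

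The second key observation is the trivial lower bound $(k-1)2^z\ge 2^z$, which follows from $k-1\ge 1$. Chaining the two inequalities yields $2^z<2^{pn}$, and since the exponential is strictly increasing this gives $z<pn$, hence $pn\ge z$ as desired.

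I do not expect any real obstacle: the content is a one-line size comparison. The only mildly subtle point is making sure that the hypothesis $k>1$ really is needed and used (otherwise $k=1$ would force $(k-1)2^z=0$, whose base-$2^p$ expansion is empty, and the conclusion $pn\ge z$ would fail for $z\ge 1$). Using $k\ge 2$ to get $k-1\ge 1$ handles this cleanly, so the proof reduces to the two inequalities above.
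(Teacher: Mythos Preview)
Your proof is correct and follows essentially the same idea as the paper's: both use that $k-1\ge 1$ to bound $(k-1)2^z$ from below by $2^z$ and compare with the size $2^{pn}$ forced by the length of the base-$2^p$ expansion. Your version is arguably more direct, since it avoids the floor/ceiling manipulation $\lfloor z/p\rfloor+1\ge\lceil z/p\rceil$ used in the paper and yields the (slightly stronger) strict inequality $pn>z$ immediately.
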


\begin{proof}
Then
\[
	n=\left\lfloor \log_{2^p}\big((k-1)2^z)\right\rfloor +1 
	=\left\lfloor \log_{2^p}(k-1)+\frac zp\right\rfloor +1
	\ge \left\lfloor \frac zp\right\rfloor+1
	\ge \left\lceil \frac zp\right\rceil.
\]
Thus $pn\geq p\big\lceil \frac zp\big\rceil \geq z$.
\end{proof}

For $k>1$ and $n=|\rep_{2^p}\big((k{-}1)2^z\big)|$, we let $\sigma$ be the permutation of the integers in $[\![0,k{-}1]\!]$ defined by $\sigma(j)=-j2^{pn-z}\bmod k$. Further, we define 
\[
	w_j=0^{n-|\rep_{2^p}(\sigma(j)2^z)|}\rep_{2^p}(\sigma(j)2^z)
\] 
for each $j\in [\![0,k{-}1]\!]$. Note that the words $w_j$ are well defined since, by the choice of $n$, we have $\sigma(j)2^z\le (k{-}1)2^z<2^{pn}$ for every $j\in[\![0,k{-}1]\!]$.

\begin{proposition}
\label{prop:jj'}
Suppose that $k>1$ and let $j,j'\in[\![0,k{-}1]\!]$. Then the word $w_j$ is accepted from $j'$ in the automaton $\Pi(\A_{m,2^p})$ if and only if $j=j'$. 
\end{proposition}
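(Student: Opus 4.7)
The plan is to unwind what it means for a word of length $n$ to be accepted from a state $j'$ in the projected automaton, translate this into a congruence modulo $m$, and then divide through by $2^z$ to reduce to a congruence modulo the odd part $k$, at which point the definition of $\sigma$ collapses the statement into a bijectivity argument.

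Concretely, by Remark~\ref{rem:common}, reading a digit $e$ from state $i$ in $\Pi(\A_{m,2^p})$ leads to the state $2^p i + e \bmod m$. An easy induction (or the analogue of Lemma~\ref{lem:transitionsAmb} for the projected automaton) then shows that reading a word $v \in A_{2^p}^*$ from state $i$ leads to state $2^{p|v|}\, i + \val_{2^p}(v) \bmod m$. Since the unique accepting state of $\Pi(\A_{m,2^p})$ is $0$, the word $w_j$ is accepted from $j'$ if and only if
\[
    2^{pn}\, j' + \val_{2^p}(w_j) \equiv 0 \pmod{m}.
\]
Now $w_j$ was chosen precisely so that $\val_{2^p}(w_j) = \sigma(j)\, 2^z$, and $m = k\, 2^z$, so the condition becomes $2^{pn} j' + \sigma(j)\, 2^z \equiv 0 \pmod{k\, 2^z}$.

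Here is where Lemma~\ref{lem:k>1} intervenes: because $pn \ge z$, I can write $2^{pn} = 2^z\, 2^{pn-z}$ and pull out the common factor $2^z$. The divisibility condition then becomes
\[
    2^{pn-z}\, j' + \sigma(j) \equiv 0 \pmod{k},
\]
or equivalently $\sigma(j) \equiv -j'\, 2^{pn-z} \equiv \sigma(j') \pmod{k}$. Since $\sigma(j), \sigma(j') \in [\![0, k{-}1]\!]$, this is the equality $\sigma(j) = \sigma(j')$. Finally, because $k$ is odd, the integer $2^{pn-z}$ is invertible modulo $k$, so the map $\sigma \colon j \mapsto -j\, 2^{pn-z} \bmod k$ is a bijection on $[\![0,k{-}1]\!]$, and I conclude $j = j'$.

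The only mildly delicate point is making sure the $2^z$ factor can be cancelled cleanly; this is where Lemma~\ref{lem:k>1} is indispensable, since without $pn \ge z$ the factor $2^{pn}$ would not contain enough powers of $2$ to match the modulus $k\, 2^z$. The converse direction ($j = j' \Rightarrow w_j$ accepted from $j'$) follows by reading the same chain of equivalences backwards, using that $\sigma(j') \equiv -j'\, 2^{pn-z} \pmod k$ by definition.
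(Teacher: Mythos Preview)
Your proof is correct and follows essentially the same route as the paper: translate acceptance in $\Pi(\A_{m,2^p})$ into the congruence $2^{pn}j' + \sigma(j)2^z \equiv 0 \pmod{k2^z}$, use Lemma~\ref{lem:k>1} to cancel $2^z$, and then exploit the invertibility of $2^{pn-z}$ modulo the odd integer $k$. The only cosmetic difference is that the paper writes the last step as $(j'-j)2^{pn-z}\equiv 0\pmod k$ directly, whereas you phrase it as $\sigma(j)=\sigma(j')$ together with the bijectivity of $\sigma$; these are the same argument.
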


\begin{proof}
Let $n=|\rep_{2^p}\big((k{-}1)2^z\big)|$. Then $|w_j|=n$ for all $j\in[\![0,k{-}1]\!]$ and from Lemma~\ref{lem:k>1}, we know that $pn\ge z$. The result follows from the following computations:
\begin{align*}
j' 2^{p|w_j|} +\val_{2^p}(w_j)\equiv 0\pmod m
		&\iff	j' 2^{pn} + \sigma(j)2^z\equiv 0\pmod{k2^z}\\
		&\iff 	j' 2^{pn-z} + \sigma(j)\equiv 0\pmod k \\
		&\iff 	j' 2^{pn-z} -j2^{pn-z} \equiv 0\pmod k\\
		&\iff	j\equiv j' \pmod k \\
		&\iff	j=j'.
\end{align*}
\end{proof}

\begin{proposition}
\label{prop:jj'-bis}
Suppose that $k>1$ and let $j,j'\in[\![0,k{-}1]\!]$. Then the word $w_j\rep_{2^p}(m)$ is accepted from $j'$ in the automaton $\Pi(\A_{m,2^p})$ if and only if $j=j'$. 
\end{proposition}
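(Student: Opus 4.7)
The plan is to reduce this statement to Proposition~\ref{prop:jj'} by exploiting the fact that $\val_{2^p}(\rep_{2^p}(m))=m\equiv 0\pmod m$, so that appending $\rep_{2^p}(m)$ to any word simply multiplies the landing state by the power $2^{pN}$, where $N:=|\rep_{2^p}(m)|$. I would first iterate Remark~\ref{rem:common} to record the formula: reading a word $v$ from state $i$ in $\Pi(\A_{m,2^p})$ lands in the state $\big(2^{p|v|}i+\val_{2^p}(v)\big)\bmod m$, so $v$ is accepted from $i$ if and only if $2^{p|v|}i+\val_{2^p}(v)\equiv 0\pmod m$.

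Next, setting $i:=\delta\big(j',w_j\big)$, I would observe that reading $\rep_{2^p}(m)$ from $i$ lands in $\big(2^{pN}i+m\big)\bmod m=2^{pN}i\bmod m$. Hence the acceptance of $w_j\rep_{2^p}(m)$ from $j'$ is equivalent to
\[
    2^{pN}i\equiv 0\pmod{k2^z}.
\]

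The third step is to simplify this divisibility to the cleaner condition $k\mid i$. Since $2^{pN}>m\geq 2^z$ we have $pN\geq z$, so the $2^z$ part of the modulus is absorbed by $2^{pN}$, while $\gcd(2^{pN},k)=1$ (as $k$ is odd) handles the $k$ part; both directions of the equivalence follow. Then, using $\sigma(j)\equiv -j\,2^{pn-z}\pmod k$ together with $pn\geq z$ from Lemma~\ref{lem:k>1}, one computes
\[
    i\equiv 2^{pn}j'+\sigma(j)\,2^z\equiv 2^{pn}(j'-j)\pmod k,
\]
and since $\gcd(2^{pn},k)=1$ and $j,j'\in[\![0,k{-}1]\!]$, the condition $k\mid i$ becomes $j=j'$.

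The only real piece of bookkeeping lies in the third step, which requires checking both that $2^{pN}$ contributes enough powers of $2$ to cover the $2^z$ in the modulus and that $\gcd(2^{pN},k)=1$; both are immediate from $m=k2^z$ with $k$ odd, so no new obstacle appears beyond the algebra already present in the proof of Proposition~\ref{prop:jj'}.
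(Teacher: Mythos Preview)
Your argument is correct and is essentially the paper's own proof: both amount to checking the congruence $j'2^{p(n+N)}+\sigma(j)2^{z+pN}\equiv 0\pmod{k2^z}$, stripping the factor $2^z$ using $pN\ge z$ (and $pn\ge z$), and then cancelling the remaining powers of $2$ modulo the odd number $k$. The only cosmetic difference is that you factor the computation through the intermediate state $i=\delta(j',w_j)$ before reading $\rep_{2^p}(m)$, whereas the paper writes out the full chain of equivalences for the concatenated word $x_j=w_j\rep_{2^p}(m)$ in one go.
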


\begin{proof}
Let $n=|\rep_{2^p}\big((k{-}1)2^z\big)|$, let $L=|\rep_{2^p}(m)|$ and, for each $j\in[\![0,k{-}1]\!]$, let $x_j=w_j\rep_{2^p}(m)$. From Lemma~\ref{lem:k>1}, we know that $pn\ge z$. Therefore, we have
\begin{align*}
j' 2^{p|x_j|} +\val_{2^p}(x_j)\equiv 0\pmod m
		&\iff 	j' 2^{p(n+L)} +\val_{2^p}(w_j)2^{pL}\equiv 0\pmod m \\
		&\iff	j' 2^{p(n+L)} + \sigma(j)2^{z+pL}\equiv 0\pmod{k2^z}\\
		&\iff 	j' 2^{p(n+L)-z} + \sigma(j)2^{pL}\equiv 0\pmod k \\
		&\iff 	j' 2^{pn-z} -j2^{pn-z} \equiv 0\pmod k\\
		&\iff	j\equiv j' \pmod k\\
		&\iff	j=j'
\end{align*}
and the result follows.
\end{proof}

\section{The product automaton $\A_{m,2^p}\times \A_{\T,2^p}$}

In this section, we study the product automaton $\A_{m,2^p} \times \A_{\T,2^p}$. Since the states of $\A_{m,2^p}$ are numbered from $0$ to $m{-}1$ and those of $\A_{\T,2^p}$ are $T$ and $B$, we denote the states of the product automaton by
\[
	(0,T),\ldots,(m{-}1,T) \andrm (0,B) , \ldots , (m{-}1, B).
\]
The transitions of $\A_{m,2^p} \times \A_{\T,2^p}$ are defined as follows. For $i,j\in[\![0,m{-}1]\!]$, $X,Y \in \{T,B\}$ and $d,e \in A_{2^p}$, there is a transition labeled by $(d,e)$ from the state $(i,X)$ to the state $(j,Y)$ if and only if
\[
	2^p i + e = md + j \quad \andrm \quad 
	Y = 	X_d.
\]
We denote by $\delta_\times$ the (partial) transition function of this product automaton. The state $(0,T)$ is both initial and final, and there is no other final state.

%\begin{remark}\label{rem:produnicitelettrepremcomp}
%We resume Remark~\ref{rem:unicitelettrepremcomp}. In the product automaton $\A_{m,2^p}\times\A_{\T,2^p}$, for each $i\in[\![0,m{-}1]\!]$, $X \in \{T,B\}$ and $e \in A_{2^p}$, there exist unique $j\in[\![0,m{-}1]\!]$, $Y\in\{T,B\}$ and $d\in A_{2^p}$ such that there is a transition labeled by $(d,e)$ from $(i,X)$ to $(j,Y)$.
%\end{remark}

\begin{lemma}
\label{lem:transitionsProd}
For all $i,j\in[\![0,m{-}1]\!]$, $X ,Y \in \{T,B\}$ and $(u,v)\in (A_{2^p}\times A_{2^p})^*$, we have
$\delta_\times((i,X),(u,v))=(j,Y)$ if and only if
\[
	2^{p\,|(u,v)|}\, i + \val_{2^p}(v) = m\, \val_{2^p}(u) + j 
	\quad \andrm \quad
	Y= X_{\val_{2^p}(u)}.
\]
\end{lemma}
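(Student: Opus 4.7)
The plan is to derive this characterization by combining the two previously established lemmas, namely Lemma~\ref{lem:transitionsAmb} which describes the transitions in $\A_{m,b}$ and Lemma~\ref{lem:transitionsTM} which describes those in $\A_{\T,2^p}$. Indeed, since $\A_{m,2^p}\times \A_{\T,2^p}$ is a product automaton, its transition function acts componentwise on pairs of states, so the lemma should reduce to a parallel application of the two component results.

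First I would observe that by the very definition of the product, for any letter $(d,e)\in A_{2^p}\times A_{2^p}$ we have $\delta_\times((i,X),(d,e))=(j,Y)$ if and only if $\delta_{m,2^p}(i,(d,e))=j$ and $\delta_{\T,2^p}(X,(d,e))=Y$. A straightforward induction on the length of $(u,v)$ then extends this equivalence to arbitrary words: $\delta_\times((i,X),(u,v))=(j,Y)$ if and only if $\delta_{m,2^p}(i,(u,v))=j$ and $\delta_{\T,2^p}(X,(u,v))=Y$. The base case $|(u,v)|=0$ is trivial and the inductive step amounts to decomposing $(u,v)=(u',d)(e,v')$-style and invoking the one-letter case together with the induction hypothesis.

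Having this, I would conclude by invoking Lemma~\ref{lem:transitionsAmb} to rewrite the first condition $\delta_{m,2^p}(i,(u,v))=j$ as the arithmetic identity
\[
    2^{p\,|(u,v)|}\,i+\val_{2^p}(v)=m\,\val_{2^p}(u)+j,
\]
and Lemma~\ref{lem:transitionsTM} to rewrite the second condition $\delta_{\T,2^p}(X,(u,v))=Y$ as $Y=X_{\val_{2^p}(u)}$. Combining the two equivalences yields the claimed characterization.

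I do not expect any real obstacle here: the statement is essentially a bookkeeping lemma that combines two already-established facts, and the only genuine content is the componentwise behavior of the product construction, which is standard and follows by an immediate induction.
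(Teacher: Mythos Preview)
Your proposal is correct and follows exactly the same approach as the paper, which simply states that the result follows by combining Lemmas~\ref{lem:transitionsTM} and~\ref{lem:transitionsAmb}. Your additional remark about the componentwise behavior of the product automaton (and the trivial induction extending it from letters to words) is standard and makes explicit what the paper leaves implicit.
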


\begin{proof}
It suffices to combine Lemmas~\ref{lem:transitionsTM} and~\ref{lem:transitionsAmb}.
\end{proof}

In Figure~\ref{fig:product}, we have depicted the automaton $\A_{6,4} \times \A_{\T,4}$, as well as the automata $\A_{6,4}$ and $\A_{\T,4}$, which we have placed in such a way that the labels of the product automata can be easily deduced. For clarity, states are named $iX$ instead of $(i,X)$. We have drawn a full cycle in purple. It is of course not the only such cycle. This shows that the automaton $\A_{6,4} \times \A_{\T,4}$ is accessible and coaccessible. It will be true in general for the product automaton $\A_{m,2^p} \times \A_{\T,2^p}$. We give a proof of this fact below.

\begin{sidewaysfigure}
\begin{center}
\begin{minipage}{\linewidth} 
\begin{tikzpicture}[scale=0.8]
\tikzstyle{every node}=[shape=circle, fill=none, draw=black,
minimum size=30pt, inner sep=2pt]
\node(0) at (0,6.5) {$0$};
\node(1) at (3.5,6.5) {$1$};
\node(2) at (7,6.5) {$2$};
\node(3) at (10.5,6.5) {$3$};
\node(4) at (14,6.5) {$4$};
\node(5) at (17.5,6.5) {$5$};
\node(T) at (-5.5,0) {$T$};
\node(B) at (-5.5,-4.5) {$B$};
\node(0T) at (0,0) {$0T$};
\node(1T) at (3.5,0) {$1T$};
\node(2T) at (7,0) {$2T$};
\node(3T) at (10.5,0) {$3T$};
\node(4T) at (14,0) {$4T$};
\node(5T) at (17.5,0) {$5T$};
\node(0B) at (0,-4.5) {$0B$};
\node(1B) at (3.5,-4.5) {$1B$};
\node(2B) at (7,-4.5) {$2B$};
\node(3B) at (10.5,-4.5) {$3B$};
\node(4B) at (14,-4.5) {$4B$};
\node(5B) at (17.5,-4.5) {$5B$};
\tikzstyle{every node}=[shape=circle, fill=none, draw=black,
minimum size=25pt, inner sep=2pt]
\node at (0,0) {};
\node at (0,6.5) {};
\node at (-5.5,0) {};

\tikzstyle{etiquettedebut}=[very near start,rectangle,fill=black!20]
\tikzstyle{etiquettemilieu}=[midway,rectangle,fill=black!20]
\tikzstyle{every path}=[color=black, line width=0.5 pt]
\tikzstyle{every node}=[shape=circle, minimum size=5pt, inner sep=2pt]

%A_6,4
\draw [->] (-1.5,6.5) to node {} (0); % fleche de l'etat initial
%boucles
\draw [->] (0) to [loop above] node [left,rectangle,fill=black!20] {$(0,0)$} (0);
\draw [->] (1) to [loop below] node [left,rectangle,fill=black!20] {$(1,3)$} (1);
\draw [->] (2) to [loop left] node [left,rectangle,fill=black!20] {$(1,0)$} (2);
\draw [->] (3) to [loop below] node [left,rectangle,fill=black!20] {$(2,3)$} (3);
\draw [->] (4) to [loop above] node [left,rectangle,fill=black!20] {$(2,0)$} (4);
\draw [->] (5) to [loop above] node [left,rectangle,fill=black!20] {$(3,3)$} (5);
%autres transitions
\draw [->] (0) to [bend left=15] node [sloped,etiquettemilieu] {$(0,1)$} (1);
\draw [->] (0) to [bend left=30] node [sloped,etiquettemilieu] {$(0,2)$} (2);
\draw [->] (0) to [bend left=45] node [sloped,etiquettedebut] {$(0,3)$} (3);
\draw [->] (1) to [bend left=30] node [sloped,pos=0.37,rectangle,fill=black!20] {$(0,0)$} (4);
\draw [->] (1) to [bend left=45] node [sloped,etiquettedebut] {$(0,1)$} (5);
\draw [->] (1) to [bend left=15] node [sloped,etiquettemilieu] {$(1,2)$} (0);
\draw [->] (2) to [bend left=15] node [sloped,etiquettemilieu] {$(1,1)$} (3);
\draw [->] (2) to [bend left=30] node [sloped,etiquettemilieu] {$(1,2)$} (4);
\draw [->] (2) to [bend left=45] node [sloped,etiquettemilieu] {$(1,3)$} (5);
\draw [->] (3) to [bend left=45] node [sloped,etiquettemilieu] {$(2,0)$} (0);
\draw [->] (3) to [bend left=40] node [sloped,pos=0.45,rectangle,fill=black!20] {$(2,1)$} (1);
\draw [->] (3) to [bend left=15] node [sloped,etiquettemilieu] {$(2,2)$} (2);
\draw [->] (4) to [bend left=15] node [sloped,etiquettemilieu] {$(2,1)$} (5);
\draw [->] (4) to [bend left=45] node [sloped,etiquettemilieu] {$(3,2)$} (0);
\draw [->] (4) to [bend left=40] node [sloped,pos=0.5,rectangle,fill=black!20] {$(3,3)$} (1);
\draw [->] (5) to [bend left=40] node [sloped,pos=0.45,rectangle,fill=black!20] {$(3,0)$} (2);
\draw [->] (5) to [bend left=35] node [sloped,etiquettemilieu] {$(3,1)$} (3);
\draw [->] (5) to [bend left=15] node [sloped,etiquettemilieu] {$(3,2)$} (4);

%A_T,4
\draw [->] (-7,0) to node {} (T); % fleche de l'etat initial
\draw [->] (T) to [loop above] node [etiquettemilieu] {\begin{tabular}{c}
$(0,0),(0,1),(0,2),(0,3)$ \\
$(3,0),(3,1),(3,2),(3,3)$ \\
\end{tabular}} (T);
\draw [->] (B) to [loop below] node [etiquettemilieu] {\begin{tabular}{c}
$(0,0),(0,1),(0,2),(0,3)$ \\
$(3,0),(3,1),(3,2),(3,3)$ \\
\end{tabular}} (B);
\draw [->] (T) to [bend right=30] node [left=-0.3cm,etiquettemilieu] {\begin{tabular}{c}
$(1,0),(1,1),$\\$(1,2),(1,3)$ \\
$(2,0),(2,1),$\\$(2,2),(2,3)$ \\
\end{tabular}} (B);
\draw [->] (B) to [bend right=30] node [right=-0.3cm,etiquettemilieu] {\begin{tabular}{c}
$(1,0),(1,1),$\\$(1,2),(1,3)$ \\
$(2,0),(2,1),$\\$(2,2),(2,3)$ \\
\end{tabular}} (T);

%Produit
\draw [->] (-1.5,0) to node {} (0T); % fleche de l'etat initial
%boucles
\draw [->] (0T) to [loop above] node [] {} (0T);
\draw [->] (5T) to [loop above] node [] {} (5T);
\draw [->] (0B) to [loop below] node [] {} (0B);
\draw [->] (5B) to [loop above] node [] {} (5B);
%autres
\draw [->] (0T) to [] node [] {} (1T);
\draw [purple,->] (0T) to [bend left=20] node [] {} (2T);
\draw [->] (0T) to [bend left=25] node [] {} (3T);
\draw [->] (1T) to [bend left=20] node [] {} (4T);
\draw [->] (1T) to [bend left=30] node [] {} (5T);
\draw [purple,->] (1T) to [] node [] {} (0B);
\draw [->] (1T) to [bend left=15] node [] {} (1B);
\draw [->] (2T) to [bend left=15] node [] {} (2B);
\draw [->] (2T) to [bend left=5] node [] {} (3B);
\draw [purple,->] (2T) to [] node [] {} (4B);
\draw [->] (2T) to [] node [] {} (5B);
\draw [->] (3T) to [] node [] {} (0B);
\draw [->] (3T) to [] node [] {} (1B);
\draw [->] (3T) to [bend left=5] node [] {} (2B);
\draw [purple,->] (3T) to [bend left=15] node [] {} (3B);
\draw [->] (4T) to [bend left=15] node [] {} (4B);
\draw [->] (4T) to [] node [] {} (5B);
\draw [->] (4T) to [bend right=35] node [] {} (0T);
\draw [purple,->] (4T) to [bend right=25] node [] {} (1T);
\draw [->] (5T) to [bend right=25] node [] {} (2T);
\draw [->] (5T) to [bend right=20] node [] {} (3T);
\draw [purple,->] (5T) to [] node [] {} (4T);

\draw [purple,->] (0B) to [] node [] {} (1B);
\draw [->] (0B) to [bend right=20] node [] {} (2B);
\draw [->] (0B) to [bend right=25] node [] {} (3B);
\draw [->] (1B) to [bend right=20] node [] {} (4B);
\draw [purple,->] (1B) to [bend right=30] node [] {} (5B);
\draw [->] (1B) to [] node [] {} (0T);
\draw [->] (1B) to [bend left=15] node [] {} (1T);
\draw [->] (2B) to [bend left=15] node [] {} (2T);
\draw [purple,->] (2B) to [bend left=5] node [] {} (3T);
\draw [->] (2B) to [] node [] {} (4T);
\draw [->] (2B) to [] node [] {} (5T);
\draw [purple,->] (3B) to [] node [] {} (0T);
\draw [->] (3B) to [] node [] {} (1T);
\draw [->] (3B) to [bend left=5] node [] {} (2T);
\draw [->] (3B) to [bend left=15] node [] {} (3T);
\draw [->] (4B) to [bend left=15] node [] {} (4T);
\draw [purple,->] (4B) to [] node [] {} (5T);
\draw [->] (4B) to [bend left=35] node [] {} (0B);
\draw [->] (4B) to [bend left=25] node [] {} (1B);
\draw [purple,->] (5B) to [bend left=25] node [] {} (2B);
\draw [->] (5B) to [bend left=20] node [] {} (3B);
\draw [->] (5B) to [] node [] {} (4B);
\end{tikzpicture}
\caption{The product automaton $\A_{6,4} \times \A_{\T,4}$}
\label{fig:product}
\end{minipage}
\end{center}
\end{sidewaysfigure}
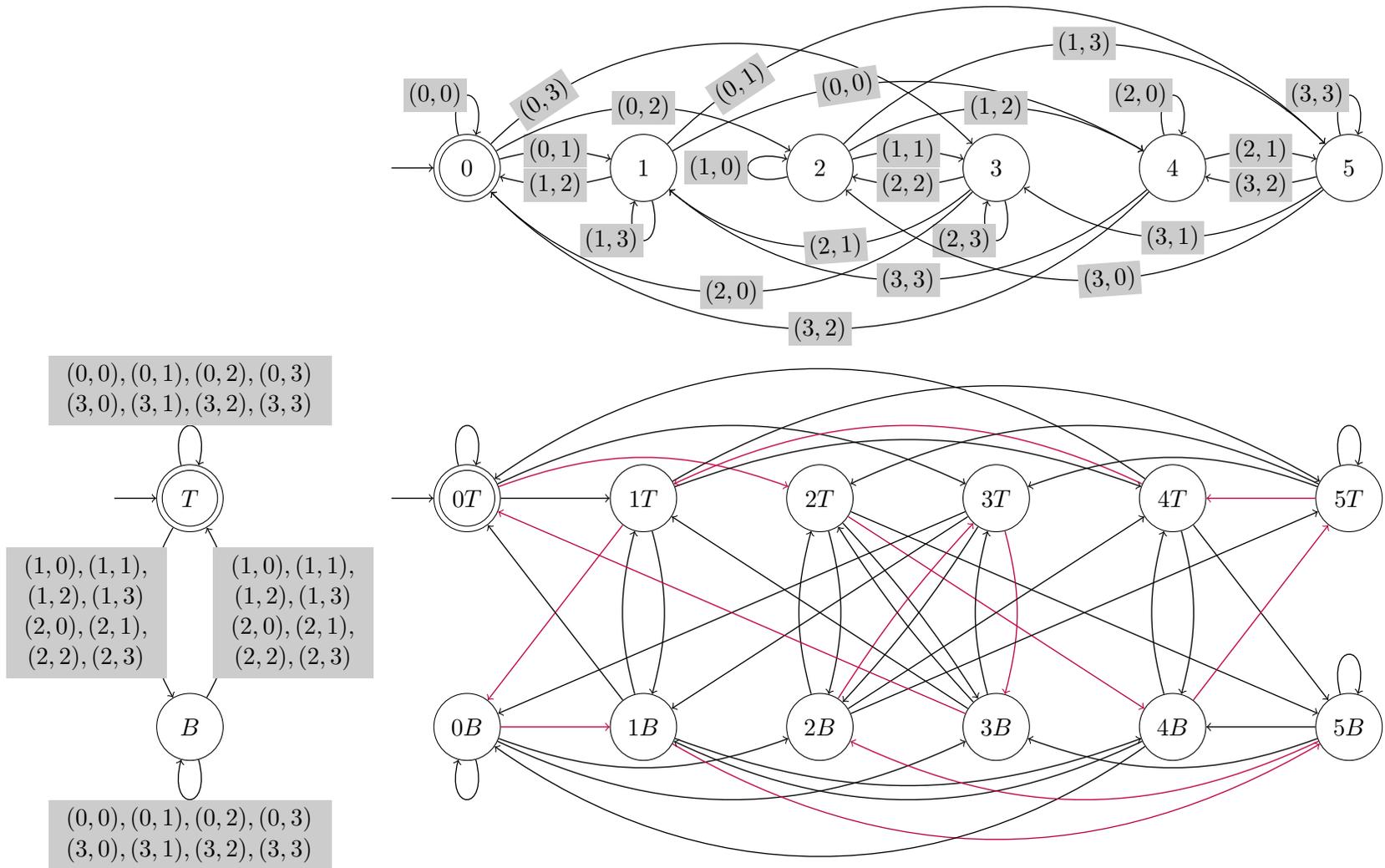

\begin{corollary}
\label{cor:0Baccessible}
The word $\rep_{2^p}(1,m)$ is accepted from the state $(0,B)$ in $\A_{m,2^p} \times \A_{\T,2^p}$. In particular, the state $(0,B)$ is coaccessible in $\A_{m,2^p} \times \A_{\T,2^p}$.
\end{corollary}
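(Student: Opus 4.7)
The plan is to apply Lemma~\ref{lem:transitionsProd} directly to the input $\rep_{2^p}(1,m)$ starting from the state $(0,B)$ and show that the resulting state is the unique final state $(0,T)$. Since there will then be at least one word accepted from $(0,B)$, the coaccessibility claim follows immediately.

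More concretely, let $L=|\rep_{2^p}(m)|$ and write $\rep_{2^p}(1,m)=(u,v)$ with $u=0^{L-1}\mathtt{1}$ and $v=\rep_{2^p}(m)$, so that $|(u,v)|=L$, $\val_{2^p}(u)=1$ and $\val_{2^p}(v)=m$. Apply Lemma~\ref{lem:transitionsProd} with $i=0$, $X=B$: the arithmetic condition becomes
\[
	2^{pL}\cdot 0 + m = m\cdot 1 + j,
\]
which forces $j=0$, and the Thue–Morse condition gives $Y=B_{\val_{2^p}(u)}=B_{1}$.

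The only nontrivial point is to identify $B_1$. Since $\rep_2(1)=\mathtt{1}$ contains an odd number of $\mathtt{1}$'s, we have $1\notin\T$, so by the definition of the notation $X_n$, $B_1=\overline{B}=T$. Hence $\delta_\times((0,B),(u,v))=(0,T)$, which is the final state of $\A_{m,2^p}\times\A_{\T,2^p}$. This proves that $\rep_{2^p}(1,m)$ is accepted from $(0,B)$, and in particular that $L_{(0,B)}\ne\emptyset$, i.e.\ $(0,B)$ is coaccessible.

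There is no real obstacle here; the statement is essentially a bookkeeping consequence of Lemma~\ref{lem:transitionsProd} once one notices that multiplying $1$ by $m$ introduces exactly one extra occurrence of the digit $\mathtt{1}$ at the top (via $u=0^{L-1}\mathtt{1}$), which flips the $\T$-component from $B$ back to $T$ and simultaneously consumes the value $m$ exactly once on the first component.
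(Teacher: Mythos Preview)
Your proof is correct and follows the same approach as the paper, which simply states that the result follows from Lemma~\ref{lem:transitionsProd}. You have merely spelled out the computation that the paper leaves implicit: with $(u,v)=\rep_{2^p}(1,m)$ one gets $j=0$ from the arithmetic condition and $Y=B_1=T$ since $1\notin\T$.
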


\begin{proof}
This follows from Lemma~\ref{lem:transitionsProd}.
\end{proof}

\begin{lemma}
\label{lem:proddisj1}
For each $i \in \{0 , \ldots , m{-}1 \}$, the states $(i,T)$ et $(i,B)$ of the automaton $\A_{m,2^p} \times \A_{\T,2^p}$ are disjoint.
\end{lemma}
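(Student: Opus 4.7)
My plan is to argue by contradiction using Lemma~\ref{lem:transitionsProd}, which characterizes the behaviour of $\delta_\times$ on arbitrary words. Since $(0,T)$ is the only final state of $\A_{m,2^p}\times \A_{\T,2^p}$, a word $(u,v)\in (A_{2^p}\times A_{2^p})^*$ belongs to $L_{(i,X)}$ precisely when $\delta_\times((i,X),(u,v))=(0,T)$. By Lemma~\ref{lem:transitionsProd}, this is equivalent to the conjunction of the arithmetic identity
\[
	2^{p\,|(u,v)|}\, i + \val_{2^p}(v) = m\,\val_{2^p}(u)
\]
and the Thue-Morse condition $T=X_{\val_{2^p}(u)}$.

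Suppose now, for contradiction, that some word $(u,v)$ lies in $L_{(i,T)} \cap L_{(i,B)}$. The arithmetic identity above is independent of $X$, so it is satisfied in both cases and produces no obstruction; this is consistent with Remark~\ref{rem:unicitepremcomp}, which tells us that $u$ is in fact uniquely determined by $i$ and $v$. The substantive information is therefore contained in the second component. Specializing the condition $T=X_{\val_{2^p}(u)}$ to $X=T$ and $X=B$ gives
\[
	T = T_{\val_{2^p}(u)} \quad \andrm \quad T = B_{\val_{2^p}(u)}.
\]
By the definition of $X_n$, the first equality forces $\val_{2^p}(u)\in \T$ whereas the second forces $\val_{2^p}(u)\notin \T$, which is the desired contradiction.

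There is no real obstacle here: once the description of accepting runs provided by Lemma~\ref{lem:transitionsProd} is in place, the disjointness is just the observation that the ``Thue-Morse coordinate'' of the target state is entirely determined by the input word, independently of the starting coordinate in $\{T,B\}$. Hence $L_{(i,T)}\cap L_{(i,B)}=\emptyset$ for every $i\in[\![0,m{-}1]\!]$.
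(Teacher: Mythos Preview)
Your argument is correct. The paper's own proof is a single sentence invoking the fact that $\A_{\T,2^p}$ has disjoint states: if $(u,v)$ were accepted from both $(i,T)$ and $(i,B)$ in the product, then $(u,v)$ would be accepted from both $T$ and $B$ in $\A_{\T,2^p}$, contradicting that lemma. Your proof via Lemma~\ref{lem:transitionsProd} is simply an explicit unpacking of that same observation, so the two approaches coincide in substance.
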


\begin{proof}
This comes from the fact that $\A_{\T,2^p}$ has disjoint states.
\end{proof}

\begin{lemma}
\label{lem:proddisj2}
For distinct $i,j \in [\![0,m{-}1]\!]$ and for $X,Y \in\{T,B\}$, the states $(i,X)$ et $(j,Y)$ are disjoint in $\A_{m,2^p} \times \A_{\T,2^p}$.
\end{lemma}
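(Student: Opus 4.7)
The plan is to reduce the statement directly to the disjointness of states in $\A_{m,2^p}$, which was established in Proposition~\ref{prop:Ambproperties}. The key observation is that the product automaton has a single final state, namely $(0,T)$, so membership in $L_{(i,X)}$ forces the first component to reach $0$ in $\A_{m,2^p}$, regardless of the second component.

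More concretely, I would argue by contradiction: suppose that there exists $(u,v)\in (A_{2^p}\times A_{2^p})^*$ lying in $L_{(i,X)}\cap L_{(j,Y)}$. By definition of the product transitions and because $(0,T)$ is the only accepting state, this means that $\delta_\times((i,X),(u,v))=(0,T)$ and $\delta_\times((j,Y),(u,v))=(0,T)$. Projecting on the first component, these equalities imply $\delta_{m,2^p}(i,(u,v))=0$ and $\delta_{m,2^p}(j,(u,v))=0$. Hence $(u,v)\in L_i\cap L_j$ in $\A_{m,2^p}$. By Proposition~\ref{prop:Ambproperties}, $\A_{m,2^p}$ has disjoint states, so $i=j$, contradicting our assumption that $i\ne j$.

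There is no real obstacle here: the lemma is essentially a transfer of the disjoint-states property from $\A_{m,2^p}$ to the product. The only point worth being careful about is to invoke the fact that the single accepting state of the product is $(0,T)$, so that acceptance in the product forces acceptance to the state $0$ in the first coordinate (otherwise the argument would need to handle more final states). The companion Lemma~\ref{lem:proddisj1} handles the remaining case where $i=j$ but $X\ne Y$, using disjointness of $T$ and $B$ in $\A_{\T,2^p}$; together they yield that every pair of distinct states in $\A_{m,2^p}\times\A_{\T,2^p}$ is disjoint.
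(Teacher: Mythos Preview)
Your proof is correct and follows essentially the same approach as the paper: both argue that a word accepted from $(i,X)$ and $(j,Y)$ in the product must be accepted from $i$ and $j$ in $\A_{m,2^p}$, and then invoke Proposition~\ref{prop:Ambproperties} to conclude $i=j$. The extra remarks you make about the unique final state and the companion Lemma~\ref{lem:proddisj1} are accurate but not needed for this lemma itself.
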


\begin{proof}
Let $i,j \in [\![0,m{-}1]\!]$ and $X,Y \in\{T,B\}$. Suppose that there exists a word $(u,v)\in(A_{2^p}\times A_{2^p})^*$ which is accepted from both $(i,X)$ and $(j,Y)$ in $\A_{m,2^p} \times \A_{\T,2^p}$. Then $(u,v)$ is accepted from both $i$ and $j$ in $\A_{m,2^p}$. Since the automaton $\A_{m,2^p}$ has disjoint states by Proposition~\ref{prop:Ambproperties}, this implies that $i=j$.
\end{proof}

We are now ready to establish the main properties of the product automaton $\A_{m,2^p} \times \A_{\T,2^p}$.

\begin{proposition}
\label{prop:prodaccessible}
The automaton $\A_{m,2^p} \times \A_{\T,2^p}$ is complete, accessible, coaccessible and has disjoint states. In particular, it is the minimal automaton of $\val_{2^p}^{-1}(\{(t,mt)\colon t\in\T\})$.
\end{proposition}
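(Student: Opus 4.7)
The strategy is to deduce each of the four properties from those of the factor automata, using Corollary~\ref{cor:0Baccessible} and the transition formula of Lemma~\ref{lem:transitionsProd}. The central observation for accessibility is that, by Proposition~\ref{prop:Ambproperties}, for each state $i$ of $\A_{m,2^p}$ there is a word $(u,v)$ leading from $0$ to $i$ in the first factor; applied in the product, this word sends $(0,X)$ to $(i, X_{\val_{2^p}(u)})$ by Lemma~\ref{lem:transitionsProd}. Consequently, once both $(0,T)$ and $(0,B)$ are known to be accessible, each pair $(i,T),(i,B)$ is reached by starting from the appropriate springboard. To access $(0,B)$ from $(0,T)$, I would mirror the proof of Corollary~\ref{cor:0Baccessible}: applying Lemma~\ref{lem:transitionsProd} to $\rep_{2^p}(1,m)$, the equation $m\cdot 1 = 0 + m$ shows the first coordinate returns to $0$, and since $1\notin\T$ the second coordinate flips from $T$ to $T_1=B$.

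Coaccessibility is established symmetrically. For each $(i,X)$, the coaccessibility of $\A_{m,2^p}$ supplies a word $(u,v)$ from $i$ to $0$ in the first factor, and in the product this word sends $(i,X)$ to $(0, X_{\val_{2^p}(u)})$. The target is either the unique final state $(0,T)$, giving coaccessibility at once, or the state $(0,B)$, which is coaccessible by Corollary~\ref{cor:0Baccessible}. Disjointness of states is then a straightforward packaging of Lemmas~\ref{lem:proddisj1} and~\ref{lem:proddisj2}, which together cover all pairs of distinct states.

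The mildly subtle point is completeness, since $\A_{m,2^p}$ itself is not complete on $A_{2^p}\times A_{2^p}$ by Remark~\ref{rem:unicitelettrepremcomp}; I read the statement through the same trim-minimal convention used in Section~5, i.e.\ the product is the automaton obtained from the complete minimal DFA by removing the unique non-coaccessible sink that would collect the missing transitions. With this convention in place, minimality is immediate: a coaccessible DFA with disjoint states is reduced, so accessibility, coaccessibility, and disjointness together identify $\A_{m,2^p}\times\A_{\T,2^p}$ as the (trim) minimal automaton of $\val_{2^p}^{-1}(\{(t,mt)\colon t\in\T\})$. The only step requiring any genuine thought is the accessibility of $(0,B)$; every other claim reduces transparently to a previously established lemma or proposition.
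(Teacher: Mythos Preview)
Your proposal is correct and follows essentially the same route as the paper: coaccessibility is argued exactly as the paper does (reach $(0,T)$ or $(0,B)$ via the coaccessibility of $\A_{m,2^p}$, then invoke Corollary~\ref{cor:0Baccessible}), and disjointness is the same combination of Lemmas~\ref{lem:proddisj1} and~\ref{lem:proddisj2}. For accessibility the paper is marginally more direct, giving explicit one-shot words $\rep_{2^p}(0,i)$ and $\rep_{2^p}(1,m+i)$ for $(i,T)$ and $(i,B)$ rather than your two-step springboard through $(0,B)$, but this is a cosmetic difference. Your discussion of completeness is in fact more careful than the paper's proof, which asserts completeness in the statement yet does not address it; your reading through the trim-minimal convention of Section~5 is the right way to reconcile this with Remark~\ref{rem:unicitelettrepremcomp}.
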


\begin{proof}
By construction of the product automaton and since
\[
	\{(n,mn)\colon n\in \N\}
	\cap \big(\T\times \N\big)
	=\{(t,mt)\colon t\in\T\},
\]
we get that the product automaton $\A_{m,2^p} \times \A_{\T,2^p}$ accepts the language
\[
	\val_{2^p}^{-1}(\{(t,mt)\colon t\in\T\}).
\]
By Lemma~\ref{lem:transitionsProd}, we can check that for every $i\in[\![0,m{-}1]\!]$, the states $(i,T)$ and $(i,B)$ are  accessible thanks to the word $\rep_{2^p}(0,i)$ and $\rep_{2^p}(1,m+i)$ respectively. Hence, $\A_{m,2^p} \times \A_{\T,2^p}$ is accessible. To show the coaccessibility, we now fix some $i\in[\![0,m{-}1]\!]$ and $X \in \{T,B \}$. By Proposition~\ref{prop:Ambproperties}, we already know that the automaton $\A_{m,2^p}$ is coaccessible. Therefore, we can find $(u,v)\in(A_{2^p}\times A_{2^p})^*$ such that there is a path labeled by $(u,v)$ from $i$ to $0$ in $\A_{m,2^p}$. Thus, by reading $(u,v)$ from the state $(i,X)$ in $\A_{m,2^p} \times \A_{\T,2^p}$, we reach either the state $(0,T)$ or the state $(0,B)$. If we reach $(0,T)$, then the state $(i,X)$ is coaccessible. If we reach $(0,B)$ instead, then we may apply Corollary~\ref{cor:0Baccessible} in order to obtain that $(i,X)$ is coaccessible as well. Finally, in order to see that $\A_{m,2^p} \times \A_{\T,2^p}$ has disjoint states, it suffices to combine Lemmas~\ref{lem:proddisj1} and~\ref{lem:proddisj2}. 
\end{proof}

\section{The projection $\Pi \left( \A_{m,2^p} \times \A_{\T,2^p} \right)$ of the product automaton}

The aim of this section is to provide a DFA accepting the language $\val_{2^p}^{-1}(m\T)$. This automaton is denoted by $\Pi \left( \A_{m,2^p} \times \A_{\T,2^p} \right)$ and is defined from the automaton $\A_{m,2^p} \times \A_{\T,2^p}$ by only considering the second component of each label. Formally, the states of $\Pi \left( \A_{m,2^p} \times \A_{\T,2^p} \right)$ are 
\[
	(0,T), \ldots , (m{-}1, T) \andrm (0,B) , \ldots , (m{-}1, B),
\]
the state $(0,T)$ is both initial and final and no other state is final, and the transitions are defined as follows. For $i,j \in [\![0,m{-}1]\!]$, $X,Y \in \{T,B\}$ and $e\in A_{2^p}$, there is a transition labeled by $e$ from the state $(i,X)$ to the state $(j,Y)$ if and only if there exists $d \in A_{2^p}$ such that
\[
	2^p i + e = md + j \quad \andrm \quad 
	Y = 	X_d.
\]

\begin{example}
The automata $\A_{6,4} \times \A_{\T,4}$ and $\Pi \left( \A_{6,4} \times \A_{\T,4} \right)$ are depicted in Figures~\ref{fig:product} and~\ref{fig:projected-aut} respectively. In Figure~\ref{fig:projected-aut}, all edges labeled by $0$ ($1,2$ and $3$ respectively) are represented in black (blue, red and green respectively).
\begin{figure}[htb]
\centering
\begin{tikzpicture}[scale=0.75]
\tikzstyle{every node}=[shape=circle, fill=none, draw=black,
minimum size=30pt, inner sep=2pt]
\node(0T) at (0,0) {$0T$};
\node(1T) at (3.5,0) {$1T$};
\node(2T) at (7,0) {$2T$};
\node(3T) at (10.5,0) {$3T$};
\node(4T) at (14,0) {$4T$};
\node(5T) at (17.5,0) {$5T$};
\node(0B) at (0,-4.5) {$0B$};
\node(1B) at (3.5,-4.5) {$1B$};
\node(2B) at (7,-4.5) {$2B$};
\node(3B) at (10.5,-4.5) {$3B$};
\node(4B) at (14,-4.5) {$4B$};
\node(5B) at (17.5,-4.5) {$5B$};
\tikzstyle{every node}=[shape=circle, fill=none, draw=black,
minimum size=25pt, inner sep=2pt]
\node at (0,0) {};

\tikzstyle{etiquettedebut}=[very near start,rectangle,fill=black!20]
\tikzstyle{etiquettemilieu}=[midway,rectangle,fill=black!20]
\tikzstyle{every path}=[color=black, line width=0.5 pt]
\tikzstyle{every node}=[shape=circle, minimum size=5pt, inner sep=2pt]
%Produit
\draw [->] (-1.5,0) to node {} (0T); % fleche de l'etat initial
%boucles
\draw [->] (0T) to [loop above] node [] {$0$} (0T);
\draw [green,->] (5T) to [loop above] node [] {} (5T);
\draw [->] (0B) to [loop below] node [] {} (0B);
\draw [green,->] (5B) to [loop above] node [] {} (5B);
%autres
\draw [blue,->] (0T) to [] node [above=-0.1] {$1$} (1T);
\draw [red,->] (0T) to [bend left=20] node [above=-0.1] {$2$} (2T);
\draw [green,->] (0T) to [bend left=25] node [above] {$3$} (3T);
\draw [->] (1T) to [bend left=20] node [] {} (4T);
\draw [blue,->] (1T) to [bend left=30] node [] {} (5T);
\draw [red,->] (1T) to [] node [] {} (0B);
\draw [green,->] (1T) to [bend left=15] node [] {} (1B);
\draw [->] (2T) to [bend left=15] node [] {} (2B);
\draw [blue,->] (2T) to [bend left=5] node [] {} (3B);
\draw [red,->] (2T) to [] node [] {} (4B);
\draw [green,->] (2T) to [] node [] {} (5B);
\draw [->] (3T) to [] node [] {} (0B);
\draw [blue,->] (3T) to [] node [] {} (1B);
\draw [red,->] (3T) to [bend left=5] node [] {} (2B);
\draw [green,->] (3T) to [bend left=15] node [] {} (3B);
\draw [->] (4T) to [bend left=15] node [] {} (4B);
\draw [blue,->] (4T) to [] node [] {} (5B);
\draw [red,->] (4T) to [bend right=35] node [] {} (0T);
\draw [green,->] (4T) to [bend right=25] node [] {} (1T);
\draw [->] (5T) to [bend right=25] node [] {} (2T);
\draw [blue,->] (5T) to [bend right=20] node [] {} (3T);
\draw [red,->] (5T) to [] node [] {} (4T);
\draw [blue,->] (0B) to [] node [] {} (1B);
\draw [red,->] (0B) to [bend right=20] node [] {} (2B);
\draw [green,->] (0B) to [bend right=25] node [] {} (3B);
\draw [->] (1B) to [bend right=20] node [] {} (4B);
\draw [blue,->] (1B) to [bend right=30] node [] {} (5B);
\draw [red,->] (1B) to [] node [] {} (0T);
\draw [green,->] (1B) to [bend left=15] node [] {} (1T);
\draw [->] (2B) to [bend left=15] node [] {} (2T);
\draw [blue,->] (2B) to [bend left=5] node [] {} (3T);
\draw [red,->] (2B) to [] node [] {} (4T);
\draw [green,->] (2B) to [] node [] {} (5T);
\draw [->] (3B) to [] node [] {} (0T);
\draw [blue,->] (3B) to [] node [] {} (1T);
\draw [red,->] (3B) to [bend left=5] node [] {} (2T);
\draw [green,->] (3B) to [bend left=15] node [] {} (3T);
\draw [->] (4B) to [bend left=15] node [] {} (4T);
\draw [blue,->] (4B) to [] node [] {} (5T);
\draw [red,->] (4B) to [bend left=35] node [] {} (0B);
\draw [green,->] (4B) to [bend left=25] node [] {} (1B);
\draw [->] (5B) to [bend left=25] node [] {} (2B);
\draw [blue,->] (5B) to [bend left=20] node [] {} (3B);
\draw [red,->] (5B) to [] node [] {} (4B);
\end{tikzpicture}
\caption{The projected automaton $\Pi \left( \A_{6,4} \times \A_{\T,4} \right)$.}
\label{fig:projected-aut}
\end{figure}
\end{example}

\begin{lemma}
\label{lem:projdisj}
For every $i \in [\![0,m{-}1]\!]$, the states $(i,T)$ and $(i,B)$ are disjoint in the projected automaton $\Pi \left( \A_{m,2^p} \times \A_{\T,2^p} \right)$.
\end{lemma}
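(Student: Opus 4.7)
The plan is to exploit the fact that, thanks to Remark~\ref{rem:unicitelettrepremcomp}, the projected automaton $\Pi(\A_{m,2^p}\times \A_{\T,2^p})$ is in fact deterministic: for every state $(i,X)$ and every letter $e\in A_{2^p}$, the pair $(d,j)\in A_{2^p}\times[\![0,m{-}1]\!]$ satisfying $2^p i+e=md+j$ is uniquely determined, so there is exactly one outgoing transition from $(i,X)$ on $e$, and it goes to $(j,X_d)$. The determinism of the projection is the key enabler; everything else is a short induction.

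With that in hand, I would establish by induction on $|v|$ the following invariant: for every $v\in A_{2^p}^*$, the state reached from $(i,T)$ by reading $v$ and the state reached from $(i,B)$ by reading $v$ share the same first component and have opposite second components. The base case $v=\varepsilon$ is immediate. For the inductive step, assume the two runs have arrived at $(j,Y)$ and $(j,\overline{Y})$ respectively, and consider the next letter $e\in A_{2^p}$. Let $(d,j')$ be the unique pair with $2^p j+e=md+j'$. By determinism, the two runs move to $(j',Y_d)$ and $(j',\overline{Y}_d)$; since $\overline{Y}_d=\overline{Y_d}$, the invariant is preserved.

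The conclusion of the lemma is now immediate. Suppose some word $v$ were accepted from both $(i,T)$ and $(i,B)$ in the projected automaton. Then both runs would end at the unique final state $(0,T)$, contradicting the invariant that the second components at the end of the two runs are always distinct. Hence $L_{(i,T)}\cap L_{(i,B)}=\emptyset$, as required.

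The step I expect to be the main obstacle is spotting the right invariant and, in particular, noticing that although the pair automaton $\A_{m,2^p}\times\A_{\T,2^p}$ has a partial transition function, its projection is total and deterministic (via Remark~\ref{rem:unicitelettrepremcomp}); once that observation is in place, the calculations reduce to the one-line identity $\overline{Y}_d=\overline{Y_d}$.
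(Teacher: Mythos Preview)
Your proof is correct and rests on the same key observation as the paper's: by Remark~\ref{rem:unicitelettrepremcomp}, the first component of each transition label is uniquely determined, so the projection is deterministic and the second component of the state is forced. The paper, however, packages this differently: rather than running an explicit induction in the projected automaton, it observes that a word $v$ accepted from both $(i,T)$ and $(i,B)$ in $\Pi(\A_{m,2^p}\times\A_{\T,2^p})$ lifts to a \emph{single} word $(u,v)$ accepted from both $(i,T)$ and $(i,B)$ in the product automaton $\A_{m,2^p}\times\A_{\T,2^p}$, and then invokes Lemma~\ref{lem:proddisj1} (which in turn is just the fact that $\A_{\T,2^p}$ has disjoint states). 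Your induction is thus essentially an inline unfolding of Lemma~\ref{lem:proddisj1} together with the lifting step; the paper's version is shorter because it reuses that lemma, while yours is more self-contained and makes the invariant $\overline{Y}_d=\overline{Y_d}$ explicit.
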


\begin{proof}
Let $i \in [\![0,m{-}1]\!]$. It follows from Remark~\ref{rem:unicitelettrepremcomp} and the definitions of the transition functions of  $\A_{m,2^p} \times \A_{\T,2^p}$ and $\Pi \left( \A_{m,2^p} \times \A_{\T,2^p} \right)$ that if a word $v$ over $A_{2^p}$ is accepted from both $(i,T)$ and $(i,B)$ in $\Pi \left( \A_{m,2^p} \times \A_{\T,2^p} \right)$,  then there exists a unique word $u$ over $A_{2^p}$ of length $|v|$ such that the word $(u,v)$ is accepted from both $(i,T)$ and $(i,B)$ in $\A_{m,2^p} \times \A_{\T,2^p}$. The conclusion then follows from Lemma~\ref{lem:proddisj1}.
\end{proof}

\begin{proposition}
\label{prop:m-odd}
The automaton $\Pi \left(\A_{m,2^p}\times \A_{\T,2^p} \right)$
\begin{itemize}
\item accepts $\val_{2^p}^{-1}(m\T)$
\item is deterministic
\item is complete
\item is accessible
\item is coaccessible
\item has disjoint states if $m$ is odd 
\item is minimal if $m$ is odd. 
\end{itemize}
\end{proposition}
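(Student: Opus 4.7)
The plan is to dispatch the seven bullets essentially in the order they are stated, leaning on the results already established for the product automaton and on Remark~\ref{rem:unicitelettrepremcomp}. For the first bullet, I would observe that by construction every path labeled $v$ in the projection is the projection of a path labeled $(u,v)$ in $\A_{m,2^p}\times \A_{\T,2^p}$ (with $u$ uniquely determined in length by $|v|$), and conversely. Since the product automaton accepts $\val_{2^p}^{-1}(\{(t,mt)\colon t\in\T\})$ by Proposition~\ref{prop:prodaccessible}, the projection accepts exactly those $v$ for which some $u$ of length $|v|$ makes $(u,v)$ accepted, i.e. $\val_{2^p}^{-1}(m\T)$.

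For determinism and completeness, fix a state $(i,X)$ and a letter $e\in A_{2^p}$. By Remark~\ref{rem:unicitelettrepremcomp} applied to $\A_{m,2^p}$, there exist unique $d\in A_{2^p}$ and $j\in[\![0,m{-}1]\!]$ with $2^pi+e=md+j$, and then $Y=X_d$ is also uniquely determined. Hence from $(i,X)$ there is exactly one $e$-transition, which gives both determinism and completeness simultaneously. Accessibility and coaccessibility are then immediate transfers from the product: if $(u,v)$ realizes accessibility (resp. coaccessibility) of $(i,X)$ in $\A_{m,2^p}\times \A_{\T,2^p}$ (available by Proposition~\ref{prop:prodaccessible}), then reading $v$ realizes the same property in the projection.

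For the disjoint-states bullet under the hypothesis that $m$ is odd, I would split into cases according to which component distinguishes the two states $(i,X)\ne (j,Y)$. If $i=j$ and $X\ne Y$, Lemma~\ref{lem:projdisj} directly gives disjointness. If $i\ne j$, I would use that $m$ odd implies $\gcd(m,2^p)=1$, so by Proposition~\ref{prop:projAmbdisj} the automaton $\Pi(\A_{m,2^p})$ has disjoint states; since any word accepted from $(i,X)$ in $\Pi(\A_{m,2^p}\times\A_{\T,2^p})$ is also accepted from $i$ in $\Pi(\A_{m,2^p})$ (the projection commutes with forgetting the Thue--Morse component), disjointness is inherited. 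Finally, the minimality bullet follows for free: an accessible, complete, coaccessible DFA with disjoint states is reduced (as noted in the preliminaries), and reduced plus accessible plus complete is the characterization of minimality recalled in Section~2.

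The only mildly delicate step is the disjoint-states argument when $i\ne j$; everything else is either a transfer from the product automaton or an immediate application of Remark~\ref{rem:unicitelettrepremcomp}. The coprimality hypothesis is used in exactly one place (via Proposition~\ref{prop:projAmbdisj}), which explains why the last two bullets require $m$ odd while the first five do not.
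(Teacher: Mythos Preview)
Your proof plan is correct and follows essentially the same approach as the paper: the paper also derives determinism/completeness from Remark~\ref{rem:unicitelettrepremcomp}, transfers accessibility/coaccessibility from the product automaton (Proposition~\ref{prop:prodaccessible}), and obtains disjoint states for odd $m$ by combining Lemma~\ref{lem:projdisj} (the $i=j$ case) with Proposition~\ref{prop:projAmbdisj} (the $i\ne j$ case via the inclusion of accepted words into $\Pi(\A_{m,2^p})$). Your write-up is a bit more explicit about the case split and about why minimality follows, but the underlying argument is the same.
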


\begin{proof}
By construction, $\Pi \left( \A_{m,2^p} \times \A_{\T,2^p} \right)$ accepts $\val_{2^p}^{-1}(m\T)$; see Section~\ref{sec:objectifmethode}. The fact that this automaton is deterministic and complete follows from Remark~\ref{rem:unicitelettrepremcomp}. It is accessible and coaccessible because so is $\A_{m,2^p} \times \A_{\T,2^p}$.  Now we turn to the last two items. If a word $v$ over $A_{2^p}$ is accepted from some state $(i,X)$ in $\Pi \left( \A_{m,2^p} \times \A_{\T,2^p} \right)$,  then there exists a word $u$ over $A_{2^p}$ of length $|v|$ such that the word $(u,v)$ is accepted from $(i,X)$ in $\A_{m,2^p} \times \A_{\T,2^p}$. We deduce that $(u,v)$ is accepted from the state $i$ in $\A_{m,2^p}$ and in turn, that $v$ is accepted from the state $i$ in $\Pi \left( \A_{m,2^p} \right)$. Therefore, and by combining Proposition~\ref{prop:projAmbdisj} and Lemma~\ref{lem:projdisj}, we obtain  that if $m$ is odd then the automaton $\Pi \left( \A_{m,2^p} \times \A_{\T,2^p} \right)$ has disjoint states. It directly follows that $\Pi \left( \A_{m,2^p} \times \A_{\T,2^p} \right)$ is minimal if $m$ is odd.
\end{proof}

\begin{corollary}
\label{cor:impair}
If $m$ is odd, then the state complexity of $m \T $ with respect to the base $2^p$ is $2m$.
\end{corollary}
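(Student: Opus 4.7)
The plan is to deduce this corollary as an immediate consequence of Proposition~\ref{prop:m-odd}. By definition, the state complexity of $m\T$ with respect to the base $2^p$ is the number of states of the minimal DFA accepting the language $\val_{2^p}^{-1}(m\T)$.

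First I would invoke Proposition~\ref{prop:m-odd}, which tells us that whenever $m$ is odd, the automaton $\Pi(\A_{m,2^p} \times \A_{\T,2^p})$ accepts $\val_{2^p}^{-1}(m\T)$ and is itself minimal (being deterministic, complete, accessible, coaccessible, and having disjoint states, hence reduced). Therefore the state complexity in question is exactly the number of states of $\Pi(\A_{m,2^p} \times \A_{\T,2^p})$.

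Next I would simply count: by construction, the state set of $\Pi(\A_{m,2^p} \times \A_{\T,2^p})$ is
\[
\{(i,X) : i \in [\![0,m{-}1]\!],\ X \in \{T,B\}\},
\]
which has cardinality $2m$. This yields the claimed value $2m$.

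There is no genuine obstacle here; all the real work has already been done in establishing Proposition~\ref{prop:m-odd}. The only thing worth noting for consistency with Theorem~\ref{thm:main} is that when $m$ is odd we have $z=0$ and $k=m$ in the decomposition $m = k2^z$, so the formula $2k + \lceil z/p \rceil$ from the main theorem specializes to $2m$, in agreement with the corollary.
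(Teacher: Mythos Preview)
Your proposal is correct and follows exactly the approach the paper intends: the corollary is stated immediately after Proposition~\ref{prop:m-odd} with no separate proof, as it is an immediate consequence of that proposition together with the obvious state count of $2m$. Your consistency check with Theorem~\ref{thm:main} also mirrors the remark the paper makes right after the corollary.
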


Note that Corollary~\ref{cor:impair} and Theorem~\ref{thm:main} are consistent in the case where $m$ is odd, i.e.\  where $z=0$. However, we will see in the next section that the DFA $\Pi \left( \A_{m,2^p} \times \A_{\T,2^p} \right)$ is never minimal for even $m$ because it contains several states accepting the same language.

\section{Minimization of $\Pi \left( \A_{m,2^p} \times \A_{\T,2^p} \right)$}

We start by defining some classes of states of $\Pi \left( \A_{m,2^p} \times \A_{\T,2^p} \right)$. Our aim is twofold. First, we will prove that those subsets consist in {\em indistinguishable} states, i.e.\ accepting the same language. Second, we will show that states belonging to different such subsets are {\em distinguishable}, i.e.\ accepts different languages. Otherwise stated, these classes correspond to the left quotients $w^{{-}1}L$ where $w$ is any word over the alphabet $A_{2^p}$ and $L=\val^{-1}_{2^p}(m\T)$.

\begin{definition}
For $(j,X) \in \big([\![1,k{-}1]\!]\times \{T,B\}\big)\cup \{(0,B)\}$, we define
\[
	[(j,X)] = \{ ( j+k\ell, X_\ell) \colon 0 \le \ell \le 2^z {-}1\}\\
\]
and $[(0,T)] = \{ (0,T)\}$. We say that $[(j,X)]$ is the {\em class} of the state $(j,X)$. 
\end{definition}

\begin{remark}
Note that the classes $[(j,X)]$ are pairwise disjoint: $[(j,X)]\cap[(j',X')]=\emptyset$ if $(j,X)\ne (j',X')$. If $m$ is odd, i.e.\ if $z =0$, then all these classes are reduced to a single state. If $m$ is a power of $2$, i.e.\  if $k=1$, then there is no class of the form $[(j,X)]$ with $j\ge 1$. 
\end{remark}

\begin{definition}
For $\alpha \in [\![0,z{-}1]\!]$, we define a {\em pre-class} $C_\alpha$ of size $2^\alpha$:
\[
	C_\alpha = [(k2^{z-\alpha-1},B)] 
	= \{(k2^{z-\alpha-1} + k2^{z-\alpha} \ell,B_\ell) \colon \ell\in [\![0,2^\alpha{-}1]\!]\}.
\]
Then, for $\beta\in [\![0,\lceil\frac zp\rceil{-}2]\!]$, we define a {\em class} $\Gamma_\beta$ as follows:
\[
	\Gamma_\beta = \bigcup_{\alpha =\beta p}^{\beta p+p-1} C_\alpha.
\]
In addition, we set
\[
	\Gamma_{\lceil \frac zp \rceil - 1} 
	= \bigcup_{\alpha = \left(\left\lceil\frac zp \right\rceil-1\right) p}^{z-1} C_{\alpha}.
\]
\end{definition}

\begin{remark}
Note that the classes $\Gamma_\beta$ are pairwise disjoint. If $m$ is odd, i.e.\ if $z =0$, then there is no such class $\Gamma_\beta$. 
\end{remark}

\begin{remark}
If a class $[(j,X)]$ or $\Gamma_\beta$ exists, then it is nonempty. Moreover, the classes $\Gamma_\beta$ together with the class $[(0,T)]$ form a partition of $ \{(k\ell,T_\ell) \colon \ell\in[\![0,2^z{-}1]\!]\}$. Therefore, the classes $[(j,X)]$ and $\Gamma_\beta$ form a partition of the set of states of $\Pi \left( \A_{m,2^p} \times \A_{\T,2^p} \right)$.
\end{remark}

\begin{example}
For $m=24$ and $p=2$, the classes defined above are
\begin{align*}
	[(0,T)] & =\{(0,T)\} \\
	[(1,T)] & =\{(1,T),(4,B),(7,B),(10,T),(13,B),(16,T),(19,T),(22,B)\} \\
	[(2,T)] & =\{(2,T),(5,B),(8,B),(11,T),(14,B),(17,T),(20,T),(23,B)\} \\
	[(0,B)] & =\{(0,B),(3,T),(6,T),(9,B),(12,T),(15,B),(18,B),(21,T)\} \\
	[(1,B)] & =\{(1,B),(4,T),(7,T),(10,B),(13,T),(16,B),(19,B),(22,T)\} \\
	[(2,B)] & =\{(2,B),(5,T),(8,T),(11,B),(14,T),(17,B),(20,B),(23,T)\} \\
	\Gamma_0 & =C_0\cup C_1= \{(12,B)\}\cup\{(6,B),(18,T)\}=\{(6,B),(12,B),(18,T)\} \\
	\Gamma_1 & =C_2= \{(3,B),(9,T),(15,T),(21,B)\}.
\end{align*}
In Figure~\ref{fig:projected-aut24}, the states of the automaton $\Pi \left( \A_{24,4} \times \A_{\T,4} \right)$ are colored with respect to these classes.
\begin{figure}
\centering
\begin{tikzpicture}
[scale=0.225]
\tikzstyle{every node}=[shape=circle, fill=none, draw=black,
minimum size=10pt, inner sep=2pt]
\node(0T) at (0,0) {};
\node[fill=cyan](1T) at (2,0) {};
\node[fill=gray](2T) at (4,0) {};
\node[fill=yellow](3T) at (6,0) {};
\node[fill=magenta](4T) at (8,0) {};
\node[fill=orange](5T) at (10,0) {};
\node[fill=yellow](6T) at (12,0) {};
\node[fill=magenta](7T) at (14,0) {};
\node[fill=orange](8T) at (16,0) {};
\node[fill=violet](9T) at (18,0) {};
\node[fill=cyan](10T) at (20,0) {};
\node[fill=gray](11T) at (22,0) {};
\node[fill=yellow](12T) at (24,0) {};
\node[fill=magenta](13T) at (26,0) {};
\node[fill=orange](14T) at (28,0) {};
\node[fill=violet](15T) at (30,0) {};
\node[fill=cyan](16T) at (32,0) {};
\node[fill=gray](17T) at (34,0) {};
\node[fill=vert](18T) at (36,0) {};
\node[fill=cyan](19T) at (38,0) {};
\node[fill=gray](20T) at (40,0) {};
\node[fill=yellow](21T) at (42,0) {};
\node[fill=magenta](22T) at (44,0) {};
\node[fill=orange](23T) at (46,0) {};

\node[fill=yellow](0B) at (0,-4.5) {};
\node[fill=magenta](1B) at (2,-4.5) {};
\node[fill=orange](2B) at (4,-4.5) {};
\node[fill=violet](3B) at (6,-4.5) {};
\node[fill=cyan](4B) at (8,-4.5) {};
\node[fill=gray](5B) at (10,-4.5) {};
\node[fill=vert](6B) at (12,-4.5) {};
\node[fill=cyan](7B) at (14,-4.5) {};
\node[fill=gray](8B) at (16,-4.5) {};
\node[fill=yellow](9B) at (18,-4.5) {};
\node[fill=magenta](10B) at (20,-4.5) {};
\node[fill=orange](11B) at (22,-4.5) {};
\node[fill=vert](12B) at (24,-4.5) {};
\node[fill=cyan](13B) at (26,-4.5) {};
\node[fill=gray](14B) at (28,-4.5) {};
\node[fill=yellow](15B) at (30,-4.5) {};
\node[fill=magenta](16B) at (32,-4.5) {};
\node[fill=orange](17B) at (34,-4.5) {};
\node[fill=yellow](18B) at (36,-4.5) {};
\node[fill=magenta](19B) at (38,-4.5) {};
\node[fill=orange](20B) at (40,-4.5) {};
\node[fill=violet](21B) at (42,-4.5) {};
\node[fill=cyan](22B) at (44,-4.5) {};
\node[fill=gray](23B) at (46,-4.5) {};

\tikzstyle{every node}=[shape=circle, fill=none, draw=black,
minimum size=7pt, inner sep=2pt]
\node at (0,0) {};
\end{tikzpicture}
\caption{The classes of the projected automaton $\Pi \left( \A_{24,4} \times \A_{\T,4} \right)$.}
\label{fig:projected-aut24}
\end{figure}
\end{example}

\subsection{States of the same class are indistinguishable}
\label{sec:reduction1}

In order to prove that two states $(j,X)$ and $(j',X')$ of the automaton $\Pi \left( \A_{m,2^p} \times \A_{\T,2^p} \right)$ are indistinguishable, we have to prove that $L_{(j,X)}=L_{(j',X')}$. The general procedure that we use for proving that $L_{(j,X)}\subseteq L_{(j',X')}$ goes as follows. Let $v\in L_{(j,X)}$ and let $n=|v|$. Then we know that there exists a word $u$ over $A_{2^p}$ of length $|v|$ such that $(u,v)$ is accepted from the state $(j,X)$ in $\A_{m,2^p} \times \A_{\T,2^p}$ (before the projection). If $d=\val_{2^p}(u)$ and $e = \val_{2^p} (v)$, then, in view of Lemma~\ref{lem:transitionsProd}, we must have 
\[
	2^{pn}j+e = md \quad \andrm \quad X_d=T
\]
(the only final state of $\A_{m,2^p} \times \A_{\T,2^p}$ is $(0,T)$). Moreover, since $n=|v|$, we have $d,e\in[\![0,2^{pn}{-}1]\!]$. Now, in order to prove that $v\in L_{(j',X')}$, we have to find a word $u'$ over $A_{2^p}$ of length $n$ such that $(u',v)$ is accepted from $(j',X')$ in $\A_{m,2^p} \times \A_{\T,2^p}$. But then, we necessarily have that 
\[
	\val_{2^p}(u')=\frac{2^{pn}j'+e}{m}.
\] 
Let thus $d'=\frac{2^{pn}j+e}{m}$. We obtain that $v\in L_{(j',X')}$ if and only if $d'\in[\![0,2^{pn}{-}1]\!]$ and $X'_{d'}=T$. Indeed, in this case, $|\rep_{2^p}(d')|\le n$ and thus, we can take the word $u'=0^{n-|\rep_{2^p}(d')|}\rep_{2^p}(d')$.
\medskip

First, we show that two states of the same class of the form $[(j,X)]$ are indistinguishable.

\begin{proposition}
\label{prop:jX}
Let $j \in [\![1,k{-}1]\!]$, $X \in \{T,B\}$ and $\ell \in [\![0,2^z{-}1]\!]$. We have
\[
	L_{( j,X)} = L_{( j+k\ell, X_\ell)} 
\]
in $\Pi \left( \A_{m,2^p} \times \A_{\T,2^p} \right)$.
\end{proposition}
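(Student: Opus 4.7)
The plan is to prove both inclusions $L_{(j,X)} \subseteq L_{(j+k\ell,X_\ell)}$ and its reverse, using the general correspondence described at the start of Section~\ref{sec:reduction1}. Fixing a word $v$ of length $n$ and writing $e = \val_{2^p}(v)$, membership in $L_{(j,X)}$ amounts to the existence of an integer $d \in [\![0, 2^{pn}{-}1]\!]$ with $md = 2^{pn} j + e$ and $X_d = T$; similarly, membership in $L_{(j+k\ell,X_\ell)}$ amounts to an integer $d'' \in [\![0, 2^{pn}{-}1]\!]$ with $md'' = 2^{pn}(j+k\ell) + e$ and $(X_\ell)_{d''} = T$. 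Subtracting the two arithmetic identities shows that, whenever both make sense, the two candidates are linked by
\[
    d'' = d + \ell \cdot 2^{pn - z},
\]
so the real task is to match the two $\T$-conditions on $d$ and $d''$.

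A first preliminary observation is that whenever a nonempty word is accepted from a state whose first component is not a multiple of $k$, necessarily $pn \geq z$. Indeed, if $pn < z$, the divisibility $2^z \mid 2^{pn}\cdot(\text{first component}) + e$ required for the candidate to be an integer forces (by a bitwise inspection exploiting that the nonzero bits of $2^{pn}\cdot(\text{first component})$ and of $e$ live in disjoint positions) both $e = 0$ and $2^{z-pn}$ to divide the first component. Plugging $e = 0$ back into the identity $md = 2^{pn}\cdot(\text{first component})$ and dividing by $2^{pn}$ yields $k \cdot 2^{z-pn} d = \text{first component}$, so $k$ would divide the first component, a contradiction. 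Since $j \in [\![1,k{-}1]\!]$, both $j$ and $j + k\ell$ reduce to $j \not\equiv 0 \pmod k$, so this preliminary applies to both $(j,X)$ and $(j+k\ell, X_\ell)$, and we may assume $pn \geq z$ in both directions.

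The core computation is then this: because $j \leq k-1$, we get $d = (2^{pn} j + e)/m < 2^{pn} k/m = 2^{pn - z}$, so the binary expansion of $d$ occupies only positions $[\![0, pn{-}z)$, while that of $\ell \cdot 2^{pn-z}$ occupies only positions $[\![pn{-}z, pn)$. These supports being disjoint, the identity
\[
    |\rep_2(d'')|_1 = |\rep_2(d)|_1 + |\rep_2(\ell)|_1
\]
holds, and via Lemma~\ref{lemlem:transitionsTM} this translates into the equivalence $d'' \in \T \Leftrightarrow (d \in \T \Leftrightarrow \ell \in \T)$. Unfolding the definition of $X_\bullet$, this is exactly the assertion $(X_\ell)_{d''} = X_d$, which together with elementary checks that $d \geq 0$ and $d'' < 2^{pn}$ (both following from $j, j + k\ell \leq m-1$) yields both inclusions simultaneously: starting from $d$ one defines $d' = d + \ell \cdot 2^{pn-z}$, and starting from $d''$ one defines $d = d'' - \ell \cdot 2^{pn-z}$.

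The main obstacle is the preliminary lemma that $pn \geq z$ whenever $k$ does not divide the first component of the starting state. Once this is established, the whole argument collapses into the single arithmetic identity $d'' = d + \ell \cdot 2^{pn-z}$ together with the disjoint-bits observation, and all the remaining verifications (integrality, the range $[\![0, 2^{pn}{-}1]\!]$, and the matching of the $X$-decoration) are routine.
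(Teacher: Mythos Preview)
Your proof is correct and follows the same overall skeleton as the paper's: reduce both inclusions to the single relation $d'' = d + \ell\cdot 2^{pn-z}$, observe that $d < 2^{pn-z}$ so the binary expansions of $d$ and $\ell\cdot 2^{pn-z}$ occupy disjoint bit-positions, and conclude $(X_\ell)_{d''}=X_d$. The routine range checks you mention are exactly the ones the paper carries out.

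Where you differ is in the justification of the key inequality $pn\geq z$. The paper handles the two inclusions asymmetrically: for $L_{(j,X)}\subseteq L_{(j+k\ell,X_\ell)}$ it simply notes $0<d<2^{pn-z}$ forces $pn>z$; for the reverse inclusion it runs a more delicate sandwich argument, setting $q=\lfloor \ell/2^{z-pn}\rfloor$ and showing $q<d'<q+1$. Your divisibility argument (if $pn<z$ then $2^z\mid 2^{pn}\cdot i+e$ forces $e=0$ and then $k\mid i$) is uniform across both directions and arguably cleaner; it also extends verbatim to the boundary case $pn=z$, so you could in fact state and use the strict inequality $pn>z$ (matching the paper) rather than $pn\geq z$. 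Either way the remainder of your argument goes through, since with $pn\geq z$ all the quantities $2^{pn-z}$, $d$, $d''$ are nonnegative integers and the disjoint-bits identity holds even in the degenerate case $pn=z$ (where the hypothesis turns out to be vacuous). One small point: you restrict the preliminary to nonempty words, so you should note separately that $\varepsilon$ lies in neither $L_{(j,X)}$ nor $L_{(j+k\ell,X_\ell)}$, which is immediate since $j\geq 1$ means neither state is final.
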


\begin{proof}
Let $v\in A_{2^p}^*$, $n=|v|$, $e=\val_{2^p}(v)$, $d=\frac{2^{pn}j+e}{m}$ and $d'=\frac{2^{pn}(j+k\ell)+e}{m}$.
We have to prove that $d\in[\![0,2^{pn}{-}1]\!]$ and $X_d=T$ if and only if $d'\in[\![0,2^{pn}{-}1]\!]$ and $(X_\ell)_{d'}=T$. 

Since $1\le j <k$ and $0\le e<2^{pn}$, we have
\begin{equation}
\label{eqn:jX}
	0< d=\frac{2^{pn}j+e}{m}<\frac{2^{pn}k}{m}=2^{pn-z}.
\end{equation}
Since $d'= d + \frac{2^{pn}k\ell}{m} =d+2^{pn-z}\ell$, it follows from~\eqref{eqn:jX} that if $d$ and $d'$ are both integers, then we must have
\[
	\rep_2(d')=\rep_2(\ell)0^{pn-z-|\rep_2(d)|}\rep_2(d).
\] 
Therefore, $d\in\T$ if and only if either $\ell\in\T\andrm d'\in\T$, or $\ell\notin\T\andrm d'\notin\T$, and hence  $X_d=(X_\ell)_{d'}$. 

Now, suppose that $d\in[\![0,2^{pn}{-}1]\!]$ and $X_d=T$. It follows from~\eqref{eqn:jX} that $pn> z$, for otherwise we would have $0<d<1$, which is not possible since $d$ is an integer. Therefore, we get that $d' = d +2^{pn-z}\ell$ is a positive integer. We also get from~\eqref{eqn:jX} that
\[
	d' = d+2^{pn-z}\ell<2^{pn-z} (\ell+1)\le  2^{pn}.
\]
Consequently, $d'\in[\![0,2^{pn}{-}1]\!]$ and $(X_\ell)_{d'}=X_d=T$. 

Conversely, suppose that $d'\in[\![0,2^{pn}{-}1]\!]$ and $(X_\ell)_{d'}=T$. In view of~\eqref{eqn:jX} and since $d= d'-2^{pn-z}\ell$, in order to obtain that $d\in[\![0,2^{pn}{-}1]\!]$, it is enough to show that $pn> z$. Proceed by contradiction and suppose that $pn \le z$. Let $q=\big\lfloor\frac{\ell}{2^{z-pn}}\big\rfloor$. On the one hand, since $j\ge 1$ and $e\ge 0$, we obtain
\[
	d'=\frac{2^{pn}(j+k\ell)+e}{m}>\frac{2^{pn}k\ell}{m}=\frac{\ell}{2^{z-pn}}\ge q.
\]
On the other hand, since $\ell\le (q+1)2^{z-pn}{-}1$, $e<2^{pn}$ and $j\le k{-}1$, we obtain
\[
	d' 	< \frac{2^{pn}(j+k(q+1)2^{z-pn}-k)+2^{pn}}{m} 
		= q+1+2^{pn}\frac{j-k+1}{m} 
		\le q+1.
\]
This is not possible since $d'$ is an integer, and hence $pn> z$. Consequently, $d\in[\![0,2^{pn}{-}1]\!]$ and $X_d=T$ as desired. 
\end{proof}

Since the class $[(0,T)]$ is a singleton, the only left case to consider is that of $[(0,B)]$.

\begin{proposition}
Let $\ell \in [\![1,2^z{-}1]\!]$. We have
\[
	L_{(0,B)} = L_{( k\ell, B_\ell)}
\]
in $\Pi \left( \A_{m,2^p} \times \A_{\T,2^p} \right)$.
\end{proposition}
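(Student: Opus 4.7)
The plan is to follow the procedure outlined at the beginning of Section~\ref{sec:reduction1} and mirror the argument of Proposition~\ref{prop:jX} specialized to $j=0$. For a word $v\in A_{2^p}^*$ of length $n$, set $e=\val_{2^p}(v)$, $d=e/m$ and $d'=(2^{pn}k\ell+e)/m$. Then $v\in L_{(0,B)}$ is equivalent to requiring $d$ to be an integer in $[\![0,2^{pn}{-}1]\!]$ with $B_d=T$, and $v\in L_{(k\ell,B_\ell)}$ is equivalent to requiring $d'$ to be an integer in $[\![0,2^{pn}{-}1]\!]$ with $(B_\ell)_{d'}=T$. The key structural observation, valid whenever $pn>z$ and $d<2^{pn-z}$, is that $d'=d+2^{pn-z}\ell$ and $\rep_2(d')=\rep_2(\ell)0^{pn-z-|\rep_2(d)|}\rep_2(d)$, whence $|\rep_2(d')|_1=|\rep_2(\ell)|_1+|\rep_2(d)|_1$. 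A case split on whether $\ell\in\T$ then shows that $(B_\ell)_{d'}=B_d$ in every subcase, and this is the identity that drives the whole proof.

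I would first dispose of the degenerate case $e=0$. On the left, $d=0$ and $B_0=B\ne T$, so $v\notin L_{(0,B)}$. On the right, $d'=2^{pn-z}\ell$; whenever this is an integer, the binary expansions of $d'$ and $\ell$ have the same number of $1$'s, so $d'\in\T\iff \ell\in\T$, and a case split shows $(B_\ell)_{d'}\ne T$ in both subcases. Assuming henceforth $e\ge 1$, for the forward direction the hypothesis $B_d=T$ forces $d\ge 1$; combined with $d<2^{pn-z}/k$, this yields $pn>z$. The upper bound $d'<2^{pn-z}(\ell+1/k)\le 2^{pn-z}\cdot 2^z= 2^{pn}$ together with the structural observation then concludes $v\in L_{(k\ell,B_\ell)}$.

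For the converse I would show $pn>z$ by contradiction, following the bounding argument of Proposition~\ref{prop:jX} with $j=0$. Setting $q=\lfloor\ell/2^{z-pn}\rfloor$, the inequality $e\ge 1$ yields $d'>\ell/2^{z-pn}\ge q$, so $d'\ge q+1$. Running the same upper-bound computation as in Proposition~\ref{prop:jX} (now with $j=0$) gives $d'<q+1+2^{pn}(1-k)/m\le q+1$, contradicting $d'\ge q+1$. Hence $pn>z$, so $d=d'-2^{pn-z}\ell$ is a nonnegative integer below $2^{pn}$, and the structural observation delivers $B_d=T$. The main obstacle is exactly this converse bounding argument: because $j=0$, the crucial strict lower bound on $d'$ must be extracted from $e\ge 1$ rather than from $j\ge 1$, and this is precisely why the degenerate case $v=0^n$ has to be peeled off first.
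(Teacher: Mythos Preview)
Your proof is correct and follows the same overall strategy as the paper: same setup of $d$ and $d'$, same structural observation $d'=d+2^{pn-z}\ell$ yielding $B_d=(B_\ell)_{d'}$, and the same forward argument. The only difference is in the converse endgame. You peel off the case $e=0$ first, which lets you use $e\ge 1$ to obtain the \emph{strict} lower bound $d'>\ell/2^{z-pn}\ge q$, hence $d'\ge q+1$, and then reach a direct numerical contradiction with the upper bound $d'<q+1$. The paper does not separate out $e=0$; it only gets the weak bound $d'\ge q$, concludes $d'=q$, and then derives $e=0$ and $\ell=2^{z-pn}d'$ a posteriori, reaching a \emph{parity} contradiction with $(B_\ell)_{d'}=T$ (since $\rep_2(\ell)=\rep_2(d')0^{z-pn}$ forces $d'\in\T\iff\ell\in\T$). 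Both arguments are clean; yours trades the parity step for an extra case at the start.
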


\begin{proof}
Let $v\in A_{2^p}^*$, $n=|v|$, $e=\val_{2^p} (v)$, $d=\frac em$ and $d' = \frac{2^{pn}k\ell +e}{m}$. We have to prove that we have $d\in[\![0,2^{pn}{-}1]\!]$ and $B_d=T$ if and only if $d'\in[\![0,2^{pn}{-}1]\!]$ and $(B_\ell)_{d'}=T$. 

Since $0\le e<2^{pn}$, we have
\begin{equation}
\label{eqn:0B}
 	0\le d = \frac em < \frac{2^{pn}}{m} = \frac{2^{pn-z}}{k}.
\end{equation}
Since $k\ge 1$, it follows that $d< 2^{pn-z}$ and we get that $B_d=(B_\ell)_{d'}$ as in the proof of Proposition~\ref{prop:jX}, provided that both $d$ and $d'$ are integers. 

Now, suppose that $d\in[\![0,2^{pn}{-}1]\!]$ and $B_d=T$, that is, that $d$ is an integer and $d\notin\T$. If $pn \le z$ then we get from~\eqref{eqn:0B} that $0\le d<1$. But since $d$ is an integer, this implies that $d=0$, which is impossible because $d\notin\T$. Thus, $pn > z$ and $d'=d+\ell 2^{pn-z}$ is a nonnegative integer. Moreover, we have
\[
	d' = d+\ell 2^{pn-z} < \frac{2^{pn-z}}{k}+ (2^z{-}1) 2^{pn-z} = 2^{pn} + 2^{pn-z} \left(\frac{1}{k} - 1\right) \le 2^{pn}.
\]
Hence $d'\in[\![0,2^{pn}{-}1]\!]$ and $(B_\ell)_{d'}=B_d=T$.

Conversely, suppose that $d'\in[\![0,2^{pn}{-}1]\!]$ and $(B_\ell)_{d'}=T$. In particular, we have $d'\in\T\iff\ell\notin \T$. From~\eqref{eqn:0B}, we know that $0\le d< 2^{pn}$. We claim that $pn> z$. Proceed by contradiction and suppose that $pn \le z$. Let $q=\left\lfloor\frac{\ell}{2^{z-pn}}\right\rfloor$. Then, on the one hand, we have
\[
	d'=\frac{2^{pn}k\ell+e}{m}\ge \frac{\ell}{2^{z-pn}}\ge q.
\]
On the other hand, since $\ell\le (q+1)2^{z-pn}{-}1$ and $e<2^{pn}$, we obtain
\[
	d'=\frac{2^{pn}k\ell+e}{m}
		< \frac{2^{pn}k((q+1)2^{z-pn}{-}1)+2^{pn}}{m}
		= q+1-\frac{2^{pn}(k{-}1)}{m}
		\le (q+1),
\]
and hence $d'<q+1$. Since $d'$ is an integer, we get that $d'=q$, $e=0$ and $\ell=2^{z-pn}d'$. But then we would have
\[
	\rep_2(\ell)=\rep_2(d')0^{z-pn},
\]
contradicting that $d'\in\T\iff\ell\notin \T$. Thus $pn> z$ and  $d = d'-\ell 2^{pn-z}$ is an integer. Altogether, we get that $d\in[\![0,2^{pn}{-}1]\!]$ and $B_d=T$.
\end{proof}

\begin{corollary}\label{cor:0B}
For each $(j,X)\in[\![0,k{-}1]\!]\times\{T,B\}$, all states of the class $[(j,X)]$ are indistinguishable in $\Pi \left( \A_{m,2^p} \times \A_{\T,2^p} \right)$.
\end{corollary}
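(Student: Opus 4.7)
The plan is to observe that this corollary is an immediate bookkeeping consequence of the two propositions just established, combined with the trivial singleton case. I split into the three possible forms of $(j,X)\in[\![0,k{-}1]\!]\times\{T,B\}$ appearing in the definition of a class.

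First, if $(j,X)=(0,T)$, then by definition $[(0,T)]=\{(0,T)\}$ is a singleton and the statement is vacuous. Next, for $j\in[\![1,k{-}1]\!]$ and any $X\in\{T,B\}$, I use Proposition~\ref{prop:jX}: for every $\ell\in[\![0,2^z{-}1]\!]$ we have $L_{(j,X)}=L_{(j+k\ell,X_\ell)}$, which is exactly the assertion that every element of $[(j,X)]=\{(j+k\ell,X_\ell)\colon 0\le\ell\le 2^z{-}1\}$ accepts the same language as the representative $(j,X)$. Finally, for $(j,X)=(0,B)$, the preceding proposition gives $L_{(0,B)}=L_{(k\ell,B_\ell)}$ for every $\ell\in[\![1,2^z{-}1]\!]$ (and the case $\ell=0$ is trivial), so every element of $[(0,B)]$ accepts $L_{(0,B)}$.

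Since the three cases exhaust the definition of $[(j,X)]$, and since indistinguishability is an equivalence relation, all states of a given class $[(j,X)]$ are pairwise indistinguishable in $\Pi(\A_{m,2^p}\times\A_{\T,2^p})$. There is no real obstacle here: the work has already been done in Proposition~\ref{prop:jX} and the proposition handling the $(0,B)$ case, and the corollary is simply their combined restatement in the language of classes.
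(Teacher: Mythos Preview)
Your proof is correct and matches the paper's approach exactly: the corollary is stated in the paper without proof, as it is an immediate consequence of Proposition~\ref{prop:jX} (for $j\in[\![1,k{-}1]\!]$), the subsequent proposition (for $(0,B)$), and the trivial singleton case $[(0,T)]=\{(0,T)\}$. There is nothing to add.
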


Now, we show that two states of the same class of the form $\Gamma_\beta$ are indistinguishable.

\begin{proposition}
Suppose that $z\ge 1$ and let $\alpha\in[\![0,z{-}1]\!]$ and $\ell\in [\![1,2^{\alpha}{-}1]\!]$. We have
\[
	L_{( k2^{z-\alpha-1},B)} = L_{( k2^{z-\alpha - 1} + k2^{z - \alpha}\ell ,B_\ell)}
\]
in $\Pi \left(\A_{m,2^p} \times \A_{\T,2^p} \right)$.
\end{proposition}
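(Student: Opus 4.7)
The plan is to follow the template of the proofs of the two preceding propositions in Section~\ref{sec:reduction1}. Given $v\in A_{2^p}^*$ with $n=|v|$ and $e=\val_{2^p}(v)$, set
\[
    d=\frac{2^{pn}\,k2^{z-\alpha-1}+e}{m}\qquad\text{and}\qquad d'=\frac{2^{pn}(k2^{z-\alpha-1}+k2^{z-\alpha}\ell)+e}{m}.
\]
The task is to show that $d\in[\![0,2^{pn}{-}1]\!]$ with $B_d=T$ if and only if $d'\in[\![0,2^{pn}{-}1]\!]$ with $(B_\ell)_{d'}=T$. A direct computation, using $m=k2^z$, gives $d'=d+\ell\,2^{pn-\alpha}$, so the whole argument reduces to relating the binary expansions of $d$ and $d'$.

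I would first establish bounds. Since $0\le e<2^{pn}$, one has $d=2^{pn-\alpha-1}+e/m$, so that $2^{pn-\alpha-1}\le d<2^{pn-\alpha-1}+2^{pn-z}/k\le 2^{pn-\alpha}$ (using $z\ge \alpha+1$). Assuming $d$ is an integer, this interval forces $pn\ge\alpha+1$, for otherwise $d$ would lie strictly in $(0,1)$. Hence $d$ has exactly $pn-\alpha$ binary digits, starting with a $1$, so $d'=d+\ell 2^{pn-\alpha}$ is an integer with binary expansion equal to the concatenation $\rep_2(\ell)\rep_2(d)$. Since $\ell\le 2^\alpha{-}1$, we also get $d'<2^{pn}$. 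Counting occurrences of $1$ in this concatenation shows that $|\rep_2(d')|_1\equiv|\rep_2(\ell)|_1+|\rep_2(d)|_1\pmod 2$, which by case analysis on the four possibilities for $(\ell\in\T,\, d\in\T)$ is exactly the identity $(B_\ell)_{d'}=B_d$. This handles the forward direction.

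For the converse, assume $d'\in[\![0,2^{pn}{-}1]\!]$ with $(B_\ell)_{d'}=T$. The main obstacle is to again establish $pn\ge\alpha+1$, because without this the subtraction $d=d'-\ell 2^{pn-\alpha}$ need not be integral. I plan to do it by contradiction, imitating the pinch argument already used twice in this section: assume $pn\le\alpha$, set $q=\lfloor\ell/2^{\alpha-pn}\rfloor$ or an analogous quotient, and squeeze $d'$ between $q$ and $q+1$ by bounding the contributions of $j$, $\ell$ and $e$. The only integer value this leaves for $d'$ will be such that $\rep_2(\ell)$ forces a specific relation with $\rep_2(d')$, and I will check that this relation contradicts the parity requirement coming from $(B_\ell)_{d'}=T$. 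Once $pn\ge\alpha+1$ is secured, $d=d'-\ell 2^{pn-\alpha}$ is an integer in $[\![2^{pn-\alpha-1},2^{pn-\alpha})\subseteq[\![0,2^{pn}{-}1]\!]$, and the concatenation identity $\rep_2(d')=\rep_2(\ell)\rep_2(d)$ together with the parity computation above yields $B_d=T$, completing the equivalence.
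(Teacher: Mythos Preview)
Your proposal is correct and follows essentially the same approach as the paper's own proof: the same setup of $d$ and $d'$, the same bound $d<2^{pn-\alpha}$ forcing $pn>\alpha$, the same concatenation argument giving $(B_\ell)_{d'}=B_d$, and the same contradiction for the converse via $q=\lfloor \ell/2^{\alpha-pn}\rfloor$ to squeeze $d'$ to an integer contradicting the parity condition. The only cosmetic difference is that you exploit the lower bound $d\ge 2^{pn-\alpha-1}$ to get $|\rep_2(d)|=pn-\alpha$ exactly, whereas the paper writes $\rep_2(d')=\rep_2(\ell)\,0^{pn-\alpha-|\rep_2(d)|}\rep_2(d)$ with a possible padding; the effect is the same.
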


\begin{proof}
Let $v\in A_{2^p}^*$, $n=|v|$, $e=\val_{2^p} (v)$, $d= \frac{2^{pn}k2^{z-\alpha {-}1} + e}{m}$ and $d'=\frac{2^{pn}(k2^{z - \alpha - 1} + k2^{z - \alpha}\ell)+e}{m}$. We have to show that we have $d\in[\![0,2^{pn}{-}1]\!]$ and $B_d=T$ if and only if $d'\in[\![0,2^{pn}{-}1]\!]$ and $(B_\ell)_{d'}=T$. 

Using that $k\ge 1$, $e<2^{pn}$ and $\alpha<z$, we get
\begin{equation}
\label{eq:alpha}
	0<d 	= \frac{k2^{pn+z - \alpha - 1} + e}{m}
		< 2^{pn-\alpha{-}1}+\frac{2^{pn-z}}{k}
		\le 2^{pn-\alpha{-}1}+2^{pn-z}
		\le 2^{pn-\alpha}.
\end{equation}
Since $d'=d + \frac{k2^{pn+z-\alpha}\ell}{m}=d+2^{pn-\alpha}\ell$, we obtain that if both $d$ and $d'$ are integers then
\[
	\rep_2(d')=\rep_2(\ell)0^{pn-\alpha-|\rep_2(d)|}\rep_2(d),
\]
and hence $B_d=(B_\ell)_{d'}$.

Now, suppose that $d\in[\![0,2^{pn}{-}1]\!]$ and $B_d=T$. Then, we get from~\eqref{eq:alpha} that $pn > \alpha$ and $d'=d+2^{pn-\alpha}\ell$ is a nonnegative integer. Moreover, $d'<2^{pn-\alpha}(\ell+1)\le 2^{pn}$. Therefore $d'\in[\![0,2^{pn}{-}1]\!]$ and $(B_\ell)_{d'}=B_d=T$.

Conversely, suppose that $d'\in[\![0,2^{pn}{-}1]\!]$ and $(B_\ell)_{d'}=T$. In particular, we have that $d' \in \T\iff \ell\notin\T$. From~\eqref{eq:alpha}, we know that $0\le d<2^{pn}$. We claim that $pn > \alpha$. Proceed by contradiction and suppose that $pn \le \alpha$. Let $q =\DIV(\ell,2^{\alpha - pn})$. Then, on the one hand, we have 
\[
	d' 	=d+\frac{\ell}{2^{\alpha-pn}} \ge q.
\]
On the other hand, since $\ell\le (q+1)2^{\alpha - pn}{-}1$, $e<2^{pn}$, $k\ge 1$ and $\alpha<z$, we successively obtain that
\begin{align*}
	md' 	&< 2^{pn} ( k2^{z - \alpha {-}1} + k2^{z - \alpha}( (q+1)2^{\alpha - pn}{-}1)) +  2^{pn} \\
			&= m(q+1) + 2^{pn} (k2^{z - \alpha {-}1}  - k2^{z-\alpha} +  1) \\
		&= m(q+1) + 2^{pn} (1-k2^{z - \alpha {-}1}) \\
		&\le  m(q+1).
\end{align*}
We obtain that $q\le d'<(q+1)$, hence $d'=q$ and $\ell=2^{\alpha-pn}d'$, contradicting that $d' \in \T\iff \ell\notin\T$. Thus, we have that $pn > \alpha$ and $d=d'-2^{pn-\alpha}\ell$ is an integer. It follows that $d\in[\![0,2^{pn}{-}1]\!]$ and $B_d=T$.
\end{proof}

\begin{corollary}\label{cor:calpha}
For all $\alpha \in [\![0,z{-}1]\!]$, all states of the pre-class $C_{\alpha}$ are indistinguishable in $\Pi \left( \A_{m,2^p} \times \A_{\T,2^p} \right)$.
\end{corollary}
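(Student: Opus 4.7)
The plan is short: this corollary is an immediate consequence of the preceding proposition, so essentially no new argument is required. By definition,
\[
	C_\alpha = \{(k2^{z-\alpha-1} + k2^{z-\alpha}\ell,\, B_\ell) \colon \ell \in [\![0, 2^\alpha{-}1]\!]\}.
\]
Taking $\ell = 0$ singles out the state $(k2^{z-\alpha-1}, B)$, since $0 \in \T$ implies $B_0 = B$. The preceding proposition states that, in $\Pi\left(\A_{m,2^p} \times \A_{\T,2^p}\right)$, for every $\ell \in [\![1, 2^\alpha{-}1]\!]$ we have
\[
	L_{(k2^{z-\alpha-1}, B)} = L_{(k2^{z-\alpha-1} + k2^{z-\alpha}\ell,\, B_\ell)}.
\]
By transitivity of equality of languages, all $2^\alpha$ states of $C_\alpha$ accept one and the same language, namely $L_{(k2^{z-\alpha-1}, B)}$, which is exactly the assertion of indistinguishability.

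There is no obstacle to overcome here: the preceding proposition was tailored precisely to yield this statement, with the element $\ell = 0$ of the indexing set serving as the common reference. The genuine work still to be done lies in the next stages: combining the pre-classes $C_\alpha$ into the classes $\Gamma_\beta$ and showing that states within a $\Gamma_\beta$ remain indistinguishable, and then, conversely, showing that states drawn from distinct classes are distinguishable, which will identify the classes with the actual left quotients of $\val^{-1}_{2^p}(m\T)$ and thereby produce the minimal DFA.
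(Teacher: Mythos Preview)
Your proposal is correct and matches the paper's intent exactly: the corollary is stated without proof because it is immediate from the preceding proposition, with $(k2^{z-\alpha-1},B)$ (the $\ell=0$ element) serving as the common reference state. There is nothing to add.
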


\begin{proposition}
\label{prop:zp-2}
Suppose that $z\ge 1$ and let $\beta \in[\![0,\big\lceil \frac zp \big\rceil{-}2]\!]$ and $c \in [\![1,p{-}1]\!]$. Then
\[
	L_{( k2^{z-\beta p-1},B)} = L_{( k2^{z-(\beta p+c)-1},B)}
\]
in $\Pi \left( \A_{m,2^p} \times \A_{\T,2^p} \right)$.
\end{proposition}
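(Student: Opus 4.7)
The plan is to follow the general template of Section~\ref{sec:reduction1}. Set $\alpha = \beta p$ and $\alpha' = \alpha + c$. From $\beta \le \lceil z/p\rceil - 2$ and $1 \le c \le p-1$ one checks that $\alpha, \alpha' \in [\![0, z{-}2]\!]$. Given $v \in A_{2^p}^*$, let $n = |v|$ and $e = \val_{2^p}(v)$, and put
\[
	d = \frac{2^{pn}k2^{z-\alpha-1} + e}{m} = 2^{pn-\alpha-1} + \frac{e}{m} \quad \andrm \quad d' = \frac{2^{pn}k2^{z-\alpha'-1} + e}{m} = 2^{pn-\alpha'-1}+\frac{e}{m}.
\]
It suffices to prove that $d$ lies in $[\![0, 2^{pn}{-}1]\!]$ with $B_d = T$ precisely when the same holds for $d'$.

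The first observation, and where the hypothesis $c < p$ enters essentially, is that the ``intermediate'' range $\alpha < pn \le \alpha'$ is empty: $pn > \beta p$ forces $n \ge \beta + 1$, hence $pn \ge (\beta+1) p > \beta p + c = \alpha'$. Thus only two cases need to be handled: either $n \le \beta$ (so $pn \le \alpha < \alpha'$), or $n \ge \beta + 1$ (so $pn \ge \alpha' + 1$).

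In the first case, using $\alpha, \alpha' \le z{-}2$ together with $e < 2^{pn}$, direct estimates give $0 < d, d' < 1$, so neither is an integer and $v$ is accepted from neither state. In the second case, both $2^{pn-\alpha-1}$ and $2^{pn-\alpha'-1}$ are positive integers, hence $d$ and $d'$ are simultaneously integers, precisely when $m \mid e$. Writing then $e = mf$, the bound $f < 2^{pn-z} \le 2^{pn-\alpha'-1} \le 2^{pn-\alpha-1}$ (which uses $\alpha, \alpha' \le z{-}2$) shows that the binary representations of $d = 2^{pn-\alpha-1} + f$ and $d' = 2^{pn-\alpha'-1} + f$ each consist of a leading $1$, then a block of zeros, then $\rep_2(f)$. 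In particular both $\rep_2(d)$ and $\rep_2(d')$ contain exactly $1 + |\rep_2(f)|_1$ occurrences of the digit $1$, giving $d \in \T \iff d' \in \T$, that is $B_d = B_{d'}$; the range bounds $d, d' < 2^{pn}$ follow from the same inequalities.

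I expect the main subtlety to be the exclusion of the intermediate case (where the hypothesis $c < p$ is crucial), combined with the clean binary-representation observation that makes the Thue-Morse membership condition coincide for the two states; the remaining estimates are routine verifications of inequalities in the same spirit as the preceding propositions of this section.
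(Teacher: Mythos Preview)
Your proof is correct and follows essentially the same approach as the paper's: both hinge on the binary-representation observation that $d$ and $d'$ share the same number of $1$'s once $m\mid e$, and both must rule out the range where only one of $2^{pn-\alpha-1}$, $2^{pn-\alpha'-1}$ would be an integer. Your presentation is slightly cleaner in that you make the emptiness of the intermediate range $\alpha < pn \le \alpha'$ explicit upfront (as a consequence of $p\mid pn$ and $c<p$) and then do a single case split on $n\le\beta$ versus $n\ge\beta+1$, whereas the paper argues each inclusion separately and derives $pn\ge\beta p+c+1$ by contradiction inside each direction; the underlying idea is the same.
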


\begin{proof}
Let $v\in A_{2^p}^*$, $n=|v|$, $e=\val_{2^p} (v)$, $d=\frac{2^{pn}k2^{z-\beta p-1}+e}{m}$ and $d'=\frac{2^{pn}k2^{z-(\beta p+c)-1}+e}{m}$. We have to show that we have $d\in[\![0,2^{pn}{-}1]\!]$ and $B_d=T$ if and only if $d'\in[\![0,2^{pn}{-}1]\!]$ and $B_{d'}=T$. 

We have $d=2^{pn-\beta p{-}1}+\frac em$ and $d'=2^{pn-\beta p-c{-}1}+\frac em$. Since $z > (\beta+1)p$ and $p\ge c+1$, we have
\[
	\frac em<2^{pn-z}< 2^{pn-(\beta+1)p} \le 2^{pn-\beta p-c-1}.
\]
Thus, if both $d$ and $d'$ are integers and if $m$ divides $e$ then we obtain that
\[
	\rep_2(d)=10^{pn-\beta p{-}1-|\rep_2\left(\frac em\right)|}\rep_2\left(\frac em\right)
\]
and
\[
	\rep_2(d')=10^{pn-\beta p-c-1-|\rep_2\left(\frac em\right)|}\rep_2\left(\frac em\right).
\]
In this case, we have that $d\in\T\iff d'\in\T$, hence $B_d=B_{d'}$.

Now, suppose that $d\in[\![0,2^{pn}{-}1]\!]$ and $B_d=T$. Since $k\ge 1$ and $d =d'+2^{pn-\beta p{-}1}(1-2^{-c})$, we obtain that $0<d'<d<2^{pn}$. We claim that $pn\ge\beta p+c+1$. Proceed by contradiction and suppose that $pn<\beta p+c+1$. Then, since $c+1\le p$ and $\beta\le \big\lceil \frac zp \big\rceil -2$, we obtain that $pn\le\beta p<z-p$. Therefore, we have 
\[
	d=2^{pn-\beta p-1}+\frac em<\frac 12+\frac{2^{pn-z}}{k}<1
\] 
contradicting that $d$ is a positive integer. Thus $pn\ge\beta p+c+1$, and hence both $d'$ and $\frac em$ are integers. Therefore, we obtain that $d'\in[\![0,2^{pn}{-}1]\!]$ and $B_{d'}=T$.

Conversely, suppose that $d'\in[\![0,2^{pn}{-}1]\!]$ and $B_{d'}=T$. Using that $z\ge 1$, we obtain
\[
	0\le d = 2^{pn - \beta p -1} + \frac em < 2^{pn - \beta p-1} + 2^{pn-z} \le 2^{pn}.
\]
We claim that $pn\ge\beta p+c+1$. Proceed by contradiction and suppose that $pn<\beta p+c+1$. Since $c+1\le p$, we obtain that $n\le \beta$ and
\[
	d'=2^{pn - \beta p-c-1} + \frac em
	 < 2^{ -c - 1}+2^{pn-z} 
	 \le \frac 12+2^{\beta p-z}
	 < \frac 12+2^{-p}<1
\]
contradicting that $d'$ is a positive integer. Thus $d=d'-2^{pn-\beta p{-}1}(1-2^{-c})$ is an integer, and consequently, so is $\frac em$. Therefore, we obtain that $d\in[\![0,2^{pn}{-}1]\!]$ and $B_d=T$.
\end{proof}

\begin{corollary}\label{cor:gammabeta}
For all $\beta \in[\![0,\big\lceil \frac zp \big\rceil{-}2]\!]$, all states of the class $\Gamma_\beta$ are indistinguishable in $\Pi \left( \A_{m,2^p} \times \A_{\T,2^p} \right)$.
\end{corollary}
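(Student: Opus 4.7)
The plan is to derive the corollary by combining Corollary~\ref{cor:calpha} with Proposition~\ref{prop:zp-2}, using transitivity of the ``accepts the same language'' relation. By definition,
\[
	\Gamma_\beta=\bigcup_{c=0}^{p-1} C_{\beta p+c},
\]
so it suffices to show that every state of $\Gamma_\beta$ is indistinguishable from a single fixed ``representative'', for which I will take the base state $(k2^{z-\beta p-1},B)$ of the first pre-class $C_{\beta p}$.

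I would fix an arbitrary state $s$ of $\Gamma_\beta$ and locate it in some pre-class $C_{\beta p+c}$ with $c\in[\![0,p-1]\!]$. Corollary~\ref{cor:calpha} applied to $\alpha=\beta p+c$ already tells me that $s$ and the base state $(k2^{z-(\beta p+c)-1},B)$ of this pre-class accept the same language. If $c=0$, this base state \emph{is} the representative and there is nothing more to do. If $c\in[\![1,p-1]\!]$, I invoke Proposition~\ref{prop:zp-2}, whose hypothesis $\beta\in[\![0,\lceil z/p\rceil-2]\!]$ is exactly the one given in the corollary, to obtain
\[
	L_{(k2^{z-(\beta p+c)-1},B)}=L_{(k2^{z-\beta p-1},B)}.
\]
Chaining the two equalities yields $L_s=L_{(k2^{z-\beta p-1},B)}$, proving the corollary.

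There is really no obstacle here: all the technical content is carried by Corollary~\ref{cor:calpha} and Proposition~\ref{prop:zp-2}, and the present corollary is a bookkeeping consequence. The only things to double-check are that the decomposition of $\Gamma_\beta$ into the pre-classes $C_{\beta p+c}$ for $c\in[\![0,p-1]\!]$ is exhaustive and disjoint (which is exactly the definition), and that the range of $\beta$ assumed here is compatible with the hypotheses of Proposition~\ref{prop:zp-2}, which it is by construction.
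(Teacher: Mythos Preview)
Your proposal is correct and follows exactly the intended route: the paper states this corollary without proof, as an immediate consequence of Corollary~\ref{cor:calpha} (indistinguishability within each pre-class $C_{\beta p+c}$) combined with Proposition~\ref{prop:zp-2} (indistinguishability of the base states across the pre-classes composing $\Gamma_\beta$), and your argument spells out precisely this chain via a common representative.
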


\begin{proposition}
Suppose that $z\ge 1$ and let $\beta=\big\lceil \frac zp \big\rceil{-}1$ and $c \in [\![1,z{-}\beta p{-}1]\!]$. We have
\[
	L_{( k2^{z-\beta p-1},B)} 
	= L_{( k2^{z-(\beta p+c)-1},B)}
\]
in $\Pi \left( \A_{m,2^p} \times \A_{\T,2^p} \right)$.
\end{proposition}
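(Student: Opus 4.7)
The statement is the boundary-case analogue of Proposition~\ref{prop:zp-2}: that result covered $\beta \in [\![0,\lceil z/p \rceil-2]\!]$ with $c \in [\![1,p-1]\!]$, whereas here $\beta = \lceil z/p \rceil - 1$ and $c$ ranges over $[\![1, z - \beta p - 1]\!]$. My plan is to replicate the structure of the earlier proof verbatim. I would fix $v \in A_{2^p}^*$, put $n = |v|$ and $e = \val_{2^p}(v)$, set $d = 2^{pn-\beta p - 1} + e/m$ and $d' = 2^{pn-\beta p - c - 1} + e/m$, and prove that $d \in [\![0,2^{pn}-1]\!]$ with $B_d = T$ if and only if $d' \in [\![0,2^{pn}-1]\!]$ with $B_{d'} = T$.

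The easy step, which I would handle first, is the binary-shape comparison. Once we know that $pn \ge \beta p + c + 1$ and that $e/m$ is an integer, the bound $e/m < 2^{pn-z}/k \le 2^{pn - \beta p - c - 1}$, which holds because the range of $c$ forces $\beta p + c + 1 \le z$, makes both $\rep_2(d)$ and $\rep_2(d')$ consist of a leading $1$, some zero padding, and $\rep_2(e/m)$. They contain equally many $1$'s, so $d \in \T \iff d' \in \T$, i.e.\ $B_d = B_{d'}$.

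The core task in both directions is forcing $pn \ge \beta p + c + 1$. The saving observation that lets the method of Proposition~\ref{prop:zp-2} go through is that $\beta p < z$ still yields $\beta p \le z - 1$, and the range of $c$ forces $c + 1 \le z - \beta p \le p$, so $c < p$ is preserved. I would argue by contradiction: if $pn < \beta p + c + 1$, then since $pn$ is a multiple of $p$ and $c + 1 \le p$, we obtain $pn \le \beta p$. Hence $2^{pn - \beta p - 1} \le 1/2$ and $e/m < 2^{pn - z} \le 2^{\beta p - z} \le 1/2$, which forces $d < 1$ on the forward side and $d' < 2^{-c-1} + 1/2 < 1$ on the converse side; either contradicts an integer being positive. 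Once $pn \ge \beta p + c + 1$ is secured, $e/m$ is forced to be an integer, the bounds $0 < d' < d < 2^{pn - \beta p - 1} + 2^{pn-z} \le 2^{pn}$ are immediate from $\beta p \ge 0$ and $z \ge 1$, and the $B$-values agree by the binary-shape step. The main (and only) obstacle is that the comfortable slack $2^{\beta p - z} \le 2^{-p}$ of Proposition~\ref{prop:zp-2} is no longer available and one must make do with $2^{\beta p - z} \le 2^{-1}$; fortunately the shrunken range of $c$ compensates exactly, so the contradictions $d < 1$ and $d' < 1$ still go through.
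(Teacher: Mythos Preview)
Your proposal is correct and is precisely the adaptation the paper has in mind: the paper's own proof of this proposition is the single sentence ``The proof is a straightforward adaptation of that of Proposition~\ref{prop:zp-2}.'' You have carried out that adaptation explicitly, and your key observation---that the restricted range $c+1\le z-\beta p\le p$ still forces $pn\le\beta p\le z-1$ under the contradiction hypothesis, so that $2^{pn-z}\le\tfrac12$ suffices in place of the earlier $2^{\beta p-z}<2^{-p}$---is exactly the point where the argument needs adjusting.
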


\begin{proof}
The proof is a straightforward adaptation of that of Proposition~\ref{prop:zp-2}.
\end{proof}

\begin{corollary}\label{cor:gammapascomplet}
In $\Pi \left( \A_{m,2^p} \times \A_{\T,2^p} \right)$, all states of $\Gamma_{\lceil \frac zp\rceil {-}1}$ are indistinguishable.
\end{corollary}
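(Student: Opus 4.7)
The plan is to deduce this corollary from the preceding proposition in exactly the same way that Corollary~\ref{cor:gammabeta} was deduced from Proposition~\ref{prop:zp-2}; the only novelty is that $\Gamma_{\lceil z/p\rceil - 1}$ is the possibly-incomplete last block, which is precisely why the preceding proposition was separated out with $c$ restricted to $[\![1, z - \beta p - 1]\!]$.

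First I would unfold the definition. Setting $\beta = \lceil z/p\rceil - 1$, we have
\[
    \Gamma_\beta \;=\; \bigcup_{\alpha = \beta p}^{z-1} C_\alpha,
\]
and each pre-class $C_\alpha$ is itself a union of states among which $(k2^{z-\alpha-1}, B)$ is distinguished: it is the ``$\ell = 0$'' state in the description $C_\alpha = \{(k2^{z-\alpha-1} + k2^{z-\alpha}\ell, B_\ell) : \ell \in [\![0, 2^\alpha-1]\!]\}$.

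Next I would apply the two available results. By Corollary~\ref{cor:calpha}, every state inside a fixed $C_\alpha$ is indistinguishable from the representative $(k2^{z-\alpha-1}, B)$. By the preceding proposition, for every $c \in [\![1, z - \beta p - 1]\!]$ the representative of $C_{\beta p}$ is indistinguishable from the representative of $C_{\beta p + c}$. As $c$ runs over $[\![1, z-\beta p - 1]\!]$, the index $\beta p + c$ runs over $\{\beta p + 1, \ldots, z-1\}$, which is exactly the range of $\alpha$ appearing in the union defining $\Gamma_\beta$, apart from $\alpha = \beta p$ itself. Hence every representative of every $C_\alpha$ in $\Gamma_\beta$ is indistinguishable from $(k2^{z-\beta p - 1}, B)$.

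Finally I would conclude by transitivity of the indistinguishability relation on states of a DFA: combining the within-$C_\alpha$ equalities from Corollary~\ref{cor:calpha} with the between-$C_\alpha$ equalities just established, every state of $\Gamma_\beta$ accepts the same language as $(k2^{z-\beta p - 1}, B)$, and therefore all states of $\Gamma_{\lceil z/p\rceil - 1}$ are pairwise indistinguishable. There is no real obstacle here: all the analytic work was already absorbed into the preceding proposition, and the corollary is a purely organizational gluing step; the only thing to be mindful of is that the index range in the proposition was tailored exactly so that it covers every pre-class appearing in the truncated union.
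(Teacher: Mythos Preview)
Your proposal is correct and follows exactly the approach implicit in the paper: the corollary is deduced from the preceding proposition together with Corollary~\ref{cor:calpha}, in the same way that Corollary~\ref{cor:gammabeta} was deduced from Proposition~\ref{prop:zp-2}. The paper does not spell out the argument, but your gluing of the within-$C_\alpha$ identifications (Corollary~\ref{cor:calpha}) with the between-representatives identifications (the preceding proposition, as $c$ ranges over $[\![1,z-\beta p-1]\!]$) via transitivity is precisely the intended reasoning.
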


\subsection{States of different classes are distinguishable}
\label{sec:reduction2}

In this section, we show that, in the projected automaton $\Pi \left( \A_{m,2^p} \times \A_{\T,2^p} \right)$, states from different classes $[(j,X)]$ or $\Gamma_\beta$ are pairwise distinguishable, that is, for any two such states, there exists a word which is accepted from exactly one of them. 

First of all, note that the state $(0,T)$ is distinguished from all other states since it is the only final state: the empty word $\varepsilon$ is accepted from $(0,T)$ but not from any other state.

\begin{proposition}
Let $\beta \in [\![0,\big\lceil \frac zp \big\rceil {-}1]\!]$. In $\Pi \left( \A_{m,2^p} \times \A_{\T,2^p} \right)$, the word $0^{\beta +1}$ is accepted from all states of $\Gamma_{\beta}$.
\end{proposition}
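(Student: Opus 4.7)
The plan is to reduce to a single representative of $\Gamma_\beta$ and then verify a direct arithmetic condition coming from Lemma~\ref{lem:transitionsProd}. By Corollaries~\ref{cor:calpha}, \ref{cor:gammabeta} and~\ref{cor:gammapascomplet}, all states of $\Gamma_\beta$ are indistinguishable, so it is enough to show that $0^{\beta+1}$ is accepted from the single state $(k2^{z-\beta p-1},B)$, which sits in the pre-class $C_{\beta p}\subseteq \Gamma_\beta$.

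By the very definition of the projected automaton, $0^{\beta+1}$ is accepted from $(k2^{z-\beta p-1},B)$ if and only if there exists $u\in A_{2^p}^*$ with $|u|=\beta+1$ such that $(u,0^{\beta+1})$ is accepted from $(k2^{z-\beta p-1},B)$ in $\A_{m,2^p}\times\A_{\T,2^p}$. By Lemma~\ref{lem:transitionsProd}, this amounts to finding $d=\val_{2^p}(u)\in\llbracket 0,2^{p(\beta+1)}{-}1\rrbracket$ satisfying
\[
    2^{p(\beta+1)}\,k2^{z-\beta p-1}=md \quad\text{and}\quad B_d=T.
\]
Since $m=k2^z$, the first equation forces $d=2^{p-1}$. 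I then need only check the two side conditions: that $d=2^{p-1}<2^{p(\beta+1)}$ (immediate, since $p-1<p\le p(\beta+1)$), and that $d\notin\T$, which holds because $\rep_2(2^{p-1})=10^{p-1}$ contains exactly one $1$, so that $B_d=T$ as required. Taking $u=0^{\beta+1-|\rep_{2^p}(d)|}\rep_{2^p}(d)$ then exhibits the desired preimage.

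An equivalent and somewhat more illuminating way to present the same verification is to describe what one $0$-transition does to the representative: reading the letter $0$ from $(k2^{z-\alpha-1},B)$ in $\Pi(\A_{m,2^p}\times\A_{\T,2^p})$ multiplies the first coordinate by $2^p$ and then reduces mod $m$; when $\alpha\ge p$ this lands on $(k2^{z-(\alpha-p)-1},B)$, i.e.\ the representative of $C_{\alpha-p}$, and when $\alpha<p$ it lands on $(0,T)$ (using $B_{2^{p-\alpha-1}}=T$ since $2^{p-\alpha-1}$ is a power of $2$). Hence starting from the representative $k2^{z-\beta p-1}$ of $\Gamma_\beta$ (where $\alpha=\beta p$), the first $0$ moves to the representative of $\Gamma_{\beta-1}$ when $\beta\ge 1$, and to $(0,T)$ when $\beta=0$. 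A short induction on $\beta$ then shows that $0^{\beta+1}$ reaches $(0,T)$ from the representative.

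There is really no serious obstacle here: the only thing to be careful about is the index bookkeeping to ensure that $\alpha=\beta p$ is a legitimate index for a pre-class of $\Gamma_\beta$ in both the generic case $\beta\le\lceil z/p\rceil-2$ and the truncated case $\beta=\lceil z/p\rceil-1$ (where $C_\alpha$ is required with $\alpha\le z-1$), and that $d=2^{p-1}$ is genuinely achievable as the $2^p$-value of a length-$(\beta+1)$ word. Both are immediate from $\beta p\ge 0$, $\beta p\le z-1$ (which is where $\beta\le\lceil z/p\rceil-1$ is used) and $p-1<p(\beta+1)$.
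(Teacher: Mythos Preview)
Your proof is correct and follows essentially the same approach as the paper: reduce to the representative $(k2^{z-\beta p-1},B)$ via the indistinguishability corollaries, then compute $d=2^{p-1}$ and check it lies in $[\![0,2^{p(\beta+1)}-1]\!]\setminus\T$. Your additional inductive description of the $0$-transitions is a nice supplement but not needed for the argument.
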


\begin{proof}
From Corollaries~\ref{cor:gammabeta} and~\ref{cor:gammapascomplet}, it suffices to show that $0^{\beta +1}$ is accepted from the state $( k2^{z-\beta p-1},B)$. Let 
\[
	d=\frac{2^{p(\beta +1)}k2^{z-\beta p-1}}{m}.
\]
We have to show that $d\in\{0,\ldots,2^{p(\beta+1)}\}\setminus\T$. It is immediate since $d=2^{p -1}$. 
\end{proof}

\begin{proposition}
Let $\beta,\gamma \in[\![0,\big\lceil \frac zp \big\rceil {-}1]\!]$ such that $\gamma > \beta$. In $\Pi \left( \A_{m,2^p} \times \A_{\T,2^p} \right)$, the word $0^{\beta +1}$ is not accepted from any state of $\Gamma_{\gamma}$.
\end{proposition}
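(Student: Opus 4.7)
The plan is to use the indistinguishability results already established for $\Gamma_\gamma$ and reduce to a one-line arithmetic check on a single convenient representative. By Corollaries~\ref{cor:gammabeta} and~\ref{cor:gammapascomplet}, all states of $\Gamma_\gamma$ accept the same language, so it suffices to prove the statement for one of them. I would pick $(k2^{z-\gamma p-1}, B)$; this state lies in $C_{\gamma p} \subseteq \Gamma_\gamma$ since the inequality $\gamma p \le z-1$ holds for every $\gamma \in [\![0, \lceil z/p \rceil {-} 1]\!]$, as is immediate from the definition of the ceiling.

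Next I would apply the general criterion recalled at the beginning of Section~\ref{sec:reduction1}. Setting $v = 0^{\beta+1}$, $n = \beta+1$ and $e = \val_{2^p}(v) = 0$, the word $v$ belongs to $L_{(k2^{z-\gamma p-1}, B)}$ if and only if the rational number
\[
	d \;=\; \frac{2^{p(\beta+1)}\cdot k\, 2^{z-\gamma p-1}}{m} \;=\; 2^{\,p(\beta+1-\gamma)-1}
\]
is a nonnegative integer at most $2^{p(\beta+1)}{-}1$ satisfying $B_d = T$. The hypothesis $\gamma > \beta$ forces $\gamma \ge \beta+1$, whence the exponent $p(\beta+1-\gamma)-1$ is at most $-1$ and $d$ is not an integer. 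Therefore no word $u \in A_{2^p}^{\beta+1}$ can satisfy $\val_{2^p}(u) = d$, and $v$ is rejected from $(k2^{z-\gamma p-1}, B)$.

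I do not anticipate any real obstacle: the argument is essentially the complementary computation to the preceding proposition, where the analogous exponent $p(\beta+1-\beta)-1 = p-1$ was nonnegative and produced the accepting value $d = 2^{p-1} \notin \T$. Here the strict inequality $\gamma > \beta$ turns that same exponent negative, killing integrality and hence acceptance.
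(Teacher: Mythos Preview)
Your proof is correct and follows essentially the same route as the paper: reduce to the representative $(k2^{z-\gamma p-1},B)$ via Corollaries~\ref{cor:gammabeta} and~\ref{cor:gammapascomplet}, compute $d=2^{p(\beta-\gamma+1)-1}$, and observe that $\gamma>\beta$ forces the exponent to be negative so $d$ is not an integer. Your added justification that $\gamma p\le z-1$ (hence $C_{\gamma p}\subseteq\Gamma_\gamma$) and your closing comparison with the preceding proposition are nice touches the paper omits.
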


\begin{proof}
From Corollaries~\ref{cor:gammabeta} and~\ref{cor:gammapascomplet}, it suffices to show that $0^{\beta +1}$ is not accepted from the state $( k2^{z - \gamma p {-}1},B)$. Suppose  to the contrary that $0^{\beta +1}$ is accepted from $( k2^{z - \gamma p {-}1},B)$. Then
\[
	\frac{2^{p(\beta +1)}k2^{z - \gamma p -1}}{m} = 2^{p(\beta -\gamma +1)-1}
\]
must be an integer, and hence $p(\beta -\gamma +1)\ge 1$, contradicting that $\gamma > \beta$. The conclusion follows.
\end{proof}

\begin{proposition}
Let $(j,X) \in([\![1,k{-}1]\!]\times \{T,B\})\cup\{(0,B)\}$ and $\beta \in[\![0,\big\lceil \frac zp\big\rceil {-}1]\!]$. In $\Pi \left( \A_{m,2^p} \times \A_{\T,2^p} \right)$, the word $0^{\beta +1}$ is not accepted from any state of $[(j,X)]$.
\end{proposition}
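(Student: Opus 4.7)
The plan is to invoke Corollary~\ref{cor:0B}, which tells us that all states in $[(j,X)]$ are indistinguishable, so it suffices to prove that $0^{\beta+1}\notin L_{(j,X)}$ at the single representative $(j,X)$. To test this, I apply the general procedure outlined at the beginning of Section~\ref{sec:reduction1}: setting $v=0^{\beta+1}$, $n=\beta+1$ and $e=\val_{2^p}(v)=0$, acceptance from $(j,X)$ in the projected automaton is equivalent to the existence of $u\in A_{2^p}^n$ with $(u,v)$ accepted from $(j,X)$ in $\A_{m,2^p}\times\A_{\T,2^p}$. By Lemma~\ref{lem:transitionsProd}, writing $d=\val_{2^p}(u)$, this is equivalent to
\[
    2^{pn}j=md
    \quad\andrm\quad
    X_d=T.
\]

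For the subcase $(j,X)=(0,B)$, the equation $md=0$ forces $d=0$. But $0\in\T$, so $B_0=B\ne T$, and the second condition fails; hence $0^{\beta+1}$ is not accepted from $(0,B)$. For the subcase $j\in[\![1,k{-}1]\!]$, $X\in\{T,B\}$, the required value $d=2^{pn}j/m=2^{pn}j/(k2^z)$ would have to be a nonnegative integer. Since $k$ is odd and therefore coprime to $2^{pn}$, divisibility of $2^{pn}j$ by $k$ forces $k\mid j$; but $1\le j\le k{-}1$ makes this impossible. Hence no valid $u$ exists and again $0^{\beta+1}$ is not accepted.

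There is no real obstacle here: once the general acceptance condition from Section~\ref{sec:reduction1} is written down, both subcases collapse to a one-line divisibility check, and the hypothesis that $k$ is odd (inherent in the factorization $m=k2^z$) is exactly what rules out $k\mid j$ for $j\in[\![1,k{-}1]\!]$.
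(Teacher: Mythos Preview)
Your proof is correct and follows essentially the same approach as the paper: reduce to the representative $(j,X)$ via Corollary~\ref{cor:0B}, then use the acceptance criterion from Lemma~\ref{lem:transitionsProd} to arrive at the same divisibility contradiction (oddness of $k$ forces $k\mid j$, impossible for $1\le j\le k-1$) and the same parity check in the $(0,B)$ case. The only cosmetic difference is that the paper first observes the $0$-loop on $(0,T)$ to reduce to the single length $\lceil z/p\rceil$, whereas you treat each length $\beta+1$ directly; the core argument is identical.
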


\begin{proof}
Since there is a loop labeled by $0$ on the state $(0,T)$ and in view of Corollary~\ref{cor:0B}, it suffices to show that the word $0^{\lceil z/p\rceil}$ is not accepted from the state $(j,X)$. If $0^{\lceil z/p\rceil}$ were accepted from the state $(j,X)$, then we would get that
\[
	d=\frac{2^{p\left\lceil\frac zp\right\rceil}j}{m}=\frac{2^{p\left\lceil\frac zp\right\rceil-z}j}{k}
\]
is an integer and that $X_d=T$. If $j\ne 0$, then $d$ cannot be an integer since $k$ is odd and $0<j<k$. If $j=0$, then we get that $d$ must belong to $\T$, which is not possible either since in this case we have $d=0$. Hence the conclusion.
\end{proof}

\begin{proposition}
Suppose that $k>1$ and let $(j,X),(j',X')\in([\![1,k{-}1]\!]\times \{T,B\})\cup\{(0,B)\}$ be distinct. In $\Pi \left(\A_{m,2^p} \times \A_{\T,2^p}  \right)$, the states $(j,X)$ and $(j',X')$ are distinguishable.
\end{proposition}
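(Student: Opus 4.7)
The argument splits naturally into two cases depending on whether $j=j'$ or $j\ne j'$.

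For $j\ne j'$, I plan to use one of the two words $w_j$ and $w_j\rep_{2^p}(m)$ as the distinguishing word. Set $n=|w_j|$ and $L=|\rep_{2^p}(m)|$. Combining Lemma~\ref{lem:transitionsProd} and Remark~\ref{rem:unicitepremcomp} with Proposition~\ref{prop:jj'}, reading $w_j$ from $(j,X)$ in $\Pi(\A_{m,2^p}\times\A_{\T,2^p})$ reaches $(0,X_{d_j})$, where $d_j=(2^{pn}j+\sigma(j)2^z)/m$ is the value of the uniquely determined first-component word; by Proposition~\ref{prop:jj'-bis}, reading $w_j\rep_{2^p}(m)$ reaches $(0,X_{D_j})$ with $D_j=2^{pL}d_j+1$. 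The key observation is the parity identity $|\rep_2(D_j)|_1=|\rep_2(d_j)|_1+1$, which holds because $2^{pL}d_j$ ends in at least $pL\ge 1$ zero bits, so the $+1$ simply creates a new $1$ at position $0$ without triggering a carry. Hence $X_{D_j}=\overline{X_{d_j}}$ and $\{X_{d_j},X_{D_j}\}=\{T,B\}$, so exactly one of $w_j,\,w_j\rep_{2^p}(m)$ is accepted from $(j,X)$; let $w$ denote that word.

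On the other hand, the projected automaton is complete by Proposition~\ref{prop:m-odd}, and its first component evolves exactly as in $\Pi(\A_{m,2^p})$ (both use the transition rule $i\xrightarrow{e}j$ iff there exists $d$ with $2^pi+e=md+j$). So reading $w$ from $(j',X')$ lands at a state whose first component coincides with the state reached from $j'$ in $\Pi(\A_{m,2^p})$. By Propositions~\ref{prop:jj'} and~\ref{prop:jj'-bis}, this first component is nonzero whenever $j'\ne j$, so the end state is not $(0,T)$ and $w$ is not accepted from $(j',X')$. Hence $w$ distinguishes $(j,X)$ from $(j',X')$.

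In the remaining case $j=j'$, distinctness forces $j\ge 1$ and $\{X,X'\}=\{T,B\}$. Here I would invoke Lemma~\ref{lem:projdisj}, which yields $L_{(j,T)}\cap L_{(j,B)}=\emptyset$ in the projected automaton; combined with the coaccessibility of $\Pi(\A_{m,2^p}\times\A_{\T,2^p})$ (Proposition~\ref{prop:m-odd}), both $L_{(j,T)}$ and $L_{(j,B)}$ are nonempty and therefore distinct, so the two states are distinguishable.

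The main obstacle I foresee is the parity bookkeeping in Case~1: carefully confirming that the first-component value after reading $w_j\rep_{2^p}(m)$ is exactly $2^{pL}d_j+1$ (and not something corrupted by carries along the path) and that this value has binary digit sum one greater than $d_j$. Once these arithmetic facts are in place, both cases reduce to direct applications of the previously established propositions and lemmas.
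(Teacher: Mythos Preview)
Your proposal is correct and follows essentially the same two–case split and the same distinguishing words $w_j$ and $w_j\rep_{2^p}(m)$ as the paper. The only cosmetic difference is that where you compute the parity flip explicitly via $D_j=2^{pL}d_j+1$ (and your worry here is unfounded: the unique first-component word accompanying $w_j\rep_{2^p}(m)$ is exactly $u\,0^{L-1}1$ with $\val_{2^p}(u)=d_j$, giving $D_j=2^{pL}d_j+1$ on the nose), the paper instead phrases the same step as ``reading $(u,w_j)$ from $(j,X)$ lands in $(0,B)$, and then Corollary~\ref{cor:0Baccessible} finishes''; these are the same computation.
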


\begin{proof}
First, suppose that $j=j'$. Then $X\ne X'$ by hypothesis and the states $(j,X)$ and $(j,X')$ are disjoint by Lemma~\ref{lem:projdisj}. Since $\Pi \left(\A_{m,2^p} \times \A_{\T,2^p}  \right)$ is coaccessible by Proposition~\ref{prop:m-odd}, we obtain that the states $(j,X)$ and $(j,X')$ are distinguishable.

Now suppose that $j\ne j'$. By Proposition~\ref{prop:jj'}, the word $w_j$ is accepted from $j$ in the automaton $\Pi(\A_{m,2^p})$ but is not accepted from $j'$. Then, there exists a word $u$ of length $|w_j|$ such that $(u,w_j)$ is accepted from $j$ in the automaton $\A_{m,2^p}$ but is not accepted from $j'$. Then, this word $(u,w_j)$ is accepted either from $(j,T)$ or from $(j,B)$ in the automaton $\A_{m,2^p}\times \A_{\T,2^p}$ but is not accepted neither from $(j',T)$ nor from $(j',B)$. Now, two cases are possible. 

First, suppose that $(u,w_j)$ is accepted from $(j,X)$ in $\A_{m,2^p}\times \A_{\T,2^p}$. Then, in the projection $\Pi \left(\A_{m,2^p} \times \A_{\T,2^p}  \right)$, the word $w_j$ is accepted from $(j,X)$ but not from $(j',X')$. Thus, the word $w_j$ distinguishes the states $(j,X)$ and $(j',X')$. 

Second, suppose that $(u,w_j)$ is accepted from $(j,\overline{X})$ in $\A_{m,2^p}\times \A_{\T,2^p}$. Then there is a path labeled by $(u,w_j)$ from $(j,X)$ to $(0,B)$ in $\A_{m,2^p}\times \A_{\T,2^p}$. By Corollary~\ref{cor:0Baccessible}, in $\A_{m,2^p}\times \A_{\T,2^p}$, the word $\rep_{2^p}(1,m)$ is accepted from $(0,B)$, and hence the word $(u,w_j)\rep_{2^p}(1,m)=(u0^{|\rep_{2^p}(m)|-1}1,w_j\rep_{2^p}(m))$ is accepted from $(j,X)$. Therefore the word $w_j\rep_{2^p}(m)$ is accepted from the state $(j,X)$ in $\Pi \left(\A_{m,2^p} \times \A_{\T,2^p}  \right)$. Besides, the word $w_j\rep_{2^p}(m)$ cannot be accepted from $(j',X')$ in $\Pi \left(\A_{m,2^p} \times \A_{\T,2^p}  \right)$ for otherwise it would also be accepted from $j'$ in $\Pi \left(\A_{m,2^p} \right)$, which is impossible by Proposition~\ref{prop:jj'-bis}. Thus, the word $w_j\rep_{2^p}(m)$ distinguishes the states $(j,X)$ and $(j',X')$. 
\end{proof}

\begin{corollary}
In the automaton $\Pi \left( \A_{m,2^p} \times \A_{\T,2^p} \right)$, two states belonging to different classes are distinguished.
\end{corollary}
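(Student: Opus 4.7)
The plan is to bundle the four propositions preceding this corollary into a single case analysis over all pairs of distinct classes, observing that in each case an explicit distinguishing word has already been produced.

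First, I would enumerate the classes: they are $[(0,T)]$, the classes $[(j,X)]$ for $(j,X)\in\bigl([\![1,k{-}1]\!]\times\{T,B\}\bigr)\cup\{(0,B)\}$, and (when $z\ge 1$) the classes $\Gamma_\beta$ for $\beta\in[\![0,\lceil z/p\rceil-1]\!]$. I then pick two distinct classes $\mathcal{C}$ and $\mathcal{C}'$ and produce a word $v\in A_{2^p}^*$ that is accepted from the states of $\mathcal{C}$ but not from the states of $\mathcal{C}'$ (or vice versa).

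The case split goes as follows. If one of the two classes is $[(0,T)]$, then $\varepsilon$ distinguishes them since $(0,T)$ is the only final state of $\Pi(\A_{m,2^p}\times\A_{\T,2^p})$. If both classes are of type $\Gamma_\beta$, say $\mathcal{C}=\Gamma_\beta$ and $\mathcal{C}'=\Gamma_\gamma$ with $\beta<\gamma$, then the second and third propositions of this subsection show that $0^{\beta+1}$ is accepted from every state of $\Gamma_\beta$ but from no state of $\Gamma_\gamma$. If $\mathcal{C}=\Gamma_\beta$ and $\mathcal{C}'=[(j,X)]$ with $(j,X)\in\bigl([\![1,k{-}1]\!]\times\{T,B\}\bigr)\cup\{(0,B)\}$, then combining the first and the fourth propositions of this subsection shows that $0^{\beta+1}$ is again a distinguishing word, accepted from $\mathcal{C}$ but not from $\mathcal{C}'$. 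Finally, if both classes are of the form $[(j,X)]$ and $[(j',X')]$ with $(j,X)\ne(j',X')$ in $\bigl([\![1,k{-}1]\!]\times\{T,B\}\bigr)\cup\{(0,B)\}$, then the last proposition before the corollary provides the distinguishing word (either $w_j$ or $w_j\rep_{2^p}(m)$).

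No serious obstacle arises here: all the work has been done in the four preceding propositions, and the corollary is simply the statement that these cover every possible pair of distinct classes. The only thing to double-check is exhaustiveness of the case analysis, which is immediate from the observation made just before this subsection that the classes $[(0,T)]$, $[(j,X)]$ and $\Gamma_\beta$ form a partition of the state set of $\Pi(\A_{m,2^p}\times\A_{\T,2^p})$.
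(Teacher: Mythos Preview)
Your approach is correct and is exactly the intended one: the paper states this corollary without proof, as it is simply the synthesis of the four propositions immediately preceding it together with the earlier remark that $(0,T)$ is distinguished from every other state by $\varepsilon$. Your case analysis is exhaustive and each case invokes the right result. One minor bookkeeping slip: your internal references to ``second and third'' and ``first and fourth'' propositions are off by one relative to the actual order in the subsection (the relevant pairings are the first with the second for two $\Gamma$-classes, and the first with the third for a $\Gamma$-class against a $[(j,X)]$-class), but the mathematical content you describe is correct in each case.
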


\subsection{The minimal automaton of $\val_{2^p}^{-1}(m \T)$.}
We are ready to construct the minimal automaton of $\val_{2^p}^{-1}(m \T)$. Since the states of $\Pi \left( \A_{m,2^p} \times \A_{\T,2^p} \right)$ that belong to the same class $[(j,X)]$ or $\Gamma_\beta$ are indistinguishable, they can be glued together in order to define a new automaton $\mathcal{M}_{m,\T,2^p}$ that still accepts the same language. Formally, the alphabet of  $\mathcal{M}_{m,\T,2^p}$ is $A_{2^p}$. Its states are the classes $[(j,X)]$ for $(j,X)\in[\![0,k{-}1]\!]\times \{T,B\}$ and the classes $\Gamma_\beta$ for $\beta\in[\![0,\lceil \frac zp\rceil{-}1]\!]$. The class $[(0,T)]$ is the initial state and the only final state. The transitions of $\mathcal{M}_{m,\T,2^p}$ are defined as follows: there is a transition labeled by a letter $a$ in $A_{2^p}$ from a class $J_1$ to a class $J_2$ if and only if there exists $j_1 \in J_1$ and $j_2 \in J_2$ such that, in the automaton $\Pi \left( \A_{m,2^p}\times \A_{\T,2^p} \right)$, there is a transition labeled by $a$ from the state $j_1$ to the state $j_2$. 

\begin{example}
In Figure~\ref{fig:classe-6T}, the classes of $\Pi \left( \A_{6,4}\times \A_{\T,4} \right)$ are colored in white, blue, grey, yellow, fushia, orange and purple.
\begin{figure}[htb]
\centering
\begin{tikzpicture}[scale=0.8]
\tikzstyle{every node}=[shape=circle, fill=none, draw=black,
minimum size=30pt, inner sep=2pt]
\node(0T) at (0,0) {$0T$};
\node[fill=cyan](1T) at (3,0) {$1T$};
\node[fill=gray](2T) at (6,0) {$2T$};
\node[fill=yellow](3T) at (9,0) {$3T$};
\node[fill=magenta](4T) at (12,0) {$4T$};
\node[fill=orange](5T) at (15,0) {$5T$};
\node[fill=yellow](0B) at (0,-4.5) {$0B$};
\node[fill=magenta](1B) at (3,-4.5) {$1B$};
\node[fill=orange](2B) at (6,-4.5) {$2B$};
\node[fill=violet](3B) at (9,-4.5) {$3B$};
\node[fill=cyan](4B) at (12,-4.5) {$4B$};
\node[fill=gray](5B) at (15,-4.5) {$5B$};
\tikzstyle{every node}=[shape=circle, fill=none, draw=black,minimum size=25pt, inner sep=2pt]
\node at (0,0) {};

\tikzstyle{etiquettedebut}=[very near start,rectangle,fill=black!20]
\tikzstyle{etiquettemilieu}=[midway,rectangle,fill=black!20]
\tikzstyle{every path}=[color=black, line width=0.5 pt]
\tikzstyle{every node}=[shape=circle, minimum size=5pt, inner sep=2pt]
%Produit
\draw [->] (-1.5,0) to node {} (0T); % fleche de l'etat initial
%boucles
\draw [->] (0T) to [loop above] node [] {$0$} (0T);
\draw [green,->] (5T) to [loop above] node [] {} (5T);
\draw [->] (0B) to [loop below] node [] {} (0B);
\draw [green,->] (5B) to [loop above] node [] {} (5B);
%autres
\draw [blue,->] (0T) to [] node [above=-0.1] {$1$} (1T);
\draw [red,->] (0T) to [bend left=20] node [above=-0.1] {$2$} (2T);
\draw [green,->] (0T) to [bend left=25] node [above] {$3$} (3T);
\draw [->] (1T) to [bend left=20] node [] {} (4T);
\draw [blue,->] (1T) to [bend left=30] node [] {} (5T);
\draw [red,->] (1T) to [] node [] {} (0B);
\draw [green,->] (1T) to [bend left=15] node [] {} (1B);
\draw [->] (2T) to [bend left=15] node [] {} (2B);
\draw [blue,->] (2T) to [bend left=5] node [] {} (3B);
\draw [red,->] (2T) to [] node [] {} (4B);
\draw [green,->] (2T) to [] node [] {} (5B);
\draw [->] (3T) to [] node [] {} (0B);
\draw [blue,->] (3T) to [] node [] {} (1B);
\draw [red,->] (3T) to [bend left=5] node [] {} (2B);
\draw [green,->] (3T) to [bend left=15] node [] {} (3B);
\draw [->] (4T) to [bend left=15] node [] {} (4B);
\draw [blue,->] (4T) to [] node [] {} (5B);
\draw [red,->] (4T) to [bend right=35] node [] {} (0T);
\draw [green,->] (4T) to [bend right=25] node [] {} (1T);
\draw [->] (5T) to [bend right=25] node [] {} (2T);
\draw [blue,->] (5T) to [bend right=20] node [] {} (3T);
\draw [red,->] (5T) to [] node [] {} (4T);

\draw [blue,->] (0B) to [] node [] {} (1B);
\draw [red,->] (0B) to [bend right=20] node [] {} (2B);
\draw [green,->] (0B) to [bend right=25] node [] {} (3B);
\draw [->] (1B) to [bend right=20] node [] {} (4B);
\draw [blue,->] (1B) to [bend right=30] node [] {} (5B);
\draw [red,->] (1B) to [] node [] {} (0T);
\draw [green,->] (1B) to [bend left=15] node [] {} (1T);
\draw [->] (2B) to [bend left=15] node [] {} (2T);
\draw [blue,->] (2B) to [bend left=5] node [] {} (3T);
\draw [red,->] (2B) to [] node [] {} (4T);
\draw [green,->] (2B) to [] node [] {} (5T);
\draw [->] (3B) to [] node [] {} (0T);
\draw [blue,->] (3B) to [] node [] {} (1T);
\draw [red,->] (3B) to [bend left=5] node [] {} (2T);
\draw [green,->] (3B) to [bend left=15] node [] {} (3T);
\draw [->] (4B) to [bend left=15] node [] {} (4T);
\draw [blue,->] (4B) to [] node [] {} (5T);
\draw [red,->] (4B) to [bend left=35] node [] {} (0B);
\draw [green,->] (4B) to [bend left=25] node [] {} (1B);
\draw [->] (5B) to [bend left=25] node [] {} (2B);
\draw [blue,->] (5B) to [bend left=20] node [] {} (3B);
\draw [red,->] (5B) to [] node [] {} (4B);
\end{tikzpicture}
\caption{The classes of the automaton of $\Pi \left( \A_{6,4}\times \A_{\T,4} \right)$.}
\label{fig:classe-6T}
\end{figure}
Figure~\ref{fig:min-aut-6T} depicts the minimal automaton $\mathcal{M}_{6,\T,4}$ of $\val_4^{-1}(6\T)$, where states corresponding to the same color are glued together to form a single state. 
\begin{figure}[h!]
\centering
\begin{tikzpicture}
[scale=0.8]
\tikzstyle{every node}=[shape=circle, fill=none, draw=black,
minimum size=30pt, inner sep=2pt]
\node(blanc) at (0,0) {};
\node[fill=cyan](bleu) at (4,0) {};
\node[fill=gray](gris) at (8,0) {};
\node[fill=yellow](jaune) at (12,0) {};
\node[fill=magenta](rose) at (2,-4.5) {};
\node[fill=orange](orange) at (6,-4.5) {};
\node[fill=violet](mauve) at (10,-4.5) {};
\tikzstyle{every node}=[shape=circle, fill=none, draw=black,
minimum size=25pt, inner sep=2pt]
\node at (0,0) {};

\tikzstyle{etiquettedebut}=[very near start,rectangle,fill=black!20]
\tikzstyle{etiquettemilieu}=[midway,rectangle,fill=black!20]
\tikzstyle{every path}=[color=black, line width=0.5 pt]
\tikzstyle{every node}=[shape=circle, minimum size=5pt, inner sep=2pt]

%Etat initial
\draw [->] (-1.5,0) to node {} (blanc);
%boucles
\draw [black,->] (blanc) to [loop above] node [] {$0$} (blanc);
\draw [blue,->] (blanc) to [] node [above] {$1$} (bleu);
\draw [red,->] (blanc) to [bend left=35] node [below] {$2$} (gris);
\draw [green,->] (blanc) to [bend left=40] node [below] {$3$} (jaune);
\draw [black,->] (bleu) to [bend left=15] node [] {} (rose);
\draw [blue,->] (bleu) to [] node [] {} (orange);
\draw [red,->] (bleu) to [bend left=35] node [] {} (jaune);
\draw [green,->] (bleu) to [bend left=5] node [] {} (rose);
\draw [black,->] (gris) to [bend left=10] node [] {} (orange);
\draw [blue,->] (gris) to [bend left=10] node [] {} (mauve);
\draw [red,->] (gris) to [] node [] {} (bleu);
\draw [green,->] (gris) to [loop above] node [] {} (gris);
\draw [black,->] (jaune) to [loop above] node [] {} (jaune);
\draw [blue,->] (jaune) to [] node [] {} (rose);
\draw [red,->] (jaune) to [bend left=5] node [] {} (orange);
\draw [green,->] (jaune) to [bend left=10] node [] {} (mauve);
\draw [black,->] (rose) to [bend left=15] node [] {} (bleu);
\draw [blue,->] (rose) to [] node [] {} (gris);
\draw [red,->] (rose) to [] node [] {} (blanc);
\draw [green,->] (rose) to [bend left=5] node [] {} (bleu);
\draw [black,->] (orange) to [bend left=10] node [] {} (gris);
\draw [blue,->] (orange) to [bend left=5] node [] {} (jaune);
\draw [red,->] (orange) to [] node [] {} (rose);
\draw [green,->] (orange) to [loop below] node [] {} (orange);
\draw [black,->] (mauve) to [] node [] {} (blanc);
\draw [blue,->] (mauve) to [] node [] {} (bleu);
\draw [red,->] (mauve) to [bend left=10] node [] {} (gris);
\draw [green,->] (mauve) to [bend left=10] node [] {} (jaune);
\end{tikzpicture}
\caption{The minimal automaton $\mathcal{M}_{6,\T,4}$ of $\val^{-1}_4(6\T)$.}
\label{fig:min-aut-6T}
\end{figure}
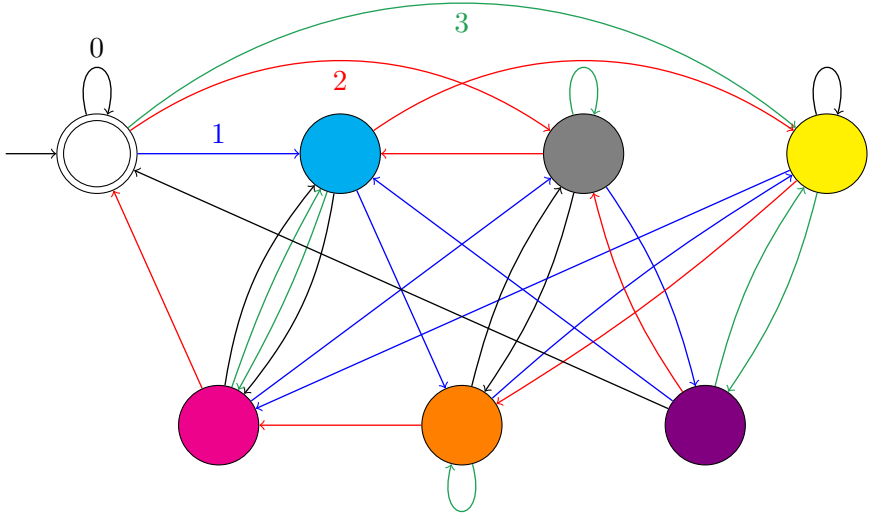
\end{example}

\begin{theorem}
\label{thm:main2}
Let $p$ and $m$ be positive integers. The automaton $\mathcal{M}_{m,\T,2^p}$ is the minimal automaton of the language $\val_{2^p}^{-1}(m \T)$. 
\end{theorem}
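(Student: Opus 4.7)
The plan is to view $\mathcal{M}_{m,\T,2^p}$ as the quotient of $\Pi(\A_{m,2^p}\times\A_{\T,2^p})$ by the equivalence relation $\sim$ in which two states are equivalent precisely when they lie in the same class (either $[(j,X)]$ or $\Gamma_\beta$). The idea is then to appeal to the classical fact that such a quotient by a language-preserving congruence yields the minimal automaton of the accepted language, provided the quotient is complete, accessible, and has the property that all its states accept distinct languages.

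First I would check that the construction really produces a well-defined DFA accepting $\val_{2^p}^{-1}(m\T)$. By Corollaries~\ref{cor:0B}, \ref{cor:calpha}, \ref{cor:gammabeta} and~\ref{cor:gammapascomplet}, any two $\sim$-equivalent states of $\Pi(\A_{m,2^p}\times\A_{\T,2^p})$ accept the same language; consequently, if $q\sim q'$ then the successors $\delta(q,a)$ and $\delta(q',a)$ also accept the same language for every $a\in A_{2^p}$, since equal left quotients of $L$ give equal left quotients of $a^{-1}L$. In particular $\delta(q,a)$ and $\delta(q',a)$ lie in the same class, so the transition from a class $J_1$ on a letter $a$ leads to a unique class $J_2$, independently of the chosen representative. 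This shows that $\mathcal{M}_{m,\T,2^p}$ is a genuine (deterministic, complete) automaton. A straightforward induction on word length then shows that a word $w$ is accepted from a class $[q]$ in $\mathcal{M}_{m,\T,2^p}$ if and only if $w$ is accepted from $q$ in $\Pi(\A_{m,2^p}\times\A_{\T,2^p})$, so the recognized language is indeed $\val_{2^p}^{-1}(m\T)$.

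Finally I would verify completeness, accessibility and the reduced property in order to conclude minimality. Completeness of $\mathcal{M}_{m,\T,2^p}$ and accessibility are inherited from Proposition~\ref{prop:m-odd} by projection onto $\sim$-classes, since any accepting run from the initial state in $\Pi(\A_{m,2^p}\times\A_{\T,2^p})$ projects to a corresponding run through the classes in $\mathcal{M}_{m,\T,2^p}$. The reduced property is exactly what the last corollary of Section~\ref{sec:reduction2} provides: distinct classes are distinguished by some word over $A_{2^p}$. Hence the states of $\mathcal{M}_{m,\T,2^p}$ are pairwise distinguishable, which together with accessibility and completeness gives minimality. No real obstacle remains at this step, since essentially all the technical content has already been carried out in Sections~\ref{sec:reduction1} and~\ref{sec:reduction2}; the present theorem is their formal wrap-up, the only subtlety being to note that same-language equivalence is automatically preserved under transitions and hence yields a bona fide quotient DFA.
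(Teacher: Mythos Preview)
Your proposal is correct and follows essentially the same approach as the paper: both derive minimality of $\mathcal{M}_{m,\T,2^p}$ from completeness and accessibility (inherited via Proposition~\ref{prop:m-odd}) together with the reduced property (from Sections~\ref{sec:reduction1} and~\ref{sec:reduction2}). You are in fact slightly more careful than the paper, since you explicitly argue that the quotient is a well-defined deterministic automaton (using that same-language states must lie in the same class), a point the paper leaves implicit under ``by construction.''
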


\begin{proof}
By construction, the language accepted by $\mathcal{M}_{m,\T,2^p}$ is $\val_{2^p}^{-1}(m \T)$. In order to see that $\mathcal{M}_{m,\T,2^p}$ is minimal, it suffices to prove that it is complete, reduced and accessible. The fact that $\mathcal{M}_{m,\T,2^p}$ is reduced follows from the results of Sections~\ref{sec:reduction1} and~\ref{sec:reduction2}. We know from Proposition~\ref{prop:m-odd} that the automaton $\Pi \left( \A_{m,2^p} \times \A_{\T,2^p} \right)$ is complete and accessible, which in turn implies that $\mathcal{M}_{m,\T,2^p}$ is complete and accessible as well. 
\end{proof}

Note that Proposition~\ref{prop:m-odd} and Theorem~\ref{thm:main2} are consistent in the case where $m$ is odd, i.e.\ where $z=0$.

We are now ready to prove Theorem~\ref{thm:main}.

\begin{proof}[Proof of Theorem~\ref{thm:main}]
In view of Theorem~\ref{thm:main2}, it suffices to count the number of states of $\mathcal{M}_{m,\T,2^p}$. By definition, it has $2(k{-}1)+2=2k$ states of the form $[(j,X)]$ and $\lceil \frac{z}{p}\rceil$ states of the form $\Gamma_\beta$.
\end{proof}

\begin{example}
The minimal automaton of the language $\rep_4 (6 \T)$  has $7$ states; see Figure~\ref{fig:min-aut-6T}. We can indeed compute that $2\cdot 3+\lceil \frac{1}{2} \rceil = 7$.
\end{example}

\bibliographystyle{abbrv}
\bibliography{TMmultiples}
\label{sec:biblio}

\end{document}